\newcommand\updatedText[1]{{\color{black}#1}}
\newcommand{\bias}{\mathbf{d}}
\newcommand{\Hv}{\bm{H}_{v}}
\newcommand{\bftab}{\fontseries{b}\selectfont}
\newtheorem{theorem}{Theorem}
\newtheorem{lemma}{Lemma}
\newtheorem{corollary}{Corollary}
\newtheorem{remark}{Remark}[section]
\newtheorem{example}[remark]{Example}
\newtheorem{assumption}{Assumption}
\numberwithin{equation}{section}
\newcommand{\independent}{\protect\mathpalette{\protect\independenT}{\perp}}
\def\independenT#1#2{\mathrel{\rlap{$#1#2$}\mkern2mu{#1#2}}}
\newcommand{\E}{{\mathbb E}}       
\newcommand{\pa}{{\rm pa}}         
\newcommand{\pr}{{\rm pr}_\theta}  
\newcommand{\nd}{{\rm nd}}         
\newcommand{\ch}{{\rm ch}}         
\newcommand{\an}{{\rm an}}	   
\newcommand{\de}{{\rm de}}	   
\newcommand{\var}{{\rm var}}       
\newcommand{\vecv}{\mathbf{Y}_v}       
\newcommand{\vecY}{\mathbf{Y}}
\newcommand{\vecu}{\mathbf{Y}_u}
\newcommand{\vecU}{\mathbf{Y}_U}       
\newcommand{\vecUint}{\mathbf{Y}_{U.1}}       
\newcommand{\tildevecv}{ \mathbf{\tilde Y}_v}       
\newcommand{\vecepsv}{ \bm{\varepsilon}_{v}}
\newcommand{\cov}{{\rm cov}}
\newcommand{\vecEta}{\bm{\eta}_{v \setminus U}}
\newcommand{\hatEta}{\bm{ \hat\eta}_{v \setminus U}}
\newcommand{\B}{\mathbf{Z}}
\title{Confidence sets for Causal Orderings}
\author[Wang]{Y. Samuel Wang}
\address{Cornell University}
\email{ysw7@cornell.edu}
\author[Kolar]{Mladen Kolar}
\address{University of Southern California}
\email{mkolar@marshall.usc.edu}
\author[Drton]{Mathias Drton}
\address{TU Munich}
\email{mathias.drton@tum.de}
\begin{document}

\begin{abstract}
Causal discovery procedures aim to deduce causal relationships among variables in a multivariate dataset. While various methods have been proposed for estimating a single causal model or a single equivalence class of models, less attention has been given to quantifying uncertainty in causal discovery in terms of confidence statements. \updatedText{A primary challenge in causal discovery of directed acyclic graphs is determining a causal ordering among the variables, and our work offers a framework for constructing confidence sets of causal orderings that the data do not rule out. Our methodology specifically applies to identifiable structural equation models with additive errors and is based on a residual bootstrap procedure to test the goodness-of-fit of causal orderings.} We demonstrate the asymptotic validity of the confidence set constructed using this goodness-of-fit test and explain how the confidence set may be used to form sub/supersets of ancestral relationships as well as confidence intervals for causal effects that incorporate model uncertainty.
\end{abstract}

\maketitle

\section{Introduction}\label{sec:intro}
Inferring causal relations as opposed to mere associations is a problem that is not only of intrinsic scientific interest but also helps predict how an observed system might change under intervention~\citep{peters2017elements}. When randomized controlled trials are infeasible, methods for \emph{causal discovery}---the problem of estimating a causal model from observational data---become valuable tools for hypothesis generation and acceleration of scientific progress. Examples of applications include systems biology~\citep{sachs2005causal}, neuroscience~\citep{shen2020challenges}, and climate modeling~\citep{nowack2020causal}.

\updatedText{The causal models we consider} may be represented as a \emph{directed acyclic graph} (DAG), and---leveraging this representation---\updatedText{many problems in} causal discovery can be cast as recovery of the appropriate DAG. The first step in causal discovery is \emph{identification}: determining appropriate assumptions under which the causal model can be recovered from population information; see, e.g., \citet{shohei2006lingam, loh2014learning, peters2014additive}. The next step is providing a method to \emph{estimate} the causal graph from data; see, e.g., \citet{buhlmann2014cam, chen2019equal, wang2020high}. Once an estimation procedure is established, it is natural to question the estimation \emph{uncertainty}. Uncertainty quantification and the ability to test identifying assumptions are essential for trustworthy estimation of causal graphs and help to determine whether key modeling assumptions are appropriate. Nonetheless, the literature on frequentist causal discovery, with a few exceptions \cite[e.g.,][]{strobl2019pvalues}, only outputs a point estimate in the form of a DAG or single equivalence class.

\updatedText{In the settings we consider,} given a causal ordering of the variables, causal discovery reduces to variable selection in a sequence of regressions. Thus, the key difficulty lies in inferring the causal ordering; this motivates the issue we address in this paper: developing a procedure that provides a confidence set for causal orderings.

\subsection{Setup}\label{sec:DAG}

We represent a causal model for the random vector $Y=(Y_1,\dots,Y_p)$ with a DAG $G=(V,E)$, where each node $v$ in the vertex set $V=[p]$ indexes a random variable $Y_v$.  An edge $u \rightarrow v \in E$ indicates that $Y_u$ has a direct causal effect on $Y_v$, and we say that $u$ is a \emph{parent} of its \emph{child} $v$. If there exists a directed path in $G$ from $u$ to $v$, then $u$ is an \emph{ancestor} of its \emph{descendant} $v$. We denote the sets of parents, children, ancestors, and descendants of node $v$ by $\pa(v)$, $\ch(v)$, $\an(v)$, and $\de(v)$, respectively. 
The models we consider take the form of a 
recursive structural equation model (SEM) with additive noise:
\begin{equation} \label{eq:SEM}
	Y_{v} = f_v\left((Y_{u})_{u \in \pa(v)}\right) + \varepsilon_{v}, \qquad v\in V,
\end{equation}
where the $f_v$ are unknown 
and the errors $\{\varepsilon_v\}_{v = 1}^p$ are mean zero and mutually independent. 

In a fully general SEM, the DAG may only be identified from observational data up to a Markov equivalence class---a collection of graphs that imply the same set of conditional independence relations~\citep{spirtes2000causation}.  As the different graphs in the equivalence class may have contradicting causal interpretations, it is also of interest to work with restricted SEMs in which the DAG itself becomes identifiable~\citep[Chap.~18.6.3]{handbook}. 
\updatedText{Specifically, for the model in \eqref{eq:SEM},  which assumes additive errors, 
the DAG becomes identifiable when 
$f_v$ are non-linear or the errors $\varepsilon_v$ are non-Gaussian. Our methodology is tailored to these settings. In contrast, the linear Gaussian case, also allowed under \eqref{eq:SEM}, features the same Markov equivalence classes as the general nonparametric model. In this case, our procedure fails gracefully in the sense that the confidence statements we establish in Section~\ref{sec:theory} still hold, but the confidence sets will be uninformative because we have trivial power to reject incorrect orderings.}

We focus on a causal ordering for the variables in the model given by DAG~$G$; i.e., a total ordering of $V$ 
where variables that appear later 
have no causal effect on earlier variables.  We may identify each possible ordering with a permutation $\theta:V\to V$, where $\theta$ yields a causal ordering for $G$ if and only if $\theta(u) < \theta(v)$ implies that $v \not \in \an(u)$. 
In general, a causal ordering is not unique, and, letting $\mathcal{S}_V$ be the set of all permutations of $V$, we denote the set of all causal orderings
$
\Theta(G) = \left\{\theta \in\mathcal{S}_V\, : \, \theta(u) < \theta(v) \text{ only if } v \not \in \an(u) \right\}.
$

\subsection{Contribution}
Let $\vecY$ be a sample drawn from the SEM in~\eqref{eq:SEM}, and let $\alpha\in(0,1)$. We propose a procedure to construct a $1 - \alpha$ \emph{confidence set of causal orderings}, $\hat \Theta(\vecY, \alpha)$, where $\hat \Theta(\vecY, \alpha)\subseteq \mathcal{S}_V$. Specifically, our procedure inverts a goodness-of-fit test for a causal ordering and returns the set of all orderings that are not rejected by the test. Thus, for any $\theta \in \Theta(G)$: 
\begin{equation}\label{eq:topConfSetSingle}\small
	\lim_{n \rightarrow \infty} P\left( \theta \in \hat \Theta(\vecY, \alpha) \right) \geq 1 - \alpha.
\end{equation}
It follows that if $G$ has a unique causal ordering (i.e., $\vert \Theta(G) \vert = 1$), then $\hat \Theta(\vecY, \alpha) $ contains that causal ordering with asymptotic probability at least $1-\alpha$. \updatedText{In the oracle setting where each hypothesis test is decided correctly, Alg.~\ref{alg:branchAndBound}---described below---will produce a confidence set that is exactly $\Theta(G)$. However, with sample data, when $\vert \Theta(G) \vert > 1$,
guaranteeing that $\Theta(G) \subseteq \hat \Theta(Y, \alpha)$ with probability at least $1 - \alpha$ is not straightforward, since it would require controlling the family-wise error rate for all $\theta \in \Theta(G)$ and the size of $\Theta(G)$ is not known a priori. Nonetheless, \eqref{eq:topConfSetSingle} implies that $ \Theta(G) \cap \hat \Theta(\vecY, \alpha) \neq \emptyset $ with an asymptotic probability at least $1 - \alpha$. Furthermore, for a fixed data generating procedure, we can expect $\hat \Theta(Y, \alpha)$ to contain $1-\alpha$ of the orderings in $\Theta(G)$; i.e., 
\begin{equation} \label{eq:expectIntersect}
	\lim_{n \rightarrow \infty} \E \left( \frac{\vert \Theta(G) \cap \hat \Theta(Y, \alpha) \vert}{\vert \Theta(G)  \vert} \right) \geq 1-\alpha.
\end{equation}

}

The confidence set $\hat \Theta(\mathbf{Y}, \alpha)$ provides a set of orderings that are not excluded by the data. Different elements of $\hat \Theta(\mathbf{Y}, \alpha)$ suggest different causal orderings which may, but do not have to, lead to different causal conclusions; we elaborate on this point in Section~\ref{subsec:ci_for_causal_effects}. The set $\hat \Theta(\mathbf{Y}, \alpha)$ being large cautions the analyst against overconfidence in a specific estimated ordering.  In contrast, if $\hat \Theta(\mathbf{Y}, \alpha)$ is small, few causal orderings are compatible with the data under the considered model class. This latter aspect is crucial because $\hat \Theta(\mathbf{Y}, \alpha)$ may also be empty, indicating 
that the model class does not capture the data-generating process.

Furthermore, $\hat \Theta(\vecY, \alpha)$ can be post-processed to form other useful objects. Most importantly, similar to the problem studied by \cite{strieder2023equal, strieder2024dual}, we may form confidence intervals for causal effects that also incorporate model uncertainty. In addition, $\hat \Theta(\vecY, \alpha)$ produces a sub/superset of the true ancestral relationships with some user-defined probability. 

Our framework takes a straightforward approach based on goodness-of-fit tests. However, realizing this idea presents significant challenges, and we construct our procedure with careful attention to both statistical and computational aspects. Specifically, our methodology is built using computationally attractive tests for regression models with asymptotic validity \updatedText{in the linear setting when $p\log^{11/2}(n)/n \rightarrow 0$}, where $p$ is the number of variables and $n$ is the sample size.  \updatedText{These tests are then applied across all $p$ variables. Computationally, we do this by devising} the statistical decisions so that we can use a branch-and-bound type procedure to handle problems at a moderate but challenging scale. Despite prioritizing computational tractability, the procedure is asymptotically valid when allowing $p$ to grow with $n$, and we establish the asymptotic validity of the confidence set when \updatedText{$p^2\log^{11/2}(n)/n \rightarrow 0$}.

To motivate $\hat \Theta(\mathbf{Y}, \alpha)$ as an object of interest, we preview the analysis in Section~\ref{sec:numerics-data-analysis} of daily stock returns for 12 industry portfolios. DirectLiNGAM~\citep{shimizu2011directlingam} gives a point estimate of the causal ordering where the Utilities industry is first and causally precedes the other 11 industries. \updatedText{The set $\hat \Theta(\mathbf{Y}, .05)$ contains approximately $1/45,000$ of the $12!$ possible total orderings, and indeed Utilities is first in every ordering in the confidence set. 
Nonetheless, many orderings in $\hat \Theta(\mathbf{Y}, \alpha = .05)$---i.e., those not rejected by the data---have other causal implications which differ from the point estimate. As shown in Section~\ref{sec:numerics-data-analysis}, most orderings in $\hat \Theta(\mathbf{Y}, \alpha = .05)$ are relatively far from the point estimate as opposed to the Fr\'{e}chet Mean. Finally, in the estimated causal ordering, Manufacturing precedes Chemicals, so a naive analysis would conclude that the total effect of Chemicals onto Manufacturing is $0$. In contrast, when accounting for model uncertainty, we produce a 90\% confidence interval for the total effect of Chemicals onto Manufacturing of $\{0\} \cup (.268, .413) \cup (.980, 1.093)$.}

\subsection{Related work}

Previous work on uncertainty in causal discovery predominantly focuses on specific parameters within a causal model, rather than uncertainty across the entire model selection procedure. In linear SEMs with equal error variances, \citet{jankova2018inference} provide confidence intervals for the linear coefficients and \citet{li2019lrt} test the absence of edges; \citet{shi2021testing} consider the same problem for more general additive noise models.  However, this work either assumes that a causal ordering is known or requires accurate estimation of a causal ordering to properly calibrate the test. Thus, they are poorly suited for our setting of interest: where the ``signal strength'' is small, or modeling assumptions may be violated. In contrast, \citet{strieder2023equal, strieder2024dual} focus on the equal variance case with bivariate data and form confidence intervals for causal effects that account for model uncertainty.

A confidence set of models has previously been proposed in work such as \citet{hansen2011model} and \citet{lei2017cross}, who consider a set of candidate models and remove all models determined to be ``strictly worse'' than any other candidate in the set. In contrast, \citet{ferrari2015confidence} and \citet{zheng2019model} form confidence sets by including all models that are not rejected when compared to some saturated model.

As an intermediate step, our framework requires a goodness-of-fit test for regression models. This is a classical problem~\citep{breusch1979heteroscedastic, cook2983diagnostics} that has attracted renewed interest in recent work such as \citet{sen2014testing}, \citet{shah2018goodness}, \citet{berett2019nonparametric}, and \citet{schultheiss2021assessing}. In principle, it is possible to adopt any of these existing procedures into our proposed framework; however, the high computational cost renders them unusable in all but the smallest problems. Thus, we propose a specific new test that possesses both statistical and computational properties that are particularly advantageous for our goal of targeting causal orderings, which requires us to test a very large number of regression models. We provide a detailed comparison of our proposal and the existing work in Section~\ref{sec:singleReg} after describing our procedure.

There is a large literature on testing model fit for a specific SEM, particularly in the linear case. Testing model fit is then classically done by comparing empirical and model-based covariances~\citep{bollen1993testing}. However, in some settings, as discussed in Section~\ref{sec:DAG}, a unique graph may be identified, but simply comparing covariances will fail to falsify graphs in the same Markov equivalence class. Furthermore, the models we consider do not constrain covariances, and thus require alternative approaches.

We note that, by their very nature, Bayesian approaches also quantify uncertainty for causal structures and have seen numerous computational advances, e.g., by focusing on causal orderings \citep{friedman2003being,niinimaki2016structure,kuipers2017partition}. However, nearly all Bayesian causal discovery procedures focus on cases where the graph can only be identified up to an equivalence class. The few exceptions---e.g., \citet{hoyer2009bayesian, shimizu2014bayesian, chang2022order}---require specifying a likelihood for the data, rather than adopting the semi-parametric approach that we employ.  At a more fundamental level, credible intervals differ conceptually from confidence regions that are our focus; especially, since in a complex model selection problem as we consider, there is no Bernstein-von Mises connection between the two concepts.
	
\subsection{Outline}
In Section~\ref{sec:background}, we give a background on causal discovery. In Section~\ref{sec:singleReg}, we propose a computationally attractive goodness-of-fit test for a single causal regression and show in Section~\ref{sec:SEMtest} that it can be used to test a causal ordering and form the confidence set $\hat \Theta(\mathbf{Y}, \alpha)$. We establish theoretical guarantees in Section~\ref{sec:theory} and examine empirical performance in Section~\ref{sec:numerics}.	

\section{Background on causal discovery}
\label{sec:background}

For expository simplicity, we initially focus on linear SEM in~\eqref{eq:SEM} where each $f_v$ is linear. Thus, assuming zero means,
\begin{equation} \label{eq:SEM_2}\small
	Y_{v} = \sum_{u \in \pa(v)} \beta_{v,u}Y_{u} + \varepsilon_{v}, \qquad v\in V.
\end{equation}
Collecting $\varepsilon = (\varepsilon_{v} : v \in V)$ and letting $B \in \mathbb{R}^{p \times p}$ denote the matrix of causal effects where $B_{v,u} = \beta_{v,u}$ if $u \in \pa(v)$ and  $B_{v,u} = 0$ if  $u \not \in \pa(v)$, we have the multivariate model $Y= B Y + \varepsilon$. We use $Y_U = (Y_u : u \in U)$ to denote the sub-vector corresponding to the elements in $U$. We use bold font to denote the collection of $i = 1, \ldots, n$ observations; i.e., $Y_{v,i}$ denotes the $i$th observation of the $v$th variable and $\vecY = (Y_{v,i} : i \in [n], v \in [p])\in \mathbb{R}^{n \times p}$ and $\vecv = (Y_{v,i} : i \in [n]) \in \mathbb{R}^n$. When we pass sets of observations to a function, it should be interpreted as the function applied to each observation; i.e., $h(\vecv) = \left(h(Y_{v,i}) : i \in [n] \right)$.

For linear SEMs, \citet{shohei2006lingam} show that the exact graph can be identified when the errors, $\varepsilon_v$, are mutually independent and non-Gaussian. The identification result relies on the following key observation. Let $\eta_{v \setminus U}$ denote the residuals when $Y_v$ is regressed---using population values---onto a set of variables $Y_U$. If $U$ contains all the parents of $v$ but no descendants (i.e., all variables that have a direct causal effect on $v$ and no variables that are directly or indirectly caused by $v$), then the residuals resulting from population regression are independent of the regressors. Thus, the hypothesis in \eqref{eq:nullHypSingle} implies the hypothesis in \eqref{eq:nullHypSingleIndep} where $\nd(v) = V \setminus \{v \cup \de(v)\}$ denotes the non-descendants of $v$:
\begin{align}
H_0:&\; \pa(v) \subseteq U \subseteq \nd(v) \label{eq:nullHypSingle},\\
H_0&: \eta_{v \setminus U} \independent Y_U. \label{eq:nullHypSingleIndep}
\end{align}
\updatedText{If $U_1 = U\cap \de(v)$ is non-empty, then $\eta_{v \setminus U}$ will contain terms involving $Y_{U_1}$---unless $B$ and $\cov(\varepsilon)$ take specific pathological values. The residuals, $\eta_{v \setminus U}$ will always be uncorrelated with $Y_U$; but if the errors are non-Gaussian then the Darmois-Skitovich Theorem~\citep{darmois1953analyse, skitovich1954linear} implies that $\eta_{v \setminus U} \not \independent Y_U$ (see Theorem 4.3 in~\citet{peters2017elements} for a concise restatement).}
Thus, testing the independence of residuals and regressors may falsify the hypothesis in~\eqref{eq:nullHypSingleIndep} and subsequently~\eqref{eq:nullHypSingle}.

A simple bivariate case is given in Figure~\ref{fig:causalDisc}, where the correct model is $Y_1 \rightarrow Y_2$. When viewing the raw data (left plot), no specific causal relationship is immediately apparent. However, in the middle plot, we have identified the correct model ($Y_1 \rightarrow Y_2$), and the residuals when regressing $Y_2$ onto $Y_1$ are independent of the regressor, $Y_1$. On the right-hand side, we have posited the incorrect model $Y_2 \rightarrow Y_1$. When regressing $Y_1$ onto $Y_2$, the residuals remain uncorrelated with $Y_1$, but are no longer independent of $Y_2$.

\begin{figure}[bt]
\centering
\includegraphics[scale = .95]{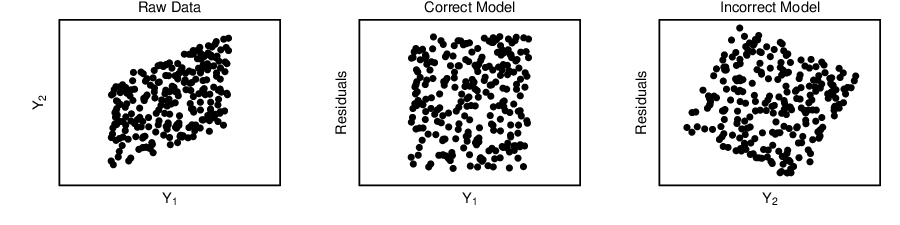}
\caption{\label{fig:causalDisc}Left: Raw data. Middle/Right: Residuals from regressing the posited child onto the posited parent for the correct model (middle) and the incorrect model (right).}
\end{figure}

Of course, we typically do not have access to population values, and the linear coefficients are nuisance parameters that need to be estimated before conducting an independence test. \citet{wang2020high} show---in the linear non-Gaussian SEM setting---consistent recovery of the graph is still possible when estimating the nuisance parameters, even in the high-dimensional setting. Thus, to test~\eqref{eq:nullHypSingleIndep} from data, one might naively use least squares regression and directly test whether the residuals, $\hat \eta_{v \setminus U} = Y_v - \hat \beta Y_{U}$, are independent of $Y_{U}$. Unfortunately, even when the null hypothesis holds, $\hat \eta_{v \setminus U} = \varepsilon_v + (\beta -  \hat \beta) Y_{U}$ so $\hat \eta_{v \setminus U} \not\independent Y_{U}$ and the naive test does not control the Type I error rate. Example~\ref{exm:nuissance} in the appendix provides an illustration. A more careful approach is required for a valid test of \eqref{eq:nullHypSingleIndep}, and this problem has previously been addressed, e.g., by \citet{sen2014testing}. In Section~\ref{sec:singleReg}, we discuss a procedure that is particularly suited for our setting and contrast our approach with existing procedures.

\section{Goodness-of-fit for regression}\label{sec:singleReg}
We now propose a procedure for testing the null hypothesis in~\eqref{eq:nullHypSingleIndep} as a proxy for~\eqref{eq:nullHypSingle}; this test will be used as a building block for testing causal orderings as described in Section~\ref{sec:SEMtest}. We first describe how this can be done in the linear SEM setting and then generalize the procedure when $f_v$ may be non-linear. 

\subsection{Residual bootstrap test for linear models}\label{sec:resBootstrap-linMod}
For some $v \in V$ and set $U \subseteq V\setminus v$, let $b_{v,U} = \arg\min_{ b} \E\left([Y_{v} - b^T Y_{U.1}]^2\right)$ where $Y_{U.1}$ denotes the random vector $Y_U$ augmented by a term for the intercept and let $\eta_{v\setminus U} = Y_{v} - b_{v,U}^TY_{U.1}$; i.e., $b_{v,U} $ is the population regression coefficient and $\eta_{v\setminus U}$ is the resulting residual. The quantity $b_{v,U}$ and random variable $\eta_{v\setminus U}$ are well defined for all $U$ and $v$, and when $\pa(v) \subseteq U \subseteq \nd(v)$ then $b_{v,U}$ coincide with the causal parameters---i.e., $b_{v,U} = \beta_{v,U}$---and $\eta_{v \setminus U} = \varepsilon_{v}$. Given the data, we denote the population residuals as $\vecEta = (\eta_{v\setminus U, 1}, \ldots, \eta_{v\setminus U, n}) $. Furthermore, let $\hat b_{v,U}$ be regression coefficients estimated from sample moments, let $\hat f_v(\vecU) = \vecUint \hat b_{v,U}$, and let $\bm{\hat \eta}_{v \setminus U} = \vecv -  \hat f_v(\vecU) $ denote the residuals calculated using $\hat b_{v,U}$. 

Our test will require a set of functions, $\mathcal{H} = \{h_j(\cdot)\}_{j = 1}^J$ for fixed $J$, which we refer to as \emph{test functions}; these are selected by the analyst and we give practical guidance below. 
\updatedText{We can measure the dependence between $Y_u$ and $\hat  \eta_{v \setminus U}$ using 
$\tau_j(\vecv, u, U; \vecY) = \frac{1}{\sqrt{n}} h_j(\vecu)^T \bm{\hat  \eta}_{v \setminus U}$
which are collected into the vector $ \tau(\vecv, U; \vecY) = \left( \tau_j(\vecv, u, U; \vecY) : u \in U, j \in [J] \right) \in \mathbb{R}^{|U|J}$.
Finally, we aggregate the test statistics into a single measure of dependence 
\begin{equation}\label{eq:tDef}
T(\vecv, U; \vecY) = \vert \tau(\vecv, U; \vecY)  \vert_\infty. 
\end{equation}
When the arguments are clear from context, we will simply use 
$\tau^{(v)} := \tau(\vecv, U; \vecY)$ and 
$T^{(v)} := T(\vecv, U; \vecY)$.
Under the null, 
\begin{equation}\label{eq:oracleStat}\small
\begin{aligned}
\tau_j(\vecv, u, U; \vecY) = \frac{1}{\sqrt{n}}h_j(\vecu)^T \bm{\hat  \eta}_{v \setminus U}  
&= \frac{1}{\sqrt{n}}h_j(\vecu)^T[I - \mathbf{Y}_{U.1}(\mathbf{Y}_{U.1}^T\mathbf{Y}_{U.1})^{-1}\mathbf{Y}_{U.1}^T]\vecepsv,
\end{aligned}
\end{equation}
so that $\E[\tau_j(\vecv, u, U; \vecY)] = 0$ because $\E(\varepsilon_v) = 0$ and $\varepsilon_v \independent Y_{U}$. }

However, when the null hypothesis in~\eqref{eq:nullHypSingle} does not hold, the population regression coefficients are generally not equal to the causal coefficients; i.e., $b_{v,U} \neq \beta_{v,U}$. Letting $U' = \pa(v)  \cup  U$, with a slight abuse of notation, we define $ b_{v,  U'} = (b_{v,U})_u$ if $u \in U$ and $0$ otherwise, and similarly let $\beta_{v,  U'} = (\beta_{v,U})_u$ if $u \in \pa(v)$ and $0$ otherwise. \updatedText{Then, $\eta_{v \setminus U} = \varepsilon_v + (\beta_{v,U'} - b_{v, U'})^T Y_{U'.1}$ and 
$\tau_j(\vecv, u, U; \vecY) = \frac{1}{\sqrt{n}} h_j(\vecY_u)^T \bm{\hat  \eta}_{v \setminus U}$ equals: 
\begin{equation}\footnotesize
\label{eq:altHyp}
\left(\frac{1}{\sqrt{n}}h_j(\vecY_u)^T\bm{\eta}_{v \setminus U} - \sqrt{n}\E( h_j(Y_u)\eta_{v \setminus U})\right) + \frac{1}{\sqrt{n}}h_j(\vecY_u)^T\bm{Y}_{ U'.1}[b_{v,U'} - \hat{ b}_{v,U'}]+ \sqrt{n} \E( h_j(Y_u)\eta_{v \setminus U}).
\end{equation}
The first term is mean $0$, and the second term is asymptotically mean $0$; however, if $\E( h_j(Y_u)^T\eta_{v \setminus U}) \neq 0$ then $\vert \tau_j(\vecv, u, U; \vecY)\vert$---and subsequently $T^{(v)}$---will grow with $n$. By the first order conditions of least squares regression, $\bm{ \hat \eta}_{v \setminus U}$ is always uncorrelated with $\mathbf{Y}_U$. Thus, if $h_j$ is a linear function, $\tau_j(\vecv, u, U; \vecY) = 0$. Furthermore, in multivariate Gaussians, uncorrelated is equivalent to independent, so for linear Gaussian SEMs, $\eta_{v \setminus U} \independent Y_U$ and \eqref{eq:nullHypSingleIndep} always holds regardless of whether \eqref{eq:nullHypSingle} holds. We note, however, that this inability to falsify \eqref{eq:nullHypSingle} is not specific to our approach, but intrinsic to the non-identifiability of linear Gaussian SEMs. In these cases, the Type I error rate of our proposed procedure will be preserved, but---as previously mentioned---the test will have trivial power.

When the errors are non-Gaussian, selecting a non-linear function $h_j$ for which $\E( h_j(Y_u) \eta_{v \setminus U}) \neq 0$ will allow for falsification of the hypothesis in \eqref{eq:nullHypSingle}. In fact, when $p$ is fixed and $n \rightarrow \infty$, Theorem 2 implies that if $\E[h_j(Y_u) \eta_{v \setminus U}] \neq 0$ for at least one test function and $u\in U$, the null will be rejected with probability going to $1$. When $U$ contains descendants of $v$ (i.e., $u \in U \cap \de(v)$), the results of \citet{wang2020causal} imply that using polynomials as test functions will suffice in the following sense: when $h_j(Y_u) = Y_u^K$ for some $K > 1$, then $\E[h_j(Y_u) \eta_{v \setminus U}] \neq 0$ for generic higher-order moments of the errors (i.e., moments $\E(\varepsilon_v^k)$ for $1 < k \leq K + 1$).

For finite samples, the analysis of Theorem 2 also implies that good power may be achieved when $\vert \E[h_j(Y_u) \eta_{v \setminus U}]\vert$ is large relative to the variance of the bootstraped null distribution---discussed below. However, selecting test functions with ``optimal power'' is challenging because both of these quantities depend on the true distribution of $\varepsilon$, which can only be recovered given a true causal ordering. Furthermore, even if this could be optimized for a specific $v$ and $U$, the optimal test functions will change for different $v$ and $U$, making a ``globally optimal'' choice even more difficult. Nonetheless, in Section~\ref{sec:numerics}, we show that with relatively simple test functions, the test exhibits good empirical power when compared to other state-of-the-art tests. Additional simulations with different test functions are also included in the appendix. 
}

If we had access to new realizations of $\varepsilon_v$, we could sample directly from the distribution
of $\tau_j(Y_v, u, U ; Y)$ and ultimately $T^{(v)}$---conditional on $\vecU$---by replacing
$\vecepsv$ in~\eqref{eq:oracleStat} with new draws.
Comparing the observed $T^{(v)}$ to the distribution of these new realizations would yield an exact finite-sample test. We refer to this distribution as the \emph{oracle distribution} because, in practice, we cannot resample $\bm{\varepsilon}_v$ exactly.
Alternatively, conditioning on $\mathbf{Y}_U$, the quantity in~\eqref{eq:oracleStat} is asymptotically normal under the null hypothesis. Thus, the null distribution could also be approximated by samples which replace the $\bm{\varepsilon}_v$ in~\eqref{eq:oracleStat} with draws from a Gaussian. However, when $\varepsilon_v$ is not close to a Gaussian, we see drastic improvements by using the residual bootstrap procedure proposed below. We illustrate this explicitly with a simulation study in Section~\ref{sec:appendix-asymptotic-calibrate} of the appendix.

Instead, we calibrate our test with a residual bootstrap procedure. For each bootstrap draw, we condition on $\bm{Y}_U$ and replace $\bm{\varepsilon}_v$ in~\eqref{eq:oracleStat} with 
$\bm{\tilde \eta} = (\tilde \eta_i : i \in [n])$ where each $\tilde \eta_i$ is drawn i.i.d.~from the empirical distribution of $\hatEta$\updatedText{, denoted $\hat F_n$}.
This is equivalent to forming $\tildevecv = \hat f(\vecU) + \mathbf{\tilde \eta}$, regressing $\tildevecv$ onto $\vecUint$ to form the residuals $\bm{\hat{\tilde  \eta}}_{v \setminus U}$ and computing $\tau_j(\tildevecv, u, U; \vecY) = (1/ \sqrt{n}) h_j(\vecu)^T \bm{\hat{\tilde  \eta}}_{v \setminus U}$ as described in Alg.~\ref{alg:fullNGProcedure}. \updatedText{Similar to before, we then compute $\tilde T^{(v)}:=T(\tildevecv, U; \vecY) = \max_{u,j}\vert \tau_j(\tildevecv, u, U; \vecY)\vert $.} 
In the simulations, we use the asymptotically equivalent quantity which divides by $\sqrt{n - |U|}$ instead of $\sqrt{n}$. \updatedText{In Section~\ref{sec:theory}, we show that this approximation converges to the oracle distribution when $p\log^{11/2}(n) = o(n)$. }

	\begin{algorithm}[tb]
	\caption{\texttt{testAn}$(v, U,  \vecY)$ \label{alg:fullNGProcedure}}
	\begin{algorithmic}[1]
		\State{Calculate $T^{(v)} = T(Y_v, U; \vecU)$ by regressing $\vecv$ onto $\vecUint$}
		\For{$l =1, \ldots, L$}
		\State{Sample $\tilde \eta_i \stackrel{\text{i.i.d.}}{\sim} \hat F_n$ to form $\tildevecv = \hat f_v(\vecUint) +  \mathbf{\tilde \eta}$ and calculate $\tilde T^{(v)}_l = T(\tildevecv, U; \vecY)$}
		\EndFor
		\State{\textbf{Return:} $\left(1 + \sum_{l} \mathbbm{1}\{T^{(v)} < \tilde T^{(v)}_l\}\right)/(L+1)$} 
	\end{algorithmic}
\end{algorithm}

Various other procedures, discussed below, have also been proposed for testing goodness-of-fit for a linear model via the hypothesis in~\eqref{eq:nullHypSingleIndep}. In theory, these procedures could also be used in the framework we subsequently propose to test causal ordering. However, practically speaking and as shown in Section~\ref{sec:simulations}, the computational cost of these procedures---save perhaps \citet{schultheiss2021assessing}---renders them infeasible for our goal of computing confidence sets for causal orderings. 

Moreover, beyond the drastic computational benefits, the proposed procedure possesses some statistical advantages, which we briefly discuss now and also empirically demonstrate in Section~\ref{sec:simulations}. \citet{sen2014testing} use the Hilbert-Schmidt Independence Criterion (HSIC) \citep{gretton2007hsic} to measure dependence between regressors and residuals. They only consider the fixed $p$ setting, and the simulations in Section~\ref{sec:numerics} show that the type I error is inflated when $p$ is moderately sized compared to $n$. We conjecture this is partly because they bootstrap both the regressors and residuals to approximate the joint distribution rather than conditioning on the regressors and only approximating the distribution of the errors. In contrast, \citet{shah2018goodness} propose Residual Prediction (RP) and \citet{berett2019nonparametric} propose MintRegression, which test the goodness-of-fit conditional on the covariates; however, both procedures calibrate their tests using a parametric bootstrap that assumes the errors are Gaussian. In the supplement, \citet{shah2018goodness} do consider cases where the errors are non-Gaussian and use a residual bootstrap, which shows good empirical performance although they do not provide any theoretical guarantees. Finally, \citet{schultheiss2021assessing} also propose a goodness-of-fit test for individual covariates in a linear model using a statistic similar to~\citet{wang2020high}. They show that the statistic is asymptotically normal and calibrate the hypothesis test using an estimate of the limiting distribution. A direct comparison of required conditions is not straightforward because they focus on the high-dimensional sparse linear model; whereas we do not assume sparsity, but require $p < n$. Nonetheless, for valid testing, they require the number of non-zero coefficients to be $o(n^{1/2}/\log^3(p))$; in contrast, \updatedText{we require $p\log^{11/2}(n) = o(n)$}.

\subsection{Non-linear models via sieves}\label{sec:nonLinear}
Using a similar argument with the independence of residuals and regressors, \citet{peters2014additive} show that the causal graph may also be identified when the structural equations in~\eqref{eq:SEM}, $f_v$, are non-linear. The procedure proposed above is directly generalized to this setting.

Let $\Phi^{(v)} = \{\phi_k\}_{k \geq 1}$ be a basis of functions (e.g., the b-spline or polynomial basis) that take inputs $Y_{U}$. We approximate $f_v$ with $\hat f_v$ by regressing $Y_v$ onto the span of $\Phi^{(v)}$. \updatedText{We require the constant function to be in the span of $\Phi^{(v)}$ and assume $\vert \Phi^{(v)}\vert = \vert U\vert K + 1$.}  Given the residuals of $Y_v$, the test statistics can then be calculated in the same way as in the linear setting. In this case, to ensure the test statistics are not identically zero, we must select test functions that do not lie in the span of $\Phi^{(v)}$. If we do not know a good basis, for $f_v$ a priori, we can use a sieve estimator where $K$ grows with $n$.
In Section~\ref{sec:theory}, we show that even if $f_v \not \in \text{span}(\Phi^{(v)}_{K})$ for any finite ${K}$, the proposed test is valid as long as the \updatedText{approximation bias} decreases appropriately with ${K}$.

\section{Inference for causal orderings}\label{sec:SEMtest}

Before discussing details, we first discuss some of the trade-offs involved in our design decisions. \updatedText{Estimating a causal ordering can be seen as a preliminary task when estimating a DAG, and various causal discovery procedures have fruitfully employed a search over causal orderings instead of individual graphs; e.g., \citet{raskutti2018learning, solus2021permutation}. Thus, for computational reasons, we focus on confidence sets in a much smaller space of causal orderings rather than all possible DAGs. Given a correct causal ordering, the estimation of the graph simplifies to variable selection in a sequence of regressions~\citep{shojaie2010penalized}, and the conditions required for finite sample DAG recovery are weaker than the conditions typically required when a causal ordering is not known in advance. We emphasize, however, that consistent DAG estimation---even when an ordering is known---requires an assumption on the minimum strength of each edge, which we do not require for the confidence set of causal orderings. Such an assumption would preclude many settings with low ``signal strength'' where the graph cannot be recovered with high probability. In fact, these are exactly the settings where confidence statements are most needed. However, if the true graph could be recovered with probability at least $1-\alpha$ when given a correct ordering, then \eqref{eq:topConfSetSingle} implies that the set of graphs estimated using each $\theta \in \hat \Theta(\vecY, \alpha)$ is an asymptotically valid $1 - 2 \alpha$ confidence set for $G$.}

We also choose to form the confidence set $\hat \Theta(\mathbf{Y}, \alpha)$ by inverting a goodness-of-fit test. If the model assumptions are violated, then all possible orderings may be rejected, resulting in an empty confidence set. Alternatively, one could form a confidence set by considering a neighborhood around $\hat \theta$, a point estimate of a causal ordering. This would produce a non-empty confidence set even when the model is misspecified and might be preferred if $\hat \theta$ represents a useful ``projection'' into the considered class of models. However, in practice, it is difficult to know if the misspecification is ``mild,'' and we argue that observing an empty confidence set is important because it alerts the scientist to potentially choose a causal discovery procedure that makes less restrictive identifying assumptions.


Finally, we construct a test for each causal ordering by aggregating several regression tests. \updatedText{Although we only require $p\log^{11/2}(n)/n\rightarrow 0$ for asymptotic validity of all individual regression tests, the aggregation requires $p^2\log^{11/2}(n)/n\rightarrow 0$. Alternatively, a direct test that does not first consider individual tests might enjoy better statistical properties.} However, we trade statistical efficiency for computational efficiency and aggregating individual tests allows for a branch-and-bound type procedure, which in practice drastically decreases computation and enables feasible analysis of ``medium-sized'' problems. In Section~\ref{sec:numerics-ConfSets}, we show that this procedure can be applied to problems with $p \approx 20$.

\subsection{Testing a given ordering}\label{sec:testingOrd}

For an ordering $\theta \in\mathcal{S}_V$, let $\pr(v) = \{u : \theta(u) < \theta(v)\}$ be the set of nodes that precede $v$ in $\theta$. When $\theta$ is a valid ordering for $G$, then $\pa(v) \subseteq \pr(v) \subseteq \nd(v)$ for all $v$ such that $\theta(v) > 1$. However, when $\theta$ is not a valid causal ordering, there exists some $v$ such that $\pr(v) \not \subseteq \nd(v)$. Thus, testing whether $\theta$ is a valid causal ordering is equivalent to testing 
\begin{equation}\label{eq:semH0}\small
H_{0, \theta}: \pa(v) \subseteq \pr(v) \subseteq \nd(v) \quad \forall v \text{ such that } \theta(v) > 1.
\end{equation}
To operationalize a test for $H_{0, \theta}$, we use the procedure from Section~\ref{sec:resBootstrap-linMod} to test $
H_{0, \theta, v}: \eta_{v \setminus \pr(v)} \independent Y_{\pr(v)}$ for all $v$ such that  $\theta(v) > 1$. Using the residual bootstrap procedure to test $H_{0, \theta, v}$ produces a p-value, denoted as $\hat \gamma_{\theta, v}$, which approximates, $\gamma_{\theta, v}$, the p-value which would result from a test calibrated by the oracle distribution. We propose aggregating the $p-1$ p-values into a single test for~\eqref{eq:semH0} by taking the minimum p-value. 
Specifically, let
\begin{equation}\label{eq:tippet}\small
\hat \gamma_{\theta} = \min_{v : \theta(v) > 1} \hat \gamma_{\theta,v} \qquad  \text{ and } \qquad \gamma_{\theta} = \min_{v : \theta(v) > 1}  \gamma_{\theta,v}.
\end{equation}

By Lemma~\ref{lem:indepPvals}, under the null, the of p-values produced by the oracle procedure, $(\gamma_{\theta, v} \; : \; \theta(v) > 2)$, are mutually independent so $\gamma_{\theta}$ follows a $\text{Beta}(1, p-1)$. \updatedText{This allows us to compute a final p-value for the entire ordering which we denote as $\Gamma_{\theta}$. Of course, we do not have access to $\gamma_{\theta}$, but under conditions described in Section~\ref{sec:theory},  $\hat \gamma_{\theta} \rightarrow_p \gamma_{\theta}$ and comparing $\hat \gamma_{\theta}$ to a $\text{Beta}(1, p-1)$ yields an attainable final p-value for $H_{0, \theta}$, denoted as $\hat \Gamma_\theta$.
\begin{lemma}\label{lem:indepPvals}
If $\theta \in \Theta(G)$, then the p-values $(\gamma_{\theta, v} \; : \; \theta(v) > 2)$ calculated for each level using the oracle procedure are mutually independent.  \end{lemma}
}

\subsection{Efficient computation}
To satisfy~\eqref{eq:topConfSetSingle}, we construct $\hat \Theta(\vecY; \alpha)$ by including any $\theta$ where $H_{0, \theta}$ is not rejected by a level $\alpha$ test; i.e.,  
\begin{equation}\small
    \hat \Theta(\vecY; \alpha) = \{ \theta : \hat \Gamma_\theta \geq \alpha\}.
\end{equation}
Of course, enumerating all permutations is computationally prohibitive, so we propose a branch-and-bound style procedure to avoid unnecessary computation. The pseudocode is given in Alg.~\ref{alg:branchAndBound}. For any fixed $\theta$, we sequentially test $H_{0, \theta, v}$ for $z = \theta^{-1}(v) = 2,\ldots, p$, and we update a running record $
\hat \gamma_{\theta}^{(z)} =  \min_{v : \theta^{-1}(v) \leq z} \hat{\gamma}_{\theta,v}$.
Once $\hat \gamma_{\theta}^{(z)}$ is less than the $\alpha$ quantile of a $\text{Beta}(1, p-1)$, we can reject $\theta$ without testing the remainder of the ordering. 

Furthermore, we test each ordering in $\mathcal{S}_V$ simultaneously rather than sequentially. We first testing all orderings of length $z=2$ \updatedText{and store all unrejected orderings in $\hat \Theta_2$}. Subsequently, we only consider orderings of length $z=3$ that are formed by appending a node to \updatedText{an ordering in $\hat \Theta_{z-1}$. The unrejected orderings are placed in $\hat \Theta_{z}$ and we repeat this process for increasing $z$.} This approach avoids redundant computation because the test of $\pa(v) \subseteq \pr(v) \subseteq \nd(v)$, only depends on the combination of elements included in $\pr(v)$ and not the specific ordering $\theta$. For example, when $z = 3$, once we have tested $\pa(6) \subseteq (4,5) \subseteq \nd(6)$ for the incomplete ordering $(4, 5, 6)$ we do not need to recompute the test for $(5, 4, 6)$. In the worst case, when the signal is small and no orderings are rejected, the procedure is still an exhaustive search. However, in Section~\ref{sec:simulations}, we show that under reasonable signal-to-noise regimes, problems with $p = 20$ are feasible.

\begin{algorithm}[tb]
		\caption{$\texttt{branchAndBound}(\vecY, \alpha)$\label{alg:branchAndBound}}
		\begin{algorithmic}[1]
			\State{Initialize $\hat \Theta_1 = \{(1), \ldots, (p)\}$, $z = 2$, and $\gamma_\theta^{(1)} = 1$ for each \updatedText{$\theta \in \hat \Theta_1$}}
			
            \While{$z \leq p$ \text{ and } $\hat \Theta_{z-1} \neq \emptyset$}
            \State{Let $\Psi_z$ be the set of unique unordered sets derived from permutations in $\hat \Theta_{z-1}$ \updatedText{and $\hat \Theta_z = \emptyset$}}
                \For{$\psi \in \Psi_z$ and each $v \in V \setminus \psi$}
                   \If{$\texttt{testAn}(v, \psi, \vecY)\geq$ the $\alpha$ quantile of $\text{Beta}(1,p-1)$ }
                        \For{\updatedText{$\theta \in \hat \Theta_{z-1}$} which corresponds to $\psi$}
                        \State{Add $\theta' = (\theta, v)$ to $\hat \Theta_z$ and set $\gamma^{(z)}_{\theta'} = \min(\gamma^{(z-1)}_{\theta'}, \texttt{testAn}(v, \psi, \vecY))$ }  
                              \EndFor
                            \EndIf
                \EndFor
                \State{$z= z+1$}    
			\EndWhile
			\State{\textbf{Return}: $\hat \Theta(\mathbf{Y}, \alpha) =  \hat \Theta_{p}$ }
		\end{algorithmic}
\end{algorithm}

\subsection{Post-processing the confidence set}
\label{subsec:ci_for_causal_effects}
We now discuss how $\hat \Theta(\mathbf{Y}, \alpha)$ can be post-processed into other useful objects. Specifically, we consider: (1) confidence intervals for causal effects which incorporate model uncertainty, and (2) sub/super-sets of ancestral relations with confidence.

\citet{strieder2023equal} consider a linear SEM with equal variances and provide CIs for causal effects which account for the model uncertainty.
With a similar goal, we propose a procedure for the setting of linear SEMs with independent errors. 
We focus on the total effect of $v$ onto $u$, \updatedText{denoted} $\partial\E[Y_u \mid \text{do}(Y_v = y)]/\partial y$ using the do-operator~\citep{pearl2009causality}; a procedure for the direct effect of $v$ onto $u$, i.e., \updatedText{the coefficient} $\beta_{u,v}$, is analogous and discussed in Section~\ref{sec:ci-direct} of the appendix.
When $v \not \in \an(u)$, the total effect of $v$ on $u$ is $0$, and when $v \in \an(u)$, the total effect may be recovered by a regression of $Y_u$ onto $Y_v$ and a set of additional covariates---often called the \emph{adjustment set}. In particular, letting $\an(v)$ be the adjustment set yields an unbiased estimate. While adjustment sets which recover the total effect may not be unique, an incorrect adjustment set may bias the estimate; e.g., incorrectly including a descendant of $Y_u$ or excluding a parent of $Y_v$ from the adjustment set may induce bias. Thus, naively selecting a single adjustment set and calculating a confidence interval for the parameter of interest will not provide nominal coverage when there is considerable uncertainty in a ``correct'' adjustment set. Robust quantification of uncertainty must also account for uncertainty in the selected adjustment set.

Alg.~\ref{alg:TotalEffect} describes a procedure to calculate $1 - \alpha$ CIs for the total effect of $v$ onto $u$ which account for model uncertainty. Specifically, we consider the adjustment set $\pr(v)$ for each ordering $\theta \in \hat \Theta(\mathbf{Y}, \alpha / 2)$. We then calculate the $1 - \alpha / 2$ CI for the regression parameter of interest, conditional on that adjustment set. The final CI is given by the union of all conditional CIs. In practice, if $v$ and $u$ are fixed in advance, this can be calculated simultaneously with $\hat \Theta(\mathbf{Y}, \alpha/2)$ to avoid redundant regressions. This is similar in flavor to the  IDA procedure of \citet{maathuis2009estimating}, but we additionally account for uncertainty due to estimating the graph rather than just population level non-identifiability within a Markov equivalence class.

\begin{lemma}\label{lem:ci-modelUncertain}
Let $\pi_{u,v}$ denote the total causal effect of $v$ onto $u$. Suppose $\hat \Theta(\mathbf{Y}, \alpha/2)$ satisfies~\eqref{eq:topConfSetSingle}, and $C(S)$ is an asymptotically valid $1-\alpha/2$ confidence interval for the parameter of interest, conditional on $S$ being a valid adjustment set. Then, for the confidence interval produced by Alg.~\ref{alg:TotalEffect},
$
\lim_{n \rightarrow \infty} P(\pi_{u,v} \in \hat C_\alpha) \geq 1 - \alpha.
$
\end{lemma}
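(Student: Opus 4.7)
My plan is to reduce the claim to a union bound over two asymptotically valid coverage events: one for the confidence set of orderings, and one for the conditional CI produced at a fixed true ordering. Pick any $\theta^* \in \Theta(G)$; this ordering is used only in the analysis, not by the algorithm. I would then track the two events
\[
E_1 = \{\theta^* \in \hat\Theta(\vecY, \alpha/2)\}, \qquad E_2 = \{\pi_{u,v} \in C(\pr_{\theta^*}(v))\}.
\]
Hypothesis~\eqref{eq:topConfSetSingle}, applied to $\theta^*$, immediately gives $\liminf_n P(E_1) \geq 1 - \alpha/2$.

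The key verification is that on $E_1$ the adjustment set entering the algorithm is indeed valid for the target $\pi_{u,v}$. Since $\theta^*$ is a causal ordering for $G$, we have $\pa(v) \subseteq \pr_{\theta^*}(v) \subseteq \nd(v)$: every parent of $v$ lies in $\pr_{\theta^*}(v)$ (blocking all back-door paths into $v$) and no descendant of $v$ does (so no spurious collider paths are opened). By Pearl's back-door criterion, $\pr_{\theta^*}(v)$ is therefore a valid adjustment set for the total effect of $v$ onto $u$, and the hypothesized conditional coverage of $C(\cdot)$ yields $\liminf_n P(E_2) \geq 1-\alpha/2$. A union bound then gives $\liminf_n P(E_1 \cap E_2) \geq 1 - \alpha$, and because $\hat C_\alpha$ is defined as the union of $C(\pr_\theta(v))$ over $\theta \in \hat\Theta(\vecY, \alpha/2)$, event $E_1$ guarantees that $C(\pr_{\theta^*}(v))$ is one of the sets in that union, so on $E_1 \cap E_2$ we conclude $\pi_{u,v} \in \hat C_\alpha$.

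The only subtlety, which I do not expect to be a real obstacle, is the degenerate regime in which $u$ precedes $v$ in $\theta^*$: then $v \notin \an(u)$, so $\pi_{u,v} = 0$ and $Y_u$ appears nominally in $\pr_{\theta^*}(v)$. I would handle this by interpreting the conditional regression (and hence $C(\pr_{\theta^*}(v))$) as that of $Y_u$ on $Y_v$ and $\{Y_w : w \in \pr_{\theta^*}(v),\ w \neq u\}$, matching the natural reading of Alg.~\ref{alg:TotalEffect}; the resulting adjustment set still satisfies the back-door criterion and contains the true value $0$, so the argument proceeds unchanged. Apart from this bookkeeping, every probabilistic content of the lemma is discharged by the two assumed coverage guarantees combined via Boole's inequality—no further analysis of the bootstrap or branch-and-bound routines is required, which is precisely the advantage of stating the result as a modular inversion of two $(1-\alpha/2)$-level procedures.
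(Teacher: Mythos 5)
Your main union-bound argument---fix $\theta^* \in \Theta(G)$, intersect $E_1=\{\theta^* \in \hat\Theta(\mathbf{Y},\alpha/2)\}$ with the conditional coverage event of $C({\rm pr}_{\theta^*}(v))$, and spend $\alpha/2$ on each---is exactly the paper's argument for the case $v \in \an(u)$, where every valid ordering places $v$ before $u$ and ${\rm pr}_{\theta^*}(v)$ is therefore guaranteed to enter the union defining $\hat C_\alpha$ on $E_1$. The gap is in what you call the degenerate regime, and it is not mere bookkeeping. Alg.~\ref{alg:TotalEffect} only forms $C(S)$ for $S = {\rm pr}_{\theta}(v)$ arising from orderings with $\theta(v) < \theta(u)$; when $u \in \an(v)$, \emph{every} $\theta^* \in \Theta(G)$ has $\theta^*(u) < \theta^*(v)$, so no conditional CI indexed by a true ordering is ever computed and your event $E_2$ refers to an object the algorithm does not produce. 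Your proposed repair---regress $Y_u$ on $Y_v$ and $\{Y_w : w \in {\rm pr}_{\theta^*}(v),\, w \neq u\}$---also fails substantively: if $u \in \pa(v)$, the back-door path $v \leftarrow u$ cannot be blocked by any set excluding $u$, so the coefficient of $Y_v$ in that regression is generically nonzero and the resulting interval need not cover $\pi_{u,v} = 0$.

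The correct handling, and the one the paper uses, routes this case through the branch of Alg.~\ref{alg:TotalEffect} that adjoins $\{0\}$ to the output, not through any $C(S)$: when $v \notin \an(u)$ one has $\pi_{u,v} = 0$ and there exists $\theta^* \in \Theta(G)$ with $\theta^*(u) < \theta^*(v)$; on the single event $\{\theta^* \in \hat\Theta(\mathbf{Y},\alpha/2)\}$, of asymptotic probability at least $1-\alpha/2$, the returned set satisfies $\hat C_\alpha \supseteq \{0\}$ and coverage holds with no contribution from the conditional CIs at all. Splitting into the two cases $v \in \an(u)$ and $v \notin \an(u)$, and applying your union bound only in the first, yields the paper's proof.
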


\begin{algorithm}[tb]
		\caption{\label{alg:TotalEffect}Get $1-\alpha$ CI for total effect of $v$ onto $u$}
		\begin{algorithmic}[1]
           \For{$S \in \mathcal{S} = \{S \, : \, S = \pr(v) \text{ for some } \theta \in \hat \Theta(\mathbf{Y},\alpha / 2) \text{ such that } \theta(v) < \theta(u) \}$}
                        \State{Form $C(S)$, the $1-\alpha/2$ CI for the coefficient of $Y_v$ when regressing $Y_u$ onto $Y_{S \cup \{v\}}$}
                \EndFor
			\If{$\theta(u) > \theta(v)$ for any $\theta \in \hat \Theta(\mathbf{Y},\alpha / 2)$}
			    \State{\textbf{Return}: $\hat C_\alpha = \{0 \} \cup \{\bigcup_{S \in \mathcal{S}} C(S)\}$}
			 \Else
			    \State{\textbf{Return}: $\hat C_\alpha = \bigcup_{S \in \mathcal{S}} C(S)$}
			 \EndIf
		\end{algorithmic}
\end{algorithm}

Furthermore, $\hat \Theta(\mathbf{Y}, \alpha)$ may be used to compute a sub/super-set of ancestral relations. 
Let $\mathcal{A}(G) = \left\{(u, v) \, : \, u \in \an(v) \right\}$ denote the set of true ancestral relationships in $G$,
$\mathcal{\hat A}_\cap = \{ (u, v) \, : \, \theta(u) < \theta(v) \; \forall \theta \in \hat \Theta(\vecY, \alpha) \}$  denote the set of ancestral relations that hold for all $\theta \in \hat \Theta(\mathbf{Y}, \alpha)$, and $\mathcal{\hat A}_\cup = \{ (u, v) \, : \, \exists \theta \in \hat \Theta(\vecY, \alpha) \text{ s.t. } \theta(u) < \theta(v)\}$ denote the set of ancestral relations which are implied by at least one $\theta \in \hat \Theta(\mathbf{Y}, \alpha)$.
Lemma~\ref{lem:ancestralSet} shows $\mathcal{\hat A}_\cap 
\subseteq \mathcal{A}\subseteq \mathcal{\hat A}_\cup$ with probability at least $1 - 2\alpha$.
The set $\mathcal{\hat A}_\cap$ is similar to the conservative set of causal predictors given in \citet{peters2016invariant}.

\begin{lemma}\label{lem:ancestralSet}
Suppose $\hat \Theta(\mathbf{Y}, \alpha)$ satisfies~\eqref{eq:topConfSetSingle}. Then,
$
\lim_{n \rightarrow \infty} P(\mathcal{\hat A}_\cap 
\subseteq \mathcal{A}\subseteq \mathcal{\hat A}_\cup) \geq 1 - 2\alpha. 
$
\end{lemma}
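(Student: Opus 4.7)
The plan is to decompose the double inclusion into the two sub-events $\mathcal{A} \subseteq \mathcal{\hat A}_\cup$ and $\mathcal{\hat A}_\cap \subseteq \mathcal{A}$, control each at asymptotic level $1-\alpha$, and combine via a union bound. For bookkeeping, write $\mathcal{A}_\theta := \{(u,v) : \theta(u) < \theta(v)\}$ for the total order induced by any permutation $\theta \in \mathcal{S}_V$, so that $\mathcal{\hat A}_\cup = \bigcup_{\theta \in \hat\Theta(\vecY,\alpha)}\mathcal{A}_\theta$ and $\mathcal{\hat A}_\cap = \bigcap_{\theta \in \hat\Theta(\vecY,\alpha)}\mathcal{A}_\theta$.

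For the upper inclusion $\mathcal{A} \subseteq \mathcal{\hat A}_\cup$, I fix any single $\theta^* \in \Theta(G)$. Validity of the ordering gives $\mathcal{A} \subseteq \mathcal{A}_{\theta^*}$ because $(u,v) \in \mathcal{A}$ means $u \in \an(v)$, which forces $\theta^*(u) < \theta^*(v)$. Consequently, on the event $E_1 := \{\theta^* \in \hat\Theta(\vecY,\alpha)\}$ we have $\mathcal{A} \subseteq \mathcal{A}_{\theta^*} \subseteq \mathcal{\hat A}_\cup$, and the pointwise coverage guarantee~\eqref{eq:topConfSetSingle} applied to $\theta^*$ yields $\liminf_n P(E_1) \geq 1-\alpha$.

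For the lower inclusion $\mathcal{\hat A}_\cap \subseteq \mathcal{A}$, I invoke the order-theoretic identity $\mathcal{A} = \bigcap_{\theta \in \Theta(G)} \mathcal{A}_\theta$, which is a DAG-theoretic version of Szpilrajn's extension theorem. Each valid $\theta$ extends $\mathcal{A}$, giving $\mathcal{A} \subseteq \bigcap_{\theta \in \Theta(G)} \mathcal{A}_\theta$; conversely, for any $(u,v) \notin \mathcal{A}$, either $v \in \an(u)$ (so every valid $\theta$ reverses the pair) or $u$ and $v$ are incomparable in $G$ (so one can construct a valid linear extension placing $v$ before $u$), and in both cases such a pair is excluded from the right-hand intersection. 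Hence, on the event $E_2 := \{\Theta(G) \subseteq \hat\Theta(\vecY,\alpha)\}$,
\[
\mathcal{\hat A}_\cap \;=\; \bigcap_{\theta \in \hat\Theta(\vecY,\alpha)}\mathcal{A}_\theta \;\subseteq\; \bigcap_{\theta \in \Theta(G)}\mathcal{A}_\theta \;=\; \mathcal{A},
\]
and a union bound gives $\liminf_n P(E_1 \cap E_2) \geq 1 - 2\alpha$.

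The main obstacle is controlling $P(E_2) = P(\Theta(G) \subseteq \hat\Theta(\vecY,\alpha))$ at level $1-\alpha$: a naive pointwise union bound over~\eqref{eq:topConfSetSingle} yields only $1 - |\Theta(G)|\alpha$, which can be far weaker when $G$ admits many valid orderings. The resolution I would pursue is to exploit that under the true SEM the regressions $\vecv$ on $\vecprv$ underlying all tests $H_{0,\theta,v}$ for $\theta \in \Theta(G)$ have residuals converging to the common vector $\vecepsv$, so that the bootstrap p-values $\hat\gamma_\theta$ are tightly coupled across $\theta \in \Theta(G)$ rather than behaving like independent tests; formalizing this coupling (e.g., via a uniform-in-$\theta$ bootstrap-consistency argument extending the Wasserstein convergence described in Section~\ref{sec:theory}) to bound the joint miscoverage by $\alpha$ asymptotically is the nontrivial technical step.
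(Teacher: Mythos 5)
Your argument for the inclusion $\mathcal{A}\subseteq\mathcal{\hat A}_\cup$ is exactly the paper's: fix one $\theta^\star\in\Theta(G)$, note $\mathcal{A}\subseteq\mathcal{A}_{\theta^\star}$, and apply~\eqref{eq:topConfSetSingle} once. The divergence, and the genuine gap, is in the direction $\mathcal{\hat A}_\cap\subseteq\mathcal{A}$. You reduce it to the event $E_2=\{\Theta(G)\subseteq\hat\Theta(\vecY,\alpha)\}$ via the (correct) identity $\mathcal{A}=\bigcap_{\theta\in\Theta(G)}\mathcal{A}_\theta$, but then you need $P(E_2)\geq 1-\alpha$ asymptotically, which you cannot get: the lemma's only hypothesis is the \emph{pointwise} guarantee~\eqref{eq:topConfSetSingle}, a union bound gives only $1-|\Theta(G)|\alpha$, and the ``coupling of bootstrap p-values across $\theta\in\Theta(G)$'' that you propose is neither carried out nor derivable from the stated hypothesis --- it would be an additional assumption, so as written this is not a proof of the lemma. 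You correctly flag this as the nontrivial step, but flagging it does not close it.

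The paper's resolution is combinatorial rather than probabilistic: it uses only \emph{two} orderings. One picks $\theta_1,\theta_2\in\Theta(G)$ such that every non-ancestral pair $(u,v)\notin\mathcal{A}$ is reversed by at least one of them, i.e.\ $\mathcal{A}_{\theta_1}\cap\mathcal{A}_{\theta_2}=\mathcal{A}$; then on $\{\theta_1\in\hat\Theta\}\cap\{\theta_2\in\hat\Theta\}$ one has $\mathcal{\hat A}_\cap\subseteq\mathcal{A}_{\theta_1}\cap\mathcal{A}_{\theta_2}=\mathcal{A}$ and simultaneously $\mathcal{A}\subseteq\mathcal{A}_{\theta_1}\subseteq\mathcal{\hat A}_\cup$, so a two-term union bound over~\eqref{eq:topConfSetSingle} yields exactly the $1-2\alpha$ in the statement (the singleton case $|\Theta(G)|=1$ is handled separately and gives $1-\alpha$). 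This is the key idea missing from your proposal: replacing ``all of $\Theta(G)$'' by a fixed, finite (here, two-element) witness set whose induced total orders already intersect to $\mathcal{A}$. One caveat worth noting: such a pair $\theta_1,\theta_2$ exists precisely when the ancestral partial order has order dimension at most two, a point the paper's proof glosses over; for higher-dimensional posets one would need $d$ witnesses and the bound would degrade to $1-d\alpha$. But within the paper's framework, the two-ordering union bound is the intended argument, and it is the step your route does not supply.
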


\section{Theoretical guarantees}\label{sec:theory}
\updatedText{We present conditions for the asymptotic validity of the residual bootstrap and analyze the power of the regression goodness-of-fit test. Let $\lambda_{\min}(C)$ and $\lambda_{\max}(C)$ represent the minimum and maximum eigenvalues of matrix $C$.

\subsection{Asymptotic validity}

Fix $\theta \in \Theta(G)$ and consider the hypothesis $H_{0,\theta,v}: \pa(v) \subseteq \pr(v) \subseteq \nd(v)$. Recall that $Y_v = f_v(Y_{\pa(v)}) + \varepsilon_v$, and we model $f_v$ using a function basis $\Phi^{(v)}$ which includes an intercept term where $\vert \Phi^{(v)} \vert = \vert\pr(v)\vert K + 1$ for some $K > 0$. Let $\B_v \in \mathbb{R}^{\vert \pr(v)\vert K + 1}$ denote the design matrix onto which we regress $\vecv$ and let $Z_{v,i} = (Z_{v,i,k} : k \in [|\pr(v)|K + 1])$ denote the $i$th row (i.e., observation) of $\B_v$ where $Z_{v,i,k} = \phi^{(v)}_{k}(Y_{\pr(v),i})$.

We first assume that $\varepsilon_{v,i}$ is sub-exponential and has a variance which is upper and lower bounded. The assumption that $\underline{\sigma}^2 \leq 1 \leq \overline{\sigma}^2 $ is only required for simplification.  Assumption~\ref{asm:designCondition} places weak requirements on the approximating basis and is satisfied if $Z_{v,i}$ is jointly sub-exponential.  

\begin{assumption}\label{asm:subExpErrors}
Assume $\max_v\Vert \varepsilon_{v,i}\Vert_{\psi_1}   \leq M < \infty$. Furthermore, suppose that
\[0 < \underline{\sigma}^2 \leq \min_v \E(\varepsilon_{v,i}^2) \leq  1 \leq \max_v \E(\varepsilon_{v,i}^2) \leq  \overline{\sigma}^2 < \infty \; \text{ and } \; \max_v\E([\varepsilon_{v,i}/\sigma_v]^4) \leq M^4.\]
\end{assumption}

\begin{assumption}\label{asm:designCondition}
Suppose that $\max_{v,k} \Vert Z_{v,i,k} \Vert_{\Psi_1} < M$. Furthermore, suppose $0 < \lambda_{\min,Z} \leq \lambda_{\min}(\E(Z_{v,i}Z_{v,i}^T))$ and for some $1 < \omega <\infty$ we have $\max_{v} \sqrt{\E\left[(s^TZ_{v,i})^4\right]} \leq \omega s^T\E(Z_{v,i}Z_{v,i}^T)s$ for all $s \in \mathbb{R}^{\vert\pr(v)\vert K + 1}$. 
\end{assumption} 

Let $b_v$ denote the coefficients of the population least squares projection of $f_v$ onto $\Phi^{(v)}$; i.e., $b_v = \min_b\E_Y[(Y_{v,i} - Z_{v,i}^T b)^2]$. Furthermore, let $ d_{v,i} =  f_v(Y_{\pr(v),i}) - Z_{v,i}^T b_v$ and $\bias_v = (d_{v,i}: i \in [n])$ denote the approximation bias. In the well-specified linear SEM case, $K = 1$, $\B_v$ is $\vecY_{\pr(v)}$ with an additional intercept term, and $\bias_v = 0$ for all $v$. When a basis for $f_v$ is not known a priori, we consider a sieve where $K$ grows with $n$ and $\bias_v \neq 0$. 

Assumption~\ref{asm:bias} requires that $d_{v,i}$ is sub-exponential and has an upper bound on the first and second moments of the absolute value. Note that $\E(d_{v,i}) = 0$ since $\Phi^{(v)}$ includes an intercept. 
In Corollary~\ref{cor:finalValidNonLinear}, we will make explicit assumptions on how these quantities decay with respect to $K$. 

\begin{assumption}\label{asm:bias}
Suppose that $\max_v\Vert d_{v,i}\Vert_{\psi_1}  = M_d \leq M$, $\max_v\E(d_{v,i}^2) = d^\star_2$, and $\max_v\E(\vert d_{v,i}\vert ) = d^\star_1$. 
\end{assumption}

Let $\Hv \in \mathbb{R}^{n \times \vert \pr(v)\vert J}$ be a matrix where each row is the test functions evaluated on $Y_{\pr(v),i}$; i.e., the $i$th row is $(h_j(Y_{u,i}) \,:\,  j \in [J], u \in \pr(v))$. The test statistic described in Section~\ref{sec:resBootstrap-linMod} is $  T^{(v)} := T(\vecv, \pr(v); \vecY) = \vert \tau^{(v)} \vert_\infty$ where
\begin{equation}
   \tau^{(v)} = \frac{1}{\sqrt{n}} \Hv^T(I - \B_v(\B_v^T\B_v)^{-1}\B_v^T)(\vecepsv + \bias_v) 
 = \frac{1}{\sqrt{n}}\sum_i \zeta^{(v)}_i (\varepsilon_{v,i} + d_{v,i}) \in \mathbb{R}^{\vert \pr(v)\vert J}
\end{equation}
and $\zeta^{(v)}_i \in \mathbb{R}^{\vert \pr(v)\vert J}$ is the $i$th column of $(\Hv^T(I - \B_v(\B_v^T\B_v)^{-1}\B_v^T))$. The vector $\zeta^{(v)}_i$ may be computed by regressing each of the $\vert\pr(v)\vert J$ test functions onto $\B_v$, and then collecting each of the residuals from observation $i$ into a single vector. Furthermore, we decompose $\tau^{(v)} = \kappa^{(v)} + \nu^{(v)}$ where $\nu^{(v)}$ is due to the approximation bias
\begin{equation}
\begin{aligned}
       \kappa^{(v)}  = \frac{1}{\sqrt{n}}\sum_i \zeta^{(v)}_i \varepsilon_{v,i} \quad \text{ and } \quad 
       \nu^{(v)} = \frac{1}{\sqrt{n}}\sum_i \zeta^{(v)}_i d_{v,i}.
\end{aligned}
\end{equation}
Because each element of $\vecepsv$ is mean $0$ and independent of all $\vecY_{\pr(v)}$, we have that $ \E(\kappa^{(v)} \mid \vecY_{\pr(v)} ) = 0$  and $\var(\kappa^{(v)} \mid \vecY_{\pr(v} ) = \sigma_v^2\Sigma^{(v)}$  where $\sigma^2_v = \var(\varepsilon_{v,i})$ and $\Sigma^{(v)} = \frac{1}{n}\Hv^T(I - \B_v(\B_v^T\B_v)^{-1}\B_v^T)H$ is the covariance of $\zeta^{(v)}_i$. In the oracle distribution, we condition on $\vecY_{\pr(v)}$ (thus fixing also $\Hv$, $\bias_v$, and $\nu^{(v)}$) and resample $\varepsilon_{v,i}$ from the population distribution. Comparing the observed $T^{(v)}$ to this distribution yields an exact test in finite samples. In practice, we compare the observed $T^{(v)}$ to a bootstrap distribution defined by
\begin{equation}
\begin{aligned}
  \tilde \tau^{(v)} &= \frac{1}{\sqrt{n}}\sum_i \zeta^{(v)}_i \tilde \varepsilon_{v,i} \in \mathbb{R}^{\vert\pr(v)\vert J}\quad \text{ and } \quad
  \tilde T^{(v)} & := T(\tildevecv, \pr(v); \vecY) = \vert \tilde \tau \vert_\infty,
  \end{aligned}
\end{equation}
where $\tilde \varepsilon_{v,i} $ is drawn i.i.d. from the empirical distribution of $\hatEta$. Note that $\hatEta$---and thus $\tilde T^{(v)}$---depends on $\vecY_{\pr(v)}$ and $\vecepsv$. 

Assumption~\ref{asm:testFuncWellCondition} restricts the test functions $\mathcal{H}$. In particular, we require that the correlation matrix of $\zeta_{i}^{(v)}$ is not too poorly conditioned. Furthermore, we will assume that the maximum value of any of these residuals is upper bounded. While this may initially seem opaque, we emphasize that each individual regression test is conditional on $Y_{\pr(v)}$, and Assumption~\ref{asm:testFuncWellCondition} can be empirically verified using only $Y_{\pr(v)}$ (without $Y_v$). Thus, we can always use test functions that are explicitly constructed---without looking at $Y_v$ so the test remains valid---to satisfy Assumption 4.

\begin{assumption}\label{asm:testFuncWellCondition}
Let $C^{(v)}$ denote the correlation matrix corresponding to $\Sigma^{(v)}$. Suppose that
\begin{equation*}
0 < \lambda_{\min,C} \leq \min_v \lambda_{\min}(C^{(v)}) \leq \max_v \lambda_{\max}(C^{(v)}) < \lambda_{\max,C} < \infty.
\end{equation*}
Furthermore, suppose that
\begin{equation*}
    h_{\max,1} = \max_{v,i,j} \left \vert \zeta_{i,j}^{(v)} / \sqrt{\Sigma^{(v)}_{jj}} \right \vert, \; h_{\max,2} = \max_{v,i,j} \left \vert \zeta_{i,j}^{(v)}  \right \vert = h_{\max, 2}, \; \text { and } \; h_{\max,3} = \max_{v,i,j} \left \vert h_{j}(Y_{v,i})  \right \vert.
\end{equation*}
\end{assumption}
 
Theorem~\ref{thm:highDCLT} shows that $\max_v \vert \gamma_{\theta,v} - \hat \gamma_{\theta, v}\vert$---the difference for each $v$ between p-values using the oracle and bootstrap distribution---can be upper bounded with overwhelming probability. The main idea for the proof is to use the high-dimensional CLT results of~\citet{chernozhukov2023nearly} to show that $\tau^{(v)}$ and $\tilde \tau^{(v)}$ can be well approximated over rectangles by the same multivariate normal. Lemma~\ref{lem:totalPValClose} then implies that the difference between the p-values for testing the entire ordering produced by the oracle and bootstrap procedure, $\Gamma_{\theta}$ and $\hat \Gamma_{\theta}$ respectively, is upper bounded by $p\times  \max_v \vert \gamma_{\theta,v} - \hat \gamma_{\theta, v}\vert$. The extra term comes from the fact that the quantiles of the null distribution of $\min_v \gamma_{\theta,v}$ depend on $p$. It could be avoided if we instead used a test statistic which aggregated $\tau^{(v)}$ across all levels of the ordering; e.g., $T = \max_v \vert \tau^{(v)} \vert$. However, this would not allow for the computationally efficient branch and bound procedure and would require an exhaustive search over all permutations.

\begin{theorem}\label{thm:highDCLT}
Suppose Assumptions~\ref{asm:subExpErrors}, \ref{asm:designCondition}, \ref{asm:bias}, and \ref{asm:testFuncWellCondition} hold, and $pK/n \rightarrow 0$. 

If $\bias_v = 0$ for all $v$ and for some universal constant $C$ we have $C(\overline{\sigma}^2 Kp/n + \log(n)/\sqrt{n}) < \underline{\sigma}^2/2$, then with probability $1-o(1)$ we have
\begin{equation}\label{eq:highDCLT1}
\begin{aligned}
\max_v \vert \gamma_{\theta, v} - \hat \gamma_{\theta,v} \vert &\lesssim     \left(n^{-1/2} + \frac{Kp}{n}\right)\frac{h_{\max,1}^4M^4 \log^{11/2}(n) \overline{\sigma}^2}{\lambda_{\min,Z}\lambda_{\min,C}\underline{\sigma}^2}\left(1 + \log\left[\frac{\overline{\sigma}^2
}{ \underline{\sigma}^2 \lambda_{\min,C}}\right]\right).
\end{aligned}
\end{equation} 

If $\bias_v \neq 0$, $\max_v \vert \nu^{(v)}\vert_\infty < \delta_1$, and for some universal constant $C$, we have $C((\overline{\sigma}^2  +\lambda_{\min,Z}^{-1})Kp\log^2(n)/n + \log(n)/\sqrt{n} + d^\star_2) < \underline{\sigma}^2/2$,
then with probability $1-o(1)$ we have
\begin{equation}\footnotesize
\begin{aligned}
\max_v \vert \gamma_{\theta, v} - \hat \gamma_{\theta,v} \vert &\lesssim     
    \delta_1\sqrt{\log(pJ)}  + d^\star_2\left( \frac{\log^2(n)(\overline{\sigma}^2 + \lambda_{\min, Z}^{-1})}{\lambda_{\min,C} \underline{\sigma}^2}\right)\left(1 + \log\left[\frac{\overline{\sigma}^2 + \lambda_{\min, Z}^{-1}}{\lambda_{\min,C} \underline{\sigma}^2}\right] \right)\\
    &\quad + 
    \left(n^{-1/2} + \frac{Kp}{n}\right) \left(\frac{ h_{\max,1}^4 M^4 \log^{11/2}(n) (\overline{\sigma}^2 + \lambda_{\min, Z}^{-1})}{\lambda_{\min,C} \underline{\sigma}^2}\right)\left(1 + \log\left[\frac{\overline{\sigma}^2 + \lambda_{\min, Z}^{-1}}{\lambda_{\min,C} \underline{\sigma}^2}\right] \right). 
\end{aligned}
\end{equation}

\end{theorem}

\begin{lemma}\label{lem:totalPValClose}
Suppose $\theta \in \Theta(G)$. Then
$\vert \Gamma_\theta - \hat \Gamma_\theta \vert \leq p \max_v\vert \gamma_{\theta,v}- \hat \gamma_{\theta,v}\vert$.
\end{lemma}

Finally, the following corollaries give sufficient conditions for $\hat \Gamma_{\theta}$ to converge in probability to $\Gamma_{\theta}$ so that the bootstrap test is asymptotically valid. When the data are generated by a linear SEM so that $K=1$, Corollary~\ref{cor:finalValidLinear} shows that $p^2\log^{11/2}(n) / n \rightarrow 0$ is sufficient for asymptotic validity. Corollary~\ref{cor:finalValidNonLinear} considers the setting in which the structural equations are unknown but can be approximated by a known basis. In particular, when the bias decreases at a rate of $K^{-r}$ for some $r > 1/2$, we require (up to log terms) $n^{(1-2r)/2(1+r)}p^{(1+2r)/(1+r)}\rightarrow 0$ for asymptotic validity. This is satisfied under various general conditions. For example, let $f$ be a function on $[0,1]$ with $r$th derivative $f^{(r)}$. If $\int (f^{(r)})^2 < \infty$, then, there exists a $K$-degree polynomial, $f_K$, such that $\E\left((f -f_{K})^2\right) \lesssim K^{-r}$~\citep{barron1991approximation}. Similar statements hold when $f$ is approximated by splines or the Fourier basis. Note that as $r \rightarrow \infty$ and the functions are ``easier'' to approximate, the condition converges to the well-specified setting of $p^2/n \rightarrow 0$.

\begin{corollary}\label{cor:finalValidLinear}
For a fixed $\theta \in \Theta(G)$, suppose that the conditions in Theorem~\ref{thm:highDCLT} hold. Furthermore, suppose the data are known to be generated by a linear structural equation model, so $K = 1$ and $d_{v,i} = 0$ for all $v$ and $i$. When  $\underline{\sigma}^2, \overline{\sigma}^2, \lambda_{\min,C}, \lambda_{\min,Z}, M, h_{\max, 1}$ are fixed and $p^2\log^{11/2}(n)/n \rightarrow 0$, then $\hat \Gamma_{\theta} \rightarrow_p \Gamma_{\theta}$ and
\begin{equation}
\lim_{n \rightarrow \infty} P(\theta \in \hat \Theta(\mathbf{Y}, \alpha)) \geq 1 - \alpha. 
\end{equation}
\end{corollary}

\begin{corollary}\label{cor:finalValidNonLinear}
For a fixed $\theta \in \Theta(G)$, suppose that the conditions in Theorem~\ref{thm:highDCLT} hold. Furthermore, suppose that the data are generated by a structural equation model with unknown functions but that an approximating basis is known such that $d^\star_1\lesssim K^{-r}$, $d^\star_2\lesssim K^{-r}$, and $M_d\lesssim K^{-r}$ for some $r > 1/2$. 
Suppose $\underline{\sigma}^2, \overline{\sigma}^2, \lambda_{\min,C}, \lambda_{\min,Z}, h_{\max,1}, h_{\max,2}, M$ are fixed and let $K =[n^{3/2}/p]^{1/(r+1)}$. If $\log^{11/2}(n)n^{\frac{1-2r}{2(1+r)}}p^{\frac{1+2r}{1+r}} \rightarrow 0$, then $\hat \Gamma_{\theta} \rightarrow_p \Gamma_{\theta}$ and
\begin{equation}\label{eq:asymptoticValid-confSet}\small
\lim_{n \rightarrow \infty} P(\theta \in \hat \Theta(\mathbf{Y}, \alpha)) \geq 1 - \alpha. 
\end{equation}
\end{corollary}

\subsection{Power analysis}
We now consider the power of the goodness-of-fit test. We consider a single regression and define the ``signal strength'' as
\begin{equation}
	\tau^\star = \max_{j \in [J], u \in \pr(v)} \vert\E(h_j(Y_{u,i})\eta_{v, \setminus \pr(v), i})\vert.
\end{equation} 
Theorem~\ref{thm:power} implies that if $(pK)^2/n\rightarrow 0$ and $pK\log^4(n)/\sqrt{n} = o(\tau^\star)$, then a single regression test will be rejected with overwhelming probability for any fixed $\alpha$ level test. Under the alternative, we no longer have $\eta_{v \setminus \pr(v), i} = \varepsilon_{v,i}$ so we require a condition analogous to Assumption~\ref{asm:subExpErrors}.

\begin{assumption}\label{asm:altResid}
Suppose $\Vert \eta_{v \setminus \pr(v), i} \Vert_{\psi_1} < M <\infty$.
\end{assumption}

\begin{theorem}\label{thm:power}
Fix an ordering $\theta \not \in \Theta(G)$ and $v \in V$. Suppose Assumptions~\ref{asm:designCondition}, \ref{asm:testFuncWellCondition}, and \ref{asm:altResid} hold and $(pK)^2/n \rightarrow 0$. When $\lambda_{\min,Z}^{-1}\max(h_{\max,3}, h_{\max, 2})pK\log^4(n)/\sqrt{n} = o(\tau^\star)$, then an $\alpha$-level test for $H_{0,\theta, v}$ will be rejected with probability $1-o(1)$ for any $\alpha \in (0,1)$.

\end{theorem}

 }

\section{Numerical experiments}\label{sec:numerics}
\label{sec:simulations}

In Table~\ref{tab:comp} we compare the proposed goodness-of-fit test for a single linear regression to the procedures of \citet{sen2014testing} (denoted in the table as ``S''), RP Test \citet{shah2018goodness} (``RO'' for OLS version and ``RL'' for Lasso version), MINT \citet{berett2019nonparametric} (``M''), and higher-order least squares \citet{schultheiss2021assessing} (``H''). \updatedText{For our procedure, we use $\mathcal{H} = \{y^2, y^3, \mathrm{sign}(y) \times \vert y \vert^{2.5}, \sin(y), \cos(y), \sin(2y), \cos(2y) \}$ where each of the non-trigonometric functions are standardized to have mean 0 and unit variance.} 

For each replication, we construct a graph by starting with edges $v \rightarrow v+1$ for all $v < p$; for any $u < v - 1$, $u \rightarrow v$ is added with probability $1/2$. \updatedText{Thus, the graphs we consider are dense as our procedure does not require any sparsity assumptions.} For each edge, we sample a linear coefficient uniformly from $\pm (.1, .95)$. We consider settings where all error terms are either uniform, lognormal, gamma, Weibull, or Laplace random variables and a setting---called mixed---where the distribution of each variable in the SEM is randomly selected. We set $p = 10, 20, 45$ and $n \approx p^{5/4}$ or $p^2$. The data is standardized before applying the goodness-of-fit tests.  For each setting of $p, n$, and error distribution, we complete 500 replications.

Table~\ref{tab:compTime} shows the computation time for each procedure; the average time is similar across error distribution so we aggregate the results. RP Test is typically the slowest, followed by \citet{sen2014testing} and MINT. \updatedText{Our proposed procedure is 40-9000x faster than these procedures, and these procedures would be prohibitively slow for computing confidence sets of causal orderings. HOLS is the fastest of the existing procedures and is actually slightly faster than our procedure when $n = p^2$ and $p = 20,45$.}

In addition to the stark computational benefits, the proposed test also performs well statistically. In Table~\ref{tab:compSizePower}, we compare the empirical size and power for tests with nominal level $\alpha = .1$. \updatedText{Recall that the data is generated with true causal ordering $1, 2, \ldots, p$. Thus,} to measure size we test the (true) $H_{0}: \pa(p) \subseteq \{1, \ldots, p-1\} \subseteq \nd(p)$; to measure power we test the (false) $H_{0}: \pa(1) \subseteq \{2, 3, \ldots, p\} \subseteq \nd(1)$. In each setting, if a procedure exhibits empirical size which is significantly larger than $.1$ (corrected for multiple testing), then we do not display the empirical power. 

Our proposed procedure controls the size within the nominal rate in every setting. It also has the highest (or comparable) power in many settings where the errors are skewed but tends to do less well when the errors are symmetric. \updatedText{In Appendix D, we also show results for other choices of test functions. Using only trigonometric functions results in much larger power when errors are symmetric, but they do not perform as well when considering all settings.} \citet{sen2014testing} tends to exceed the nominal size for skewed distributions and performs worse when $n = p^2$ as opposed to $n \approx p^{5/4}$. The OLS variant of \citet{shah2018goodness} performs well across a variety of settings but does not control the size when the errors are uniform and $n = p^2$; the Lasso variant generally fails to control the type I error when $p = 45$ as the linear model is not sparse. MINT exhibits an inflated size when the errors are heavy tailed, but generally has good power in settings where the size is controlled. Finally, HOLS controls empirical size across a wide variety of settings---except the lognormal errors. When $n = p^2$, HOLS tends to have good power when the errors are symmetric, but suffers when the errors are skewed.

\begin{table}[p]
\centering
\caption{\label{tab:comp}Comparison of goodness-of-fit tests ``W'' is proposed procedure, ``S'' \citet{sen2014testing}, ``RO'' and ``RL'' are the OLS and lasso variants of \citet{shah2018goodness}, ``M'' \citet{berett2019nonparametric}, and ``H'' \citet{schultheiss2021assessing}.}
\centering
\footnotesize 
\subcaption{\label{tab:compTime}Average time (sec) for each goodness of fit test}\updatedText{
\begin{tabular}{|c|cccccc|cccccc|}
\hline
  \multicolumn{1}{|c|}{} & \multicolumn{6}{c|}{$n\approx p^{5/4}$} & \multicolumn{6}{c|}{$n = p^2$} \\
 \hline
 $p$ & W & S & RO & RL & M & H & W & S & RO & RL & M & H \\  
  \hline
10 & .001 & .100 & 2 & 2 & .054 & .004 & .003 & .279 & 19 & 10 & .252 & .005 \\ 
  20 & .003 & .172 & 9 & 5 & .101 & .007 & .022 & 3 & 195 & 107 & 3 & .011 \\ 
  45 & .011 & .661 & 73 & 24 & .447 & .018 & .596 & 94 & 3708 & 1403 & 113 & .115 \\ \hline
\end{tabular}}
\subcaption{\label{tab:compSizePower}Empirical size and power of $\alpha = .1$ tests. Size: bolded values exceed the nominal $\alpha = .1$ by 2 standard deviations. Power: bolded values indicate the procedure has the largest power (or is within 2 standard deviations) for that particular setting. Each proportion in the table has been multiplied by 100. If the procedure's empirical size is significantly above the nominal level, the empirical power is not displayed.}
\updatedText{
\begin{tabular}{|c|c|c|cccccc|cccccc|}
 \hline
& &  & \multicolumn{6}{c|}{Size} & \multicolumn{6}{c|}{Power} \\
 & Dist & $p$ & W & S & RO & RL & M & H  & W & S & RO & RL & M & H \\ 
\hline\multirow{18}{*}{\rotatebox[origin=c]{90}{$n \approx p^{5/4}$}} & \multirow{3}{*}{\rotatebox[origin=c]{0}{\bftab gamma}} & 10 & 4 & 5 & 9 & 8 & 11 & 0 & 14 & 7 & 18 & \bftab 25 & 14 & 0 \\ 
   &  & 20 & 7 & 9 & 9 & 8 & 11 & 0 & 26 & 16 & 24 & \bftab 35 & 20 & 0 \\ 
   &  & 45 & 9 & \bftab 19 & 4 & \bftab 20 & \bftab 19 & 1 & \bftab 44 &  & 33 &  &  & 1 \\ 
  \cline{2-15} & \multirow{3}{*}{\rotatebox[origin=c]{0}{\bftab laplace}} & 10 & 5 & 4 & 9 & 8 & 8 & 0 & 7 & 6 & \bftab 13 & 9 & \bftab 12 & 0 \\ 
   &  & 20 & 8 & 9 & 8 & 8 & 11 & 0 & \bftab 11 & 10 & \bftab 12 & 10 & \bftab 13 & 0 \\ 
   &  & 45 & 11 & \bftab 14 & 4 & \bftab 16 & 11 & 1 & 12 &  & 10 &  & \bftab 16 & 1 \\ 
  \cline{2-15} & \multirow{3}{*}{\rotatebox[origin=c]{0}{\bftab lognormal}} & 10 & 8 & 7 & 9 & 8 & 9 & 0 & \bftab 21 & 9 & \bftab 25 & \bftab 24 & 20 & 0 \\ 
   &  & 20 & 10 & 13 & 4 & 8 & \bftab 22 & 0 & \bftab 44 & 18 & 36 & \bftab 41 &  & 0 \\ 
   &  & 45 & 12 & \bftab 20 & 2 & \bftab 24 & \bftab 51 & 4 & \bftab 71 &  & 49 &  &  & 6 \\ 
  \cline{2-15} & \multirow{3}{*}{\rotatebox[origin=c]{0}{\bftab mixed}} & 10 & 8 & 6 & 11 & 7 & 9 & 0 & 11 & 6 & 15 & \bftab 19 & 13 & 0 \\ 
   &  & 20 & 9 & 10 & 6 & 7 & 11 & 0 & 25 & 14 & 24 & \bftab 37 & 19 & 0 \\ 
   &  & 45 & 10 & \bftab 16 & 6 & \bftab 18 & \bftab 19 & 1 & \bftab 39 &  & \bftab 36 &  &  & 4 \\ 
  \cline{2-15} & \multirow{3}{*}{\rotatebox[origin=c]{0}{\bftab uniform}} & 10 & 5 & 3 & 10 & 9 & 8 & 0 & 5 & 3 & 9 & \bftab 14 & 6 & 0 \\ 
   &  & 20 & 8 & 7 & 10 & 7 & 8 & 0 & 8 & 6 & 8 & \bftab 12 & 5 & 0 \\ 
   &  & 45 & 7 & 11 & 10 & \bftab 20 & 5 & 1 & 8 & \bftab 11 & \bftab 10 &  & 6 & 1 \\ 
  \cline{2-15} & \multirow{3}{*}{\rotatebox[origin=c]{0}{\bftab weibull}} & 10 & 5 & 5 & 10 & 11 & 10 & 0 & 21 & 8 & 24 & \bftab 30 & 22 & 0 \\ 
   &  & 20 & 10 & 13 & 5 & 7 & \bftab 15 & 0 & 43 & 21 & 39 & \bftab 48 &  & 0 \\ 
   &  & 45 & 12 & \bftab 17 & 2 & \bftab 17 & \bftab 26 & 1 & \bftab 66 &  & 47 &  &  & 2 \\ 
   \hline
  \hline\multirow{18}{*}{\rotatebox[origin=c]{90}{$n = p^{2}$}} & \multirow{3}{*}{\rotatebox[origin=c]{0}{\bftab gamma}} & 10 & 10 & \bftab 16 & 1 & 8 & 11 & 9 & 88 &  & 81 & 79 & \bftab 92 & 46 \\ 
   &  & 20 & 9 & \bftab 17 & 0 & 6 & 6 & 12 & \bftab 99 &  & \bftab 99 & \bftab 100 & \bftab 100 & 90 \\ 
   &  & 45 & 10 & \bftab 14 & 0 & \bftab 32 & 7 & 14 & \bftab 100 &  & \bftab 100 &  & \bftab 100 & \bftab 100 \\ 
  \cline{2-15} & \multirow{3}{*}{\rotatebox[origin=c]{0}{\bftab laplace}} & 10 & 11 & 13 & 4 & 11 & 12 & 9 & 23 & 30 & 18 & 18 & \bftab 50 & 32 \\ 
   &  & 20 & 10 & 10 & 0 & 9 & \bftab 34 & 9 & 36 & 44 & 31 & 35 &  & \bftab 65 \\ 
   &  & 45 & 9 & 10 & 0 & \bftab 28 & \bftab 52 & 10 & 41 & 55 & 66 &  &  & \bftab 96 \\ 
  \cline{2-15} & \multirow{3}{*}{\rotatebox[origin=c]{0}{\bftab lognormal}} & 10 & 12 & \bftab 26 & 1 & 7 & \bftab 44 & \bftab 14 & \bftab 96 &  & 88 & 80 &  &  \\ 
   &  & 20 & 8 & \bftab 18 & 0 & 7 & \bftab 74 & \bftab 17 & \bftab 100 &  & \bftab 100 & 99 &  &  \\ 
   &  & 45 & 11 & \bftab 18 & 0 & \bftab 32 & \bftab 98 & \bftab 19 & \bftab 100 &  & \bftab 100 &  &  &  \\ 
  \cline{2-15} & \multirow{3}{*}{\rotatebox[origin=c]{0}{\bftab mixed}} & 10 & 8 & \bftab 20 & 4 & 10 & \bftab 14 & 11 & \bftab 80 &  & \bftab 77 & 75 &  & 61 \\ 
   &  & 20 & 11 & \bftab 19 & 3 & 8 & \bftab 17 & 10 & \bftab 97 &  & \bftab 97 & \bftab 98 &  & \bftab 97 \\ 
   &  & 45 & 10 & 11 & 7 & \bftab 34 & \bftab 30 & 13 & \bftab 100 & 98 & 99 &  &  & \bftab 100 \\ 
  \cline{2-15} & \multirow{3}{*}{\rotatebox[origin=c]{0}{\bftab uniform}} & 10 & 11 & 10 & 12 & 10 & 1 & 6 & 5 & 15 & 20 & \bftab 28 & 6 & 21 \\ 
   &  & 20 & 9 & 11 & \bftab 18 & 10 & 0 & 8 & 8 & 25 &  & \bftab 64 & 0 & \bftab 62 \\ 
   &  & 45 & 12 & 12 & \bftab 33 & \bftab 28 & 0 & 8 & 35 & 28 &  &  & 0 & \bftab 93 \\ 
  \cline{2-15} & \multirow{3}{*}{\rotatebox[origin=c]{0}{\bftab weibull}} & 10 & 11 & \bftab 21 & 2 & 10 & \bftab 31 & 11 & \bftab 95 &  & 91 & 83 &  & 75 \\ 
   &  & 20 & 9 & \bftab 19 & 0 & 6 & \bftab 27 & 12 & \bftab 100 &  & \bftab 100 & \bftab 100 &  & 99 \\ 
   &  & 45 & 8 & \bftab 18 & 0 & \bftab 35 & \bftab 38 & \bftab 15 & \bftab 100 &  & \bftab 100 &  &  &  \\ 
   \hline
\end{tabular}
}

\end{table}

\subsection{Confidence sets}\label{sec:numerics-ConfSets} 
Fig.~\ref{fig:confSet} shows results for \updatedText{400} replicates when constructing  90\% confidence sets using Alg.~\ref{alg:branchAndBound} for linear SEMs. We fix $p = 10$ and let $n = 500, 1000, 2500, 5000$. We generate random graphs and data as before with two changes: $u \rightarrow v$ for all $u < v - 1$ is included with probability $1/3$ and each linear coefficient is drawn from $\beta_{u,v} = z_{u,v} \times g_{u,v}$ where $z_{u,v}$ is a Rademacher random variable and $g_{u,v} \sim \text{Gamma}(n^{-1/10},1)$.

The upper left panel shows the proportion of times that the true causal ordering is recovered by DirectLiNGAM~\citep{shimizu2011direct} \updatedText{implemented in the \texttt{causalXtreme} package by \citet{gnecco2023causalXtreme}}. By construction, the average edge weight decreases with $n$ so the causal ordering is not consistently estimated as $n$ increases. Nonetheless, the bottom left panel shows that the empirical coverage of the confidence sets are all very close to the nominal rate of $.9$. In addition, the upper middle panel shows that the confidence sets are still increasingly informative in that the proportion of all $10!$ possible orderings which are included in $\hat \Theta(\mathbf{Y}, .1)$ decreases. The proposed procedure is more powerful (i.e., returns a smaller confidence set) for the skewed distributions; however, even with symmetric errors, the confidence sets still only contains roughly $ 2\%$ of all orderings when $n = 5000$. The bottom middle panel shows the proportion of all pairwise ancestral relationships which are certified into $\mathcal{\hat A}_\cap$. Again, despite inconsistent estimation of the causal ordering, the proportion of ancestral relations which are recovered increases with $n$. 

The top right panel shows the time required to calculate the confidence set when $p = 10$. The computation time is not monotonic with $n$ because a larger sample size requires more computation for each considered ordering; however, this may be offset by increased power to reject incorrect orderings so the branch and bound procedure considers fewer orderings.
Finally, in the bottom right panel, we show computational feasibility for larger $p$ by displaying the median computation time (sec$\times 1000$) for $10$ replicates with $n = 10000$. We consider gamma errors for $p = 10, \ldots, 20$. We draw random graphs and data as before, except the linear coefficients are selected uniformly from $(-1,1)$. Although the computational time increases rapidly, the procedure is still feasible for $p = 20$.

\begin{figure}[t] 
\centering
\includegraphics[scale = .7]{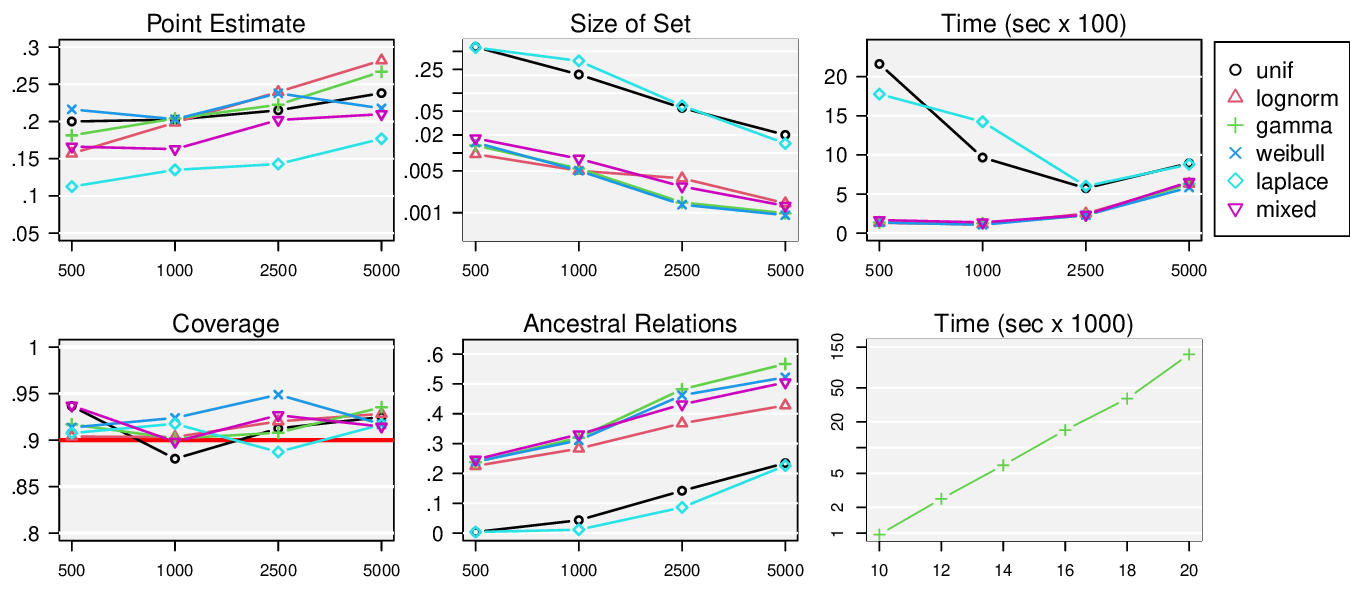}
\caption{\label{fig:confSet} \updatedText{Results for linear SEMs.
\emph{Top Left}: \% of times the point estimate of ordering is correct.
\emph{Top Middle}: Avg proportion of all possible orderings included in $\hat \Theta(\mathbf{Y}, \alpha = .9)$.
\emph{Top Right}: Avg time (sec$\times 100$) for each confidence set.
\emph{Bottom Left}: Coverage for $\hat \Theta(\mathbf{Y}, \alpha = .9)$.
\emph{Bottom Middle}:  Avg proportion of all ancestral relations which are included in $\mathcal{\hat A}_\cap$.
\emph{Bottom Right}: Median time (sec$\times 1000$) for each confidence set with $n = 10,000$ and $p$ varying.}
}
\end{figure}

\updatedText{Fig.~\ref{fig:confSetNL} shows results for 400 replicates when constructing  90\% confidence sets for non-linear SEMs. We let $p = 7$, $n = 2500, 5000, 7500, 1000$ and consider gamma or Laplace errors. We let $f_v = \sum_{u \in \pa(v)}f_{vu}(Y_u)$ and consider two settings: (1) $f_{vu}$ is a 5th degree polynomial and (2) a much more difficult setting from ~\citet{buhlmann2014cam} where $f_{vu}$ is a sigmoid. To model $f_v$ in the polynomial setting we use a $K=2, \ldots,5$ degree polynomial basis. In the sigmoid setting, we use b-splines with $K = 20,40,60$ degrees of freedom and $f_v \not \in \Phi^{(v)}$ for any $K$. Additional details are given in the appendix. 

In the polynomial setting, when $K$ is small and the bias is large, almost all orderings are rejected and coverage is poor. However, as expected, when $K =4, 5$ and the approximation error is small or vanishes, the linear results generalize directly to this non-linear setting. The sigmoid setting shows a qualitatively similar result. When $K= 20$, the approximation error is large and the confidence sets undercover when $n$ is large. When $K = 40, 60$, the procedure attains nominal coverage; however, because $pK$---the number of regressors---is large, we do not have good power to reject incorrect orderings. Thus, the resulting confidence set is less informative.}

\begin{figure}[t] 
\centering
\includegraphics[scale = .65]{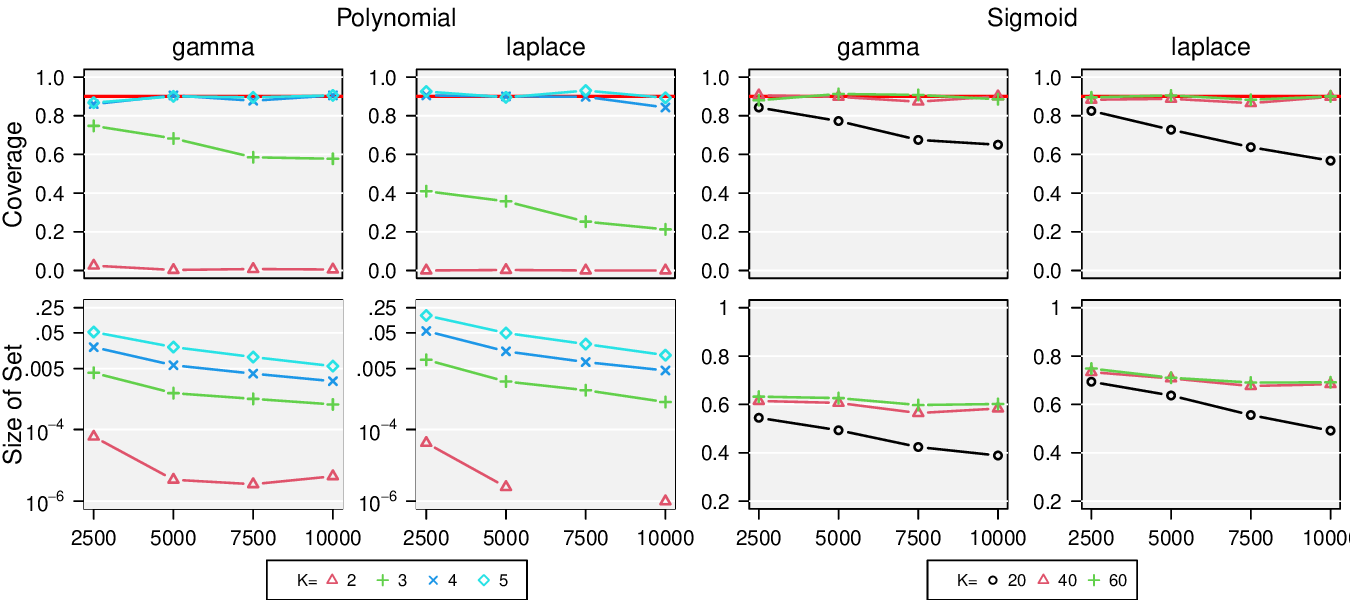}
\caption{\label{fig:confSetNL} \updatedText{Results for non-linear SEMs; Horizontal axis denotes $n$.
\emph{Left half:} true functions are $5$th degree polynomial with gamma and Laplace errors. 
\emph{Right half:} true functions are sigmoids. 
\emph{Top half:} proportion of times the true ordering is contained in the confidence set.
\emph{Bottom half:} proportion of all $7!$ orderings which are included in the confidence set. If the proportion is $0$, then the point is omitted.}}
\end{figure}

\subsection{Confidence intervals with model selection uncertainty}
We now consider CIs for causal effects which account for model uncertainty. We draw random graphs and data under the same setup in Section~\ref{sec:numerics-ConfSets} with $p = 10$, except we draw the magnitude of the linear coefficients from $ \text{Gamma}(1/2,1)$, so the ``signal strength'' is fixed instead of decreasing with $n$ as before. We compute 80\% CIs for the total effect of $Y_4$ onto $Y_7$ using Alg.~\ref{alg:TotalEffect}. We contrast this with a na\"{i}ve procedure that computes CIs using only the adjustment set implied by $\hat \theta$, the causal ordering estimated by DirectLiNGAM. Specifically, if $\hat \theta(4) < \hat \theta(7)$, then the na\"{i}ve procedure uses an adjustment set of ${\rm pr}_{\hat \theta}(4)$ and returns the typical 80\% CI for the regression coefficient of $Y_4$. 
When $\hat \theta(4) > \hat \theta(7)$, $7$ is estimated to be an non-descendant of $4$, so the returned CI is $0$. 

Table~\ref{tab:ci-causal-effects} compares the empirical coverage and lengths of the proposed CIs and the na\"{i}ve CIs. The proposed CIs have empirical coverage above the nominal rate. The ``Adj'' column shows the proportion of times the point estimate yields a valid adjustment set for the total effect. Given that these values are much smaller than $1$, it is unsurprising that the na\"{i}ve CIs cover well below the nominal rate. Under gamma errors---when the model uncertainty is typically smaller---the proposed CIs have median length that may be only twice as large as the naive procedure. However, in the Laplace setting where model uncertainty is larger, the median length is roughly 5 times longer.

\begin{table}[bt] 
\footnotesize 
\centering\caption{\label{tab:ci-causal-effects}Comparison of 80\% CIs for the total effect of $Y_4$ onto $Y_7$. ``MU'' denotes CIs which account for model uncertainty; ``NV'' denotes na\"{i}ve CIs. Each setting has 400 replicates.}
\updatedText{\begin{tabular}{|c|cc|cc|cc|c|cc|cc|cc|c|}
  \hline
  \multirow{3}{*}{n} &\multicolumn{7}{|c|}{Gamma} &\multicolumn{7}{|c|}{Laplace}\\ \cline{2-15}
 & \multicolumn{2}{|c|}{Coverage} & \multicolumn{2}{|c|}{Avg Len} & \multicolumn{2}{|c|}{Med Len} & \multirow{2}{*}{Adj} & \multicolumn{2}{|c|}{Coverage} & \multicolumn{2}{|c|}{Avg Len} & \multicolumn{2}{|c|}{Med Len} & \multirow{2}{*}{Adj} \\ 
 &  MU & NV & MU & NV & MU & NV & & MU & NV & MU & NV & MU & NV &  \\ 
  \hline
250 & .97 & .62 & .68 & .18 & .50 & .18 & .37 & .99 & .56 & 1.1 & .19 & .71 & .18 & .26 \\ 
  500 & .96 & .65 & .46 & .14 & .29 & .12 & .44 & .99 & .63 & .91 & .13 & .57 & .12 & .33 \\ 
  1000 & .95 & .68 & .30 & .10 & .19 & .09 & .55 & .98 & .69 & .69 & .10 & .45 & .09 & .46 \\ 
  2000 & .93 & .69 & .17 & .07 & .11 & .06 & .53 & .98 & .68 & .49 & .07 & .30 & .06 & .53 \\ 
   \hline
\end{tabular}}
\end{table}

\subsection{Data example}
\label{sec:numerics-data-analysis}
\updatedText{We now analyze data consisting of the daily value-weighted average stock returns for 12 different industry portfolios from 2019 to 2023 ($n = 1258$)\footnote{Data available at: \url{https://mba.tuck.dartmouth.edu/pages/faculty/ken.french/data_library.html}. Accessed: Mar 2024}. All stocks from the NYSE, AMEX, and NASDAQ are placed into one of 12 different industries.

Using DirectLiNGAM, the estimated causal ordering is: Utilities, Business Equipment, Healthcare, Finance, Telecomm, Consumer Non-durables, Manufacturing, Other, Energy, Wholesale, Consumer Durables, Chemicals. The 95\% confidence set of causal orderings returned for the data contains approximately $1/45000$ of the $12!$ total orderings. Notably, Utilities is first in every non-rejected ordering so $ \mathcal{\hat A}_\cap = \{(\text{Utilities}, v) \, : \, v \neq \text{Utilities}\}$, which agrees with the point estimate. At first glance, this may seem odd; however, Utilities are often viewed as a proxy for bonds and directly capture the effect of changing interest rates and market uncertainty. From 2020 to 2022, the performance of American stock markets was largely driven by uncertainty around COVID-19 and  federal monetary interventions. Thus, it makes sense that Utilites are estimated to be an ancestor of all other industries.

Nonetheless, the other orderings in $\hat \Theta(\mathbf{Y}, \alpha = .05)$ have causal implications that differ from the point estimate. The right panel in Fig.~\ref{fig:barycenterDist} summarizes the pairwise ancestral relationships for all non-rejected orderings in the 95\% confidence set, where darker shades of the $(u,v)$ element indicates that $v$ precedes $u$ in a larger proportion of non-rejected orderings. We also compute the Fr\'{e}chet mean of $\hat \Theta(\mathbf{Y}, \alpha = .05)$ using a distance between two orderings which counts the number of implied ancestral relations present in one ordering but not the other; i.e.,
$
d(\theta, \theta') = \left\vert\{(u,v) \, : \, u \in \text{pr}_{\theta}(v) \text{ and } u \not\in \text{pr}_{\theta'}(v)\}\right\vert.    
$
The mean ordering is: Utilities, Energy, Wholesale, Consumer Durables, Finance, Health, Consumer Non-durables, Telecomm, Chemicals, Business Equipment, Manufacturing, and Other. The left panel of Fig.~\ref{fig:barycenterDist} shows the distance for all orderings in $\hat \Theta(\mathbf{Y}, \alpha = .05)$ from the Fr\'{e}chet mean as well as to the point estimate; as expected, the distances to the mean are generally smaller than the distances to the point estimate.

Finally, the naive 90\% CI for the total effect of Energy onto Consumer Durables is $(-.096, 0.021)$. Furthermore, since Manufacturing precedes Chemicals in the estimated ordering, we would naively conclude that the total effect of Chemicals onto Manufacturing is $0$. In contrast, when accounting for model uncertainty, we produce a 90\% CI for the total effect of Energy onto Consumer Durables of $(-.134, .375)$ and a CI for the effect of Chemicals onto Manufacturing of $\{0\} \cup (.268, .413) \cup (.980, 1.093)$. 

\begin{figure}[tb] 
\centering
\includegraphics[scale = .6]{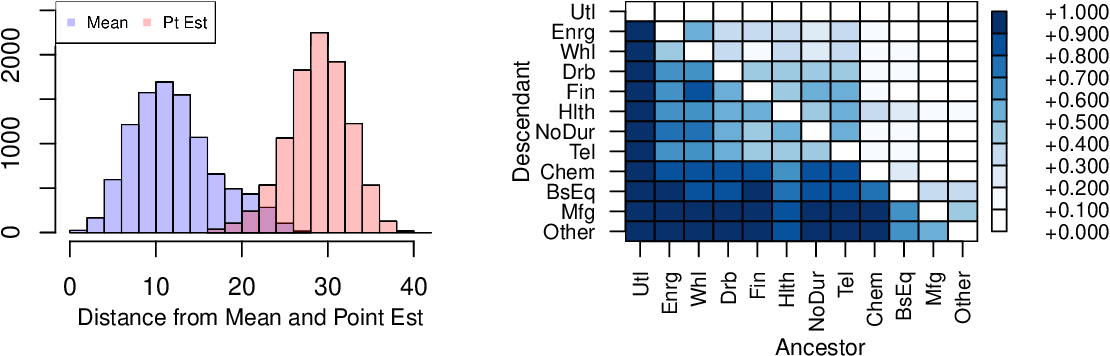}
\caption{\label{fig:barycenterDist}\updatedText{Left: distance from Fr\'{e}chet mean and point estimate to all other orderings in $\hat \Theta(\mathbf{Y}, .05)$. Right: A darker shade of the $(u,v)$ cell indicates that $v$ precedes $u$ in a larger proportion of orderings in $\hat \Theta(\mathbf{Y}, .05)$.}}
\end{figure}
}

\section{Discussion}
We have proposed a procedure for quantifying uncertainty when estimating the causal structure. Our goodness-of-fit testing framework returns a confidence set that may be informative about the validity of the posited identification assumptions, as well as which causal orderings the identifying assumptions cannot rule out. The confidence set can also be used to compute various other objects of interest. Notably, this includes confidence intervals for causal effects that also account for model uncertainty and a sub/superset of ancestral relations. 
Our specific goodness-of-fit test is designed for models in which residuals are independent of regressors under the null hypothesis. 
Future work could extend this procedure to settings where causal sufficiency may not hold~\citep[see, e.g.,][]{wang2020causal} \updatedText{ or models where feedback cycles are permitted}.

While we believe the proposed approach has many desirable characteristics, there are, of course, also are a few previously mentioned disadvantages which could be addressed in future work.
Most notably, the primary disadvantage of our approach is the computational expense required to test the set of all possible orderings. While the proposed branch and bound procedure can handle medium-sized problems and could scale even larger with a careful parallel implementation, the approach is unlikely to scale too far beyond $p = 20$. 
Nonetheless, we believe that this approach is a useful first step. We believe that this initial method is valuable for practitioners and hope it also spurs further research in statistical methodology for quantifying model uncertainty in estimating causal structures.

\bibliographystyle{apalike}
\bibliography{topOrdering}

\bigskip

\newpage

\appendix
\section{Additional data analysis details}
We show two additional plots for the data analysis in Section~\ref{sec:numerics-data-analysis}. In particular, the left hand plot shows the length of the ``honest'' 90\% confidence interval for the total effect between each pair of nodes. The plot on the right hand side shows the number of adjustment sets considered when constructing the confidence interval. Generally, if a total effect considers more adjustment sets this results in a wider confidence interval. 

Because Utilities appears first in every ordering in the 95\% confidence set of causal orderings, when estimating the total effect of Utilities onto any other industry, only one adjustment set---the empty set---is considered. Thus, the resulting confidence intervals are quite short (all are less than .25). In addition, because Utilities always precedes the other industries, the estimated total effect of any other industry onto Utilities is $0$.

\begin{figure}[h]
    \centering
    \includegraphics[scale = .8]{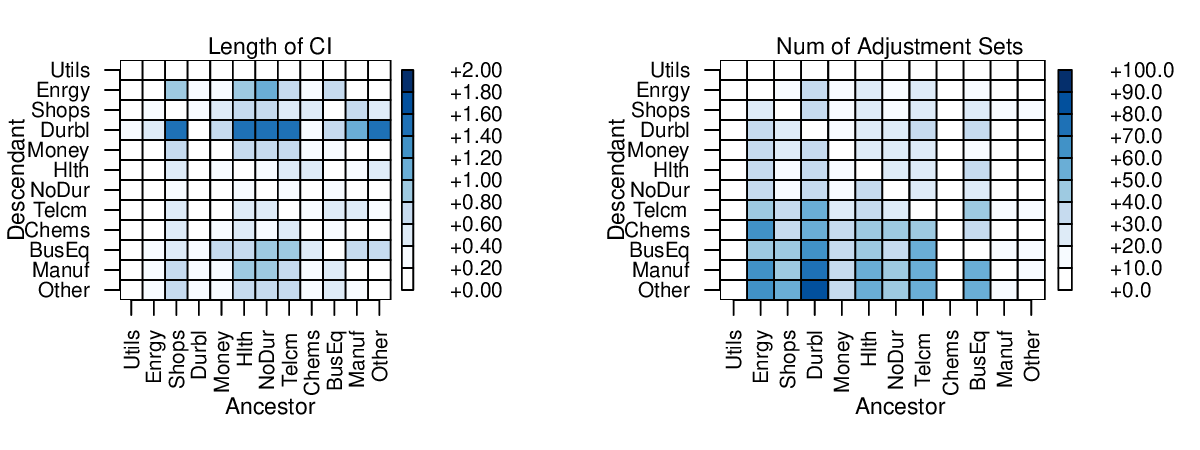}
    \caption{\label{fig:supp_numAdjSets}Left: Length of 90\% confidence interval for the total effect between each pair of nodes. Right: Number of adjustment sets considered when computing 90\% confidence interval for the total effect between each pair of nodes.}
    \label{fig:enter-label}
\end{figure}

\newpage

\section{Non-linear SEMs}
We describe the details for the simulations with non-linear SEMs. 
In particular, we generate a random graph with $p = 7$ by always including the edge $u \rightarrow u + 1$ for $u = 1, \ldots, 6$ and including the edge $u \rightarrow v$ with probability $\pi$ if $u < v -1$. In the settings below, we consider $\pi = 1/3$. Each setting is replicated 400 times.  To generate the data, we consider two types of functions. In the first setting, we consider non-linear functions which lie in a well-specified parametric class. Specifically, we generate data from the following simulation setting
\[Y_v = f_v(Y_{\pa(v)}) + \varepsilon_v = \frac{1}{\vert \pa(v)\vert }\sum_{u \in pa(v)}f_{v,u}(Y_u) + \varepsilon_v.\]
We set $f_{v,u}$ to be a scaled and centered version of $g_{v,u}$, where $g_{v,u}(Y_u) = \sum_{w=1}^5 \frac{b_w}{w!}Y_u^w$ with $b_w \sim N(0,1)$. We approximate the unknown function with a polynomial basis so that the misspecification vanishes when the basis is ``large enough.'' We also consider a very challenging setting where the misspecification decreases with the size of the approximating basis, but never actually disappears. Specifically, we follow the setup used in \citet{buhlmann2014cam} where $Y_v$ is again an additive function of univariate non-linear functions of it's parents
\[Y_v = f_v(Y_{\pa(v)}) + \varepsilon_v = \sum_{u \in pa(v)}f_{v,u}(Y_u) + \varepsilon_v.\]
However, in this setting each $f_{v,u}$ we use the sigmoid type functions:
\[f_{v,u}(Y_u) = a \frac{b(Y_u + c)}{1 + \vert b(Y_u + c) \vert}\]
where $a \sim 1 + \text{Exp}(4)$, $b \sim \pm \text{Unif}(.5, 2)$ and $c \sim \text{Unif}(-2, 2)$. 

Following~\citet{buhlmann2014cam} in both the polynomial and sigmoid setting: if $v$ is a root, then $\var(\varepsilon_v) \sim \text{Unif}(1, \sqrt{2})$, if $v$ is not a root, then $\var(\varepsilon_v) \sim \text{Unif}(1/5, \sqrt{2}/5)$. In a slight departure from \citet{buhlmann2014cam} who consider Gaussian errors, we let $\varepsilon_v$ be either a centered Gamma or Laplace random variable, as opposed to a Gaussian.

For each ordering $\theta$, we estimate $\hat f_v$ by regressing $Y_v$ onto $\Phi= (\phi_{u,k}(Y_u) :  u \in \text{pr}_\theta(v), k \in [K])$. In the polynomial setting, $(\phi_{u,k} : k \in [K])$ is the polynomial basis of $Y_u$ with degree $K = 2, \ldots, 5$. In the sigmoid setting, $(\phi_{u,k} : k \in [K])$ is a univariate b-spline basis of $Y_u$ with $K = 20, 40, 60$ degrees of freedom.

We use the test functions which are the union of the 
\begin{itemize}
    \item $y^2, y^3, \text{sign}(y)\vert y\vert ^{2.5}$
    \item $\bigcup_{j =1 }^{10} \{\sin(y\omega_j), \cos(y\omega_j)\}$ where $\omega_j \sim N(0,1)$
    \item $\sin(y^2),\cos(y^2),\sin(y)y, \cos(y)y, \sin(y^2)y, \cos(y^2)y, \tanh(y)$
\end{itemize}

\FloatBarrier
\newpage

\section{Example of naive strategy}
The following example examines the naive strategy of regressing a parent onto a child and directly testing independence of the residuals and regressors. Specifically, Table~\ref{tab:ex1} shows that the naive procedure does not control the Type I error rate.

\begin{example}\label{exm:nuissance}	
Suppose $Y = (Y_1, Y_2)$ is generated as
\begin{align*}
Y_1 &\gets \varepsilon_1, &   \varepsilon_1 \sim \rm{gamma}(1, 1) - 1, \\
Y_2 &\gets .5 \times Y_1 + \varepsilon_2, &  \varepsilon_2 \sim \rm{gamma}(1, 1) - 1,
\end{align*}
so that the true graph is $Y_1 \rightarrow Y_2$. We consider two competing hypotheses which suppose $H_0: Y_p \rightarrow Y_c$ for $(p = 1, c = 2)$ or $(p = 2, c = 1)$, and we consider two testing approaches. 

\vspace{.2cm}
\noindent \textbf{Direct approach}: Regress $\mathbf{Y}_c$ onto $\mathbf{Y}_p$ to form the residuals $\bm{\hat \eta}_{c\setminus p}$, and subsequently test $\bm{\hat \eta}_{c\setminus p} \independent \bm{Y}_p$ using dHSIC~\citep{pfister2018kernel} or $\tau^\star$~\citep{bergsma2014consistent}. 

\vspace{.2cm}
\noindent \textbf{Sample splitting}: Split the data into a training and test set. Using the training set, regress $\mathbf{Y}^{(train)}_c$ onto $\mathbf{Y}^{(train)}_p$ and estimate $\hat \beta_{c,p}$. Using the test set, form the residuals $\bm{\hat \eta}_{c\setminus p} = \mathbf{Y}^{(test)}_p - \hat \beta_{c,p} \mathbf{Y}^{(test)}_c$, and subsequently test $\bm{\hat \eta}_{c \setminus p} \independent \bm{Y}^{(test)}_p$.

\vspace{.3cm}
\noindent

Table~\ref{tab:ex1} shows the proportion of hypothesis tests that are rejected when the null is true ($H_0: Y_1 \rightarrow Y_2$) and when the null is false ($H_0: Y_2 \rightarrow Y_1$) for sample sizes of $n = 100, 1000$. We see that the naive tests with dHSIC or $\tau^\star$ (both the direct approach or sample splitting) do not control the Type I error rate. Under the direct approach, $\tau^\star$ has a Type I error rate close to the nominal rate when $n = 100$, but when $n = 1000$ and the test has more power, the Type I error rate is not controlled.

\end{example}

\begin{table}[ht] \footnotesize
\centering
\caption{\label{tab:ex1}\textbf{Cols 2-6}: empirical size of a level $\alpha = .1$ test. \textbf{Cols 7-12}: power of a level $\alpha = .1$ test. `Direct' indicates the direct approach, `Split' indicates the sample splitting approach, and `Prop' indicates the proposed approach. `D' indicates using the dHSIC statistic, $\tau$ indicates using the $\tau^\star$ statistic.}
\begin{tabular}{|r|cc|cc|c|cc|cc|c|}
  \hline
    & \multicolumn{5}{|c|}{$H_0: Y_1 \rightarrow Y_2$} &   \multicolumn{5}{c|}{$H_0: Y_2 \rightarrow Y_1$}\\ \hline 
  & \multicolumn{2}{|c|}{Direct} & \multicolumn{2}{c|}{Split} & Prop &   \multicolumn{2}{c}{Direct} & \multicolumn{2}{|c|}{Split} & Prop\\ \hline
 & D & $\tau$ & D & $\tau$ &  & D & $\tau$ & D & $\tau$ & \\ 
  \hline
$n=100$ & 0.15 & 0.11 & 0.23 & 0.32 & 0.10 & 1.00 & 1.00 & 0.97 & 0.94 & 0.64 \\ 
  $n=1000$ & 0.18 & 0.20 & 0.24 & 0.34 & 0.09 & 1.00 & 1.00 & 1.00 & 1.00 & 0.97 \\ 
   \hline
\end{tabular}
\end{table}

\newpage

\section{Calibrating the test with a limiting Gaussian}
\label{sec:appendix-asymptotic-calibrate}
We show that the asymptotic normal distribution of~\eqref{eq:oracleStat} may provide a poor approximation when $\varepsilon_v$ is far from Gaussian. 
Specifically, we consider log-normal data with $n = 500$ and $p = 5, 10, \ldots, 25$ and record the Type I error rate for a nominally $\alpha = .05$ test calibrated using the limiting Gaussian with a plug-in estimate of the variance. We see that this test performs quite poorly when compared to the tests calibrated by the oracle and proposed residual bootstrap distributions. 

\begin{table}[thb] 
\centering\footnotesize
\caption{\label{tab:whyNotAsymp}The empirical size for 2000 replications of a nominally $\alpha = .05$ tests with $n = 500$ when calibrating with the asymptotic distribution, the oracle distribution, and the proposed residual bootstrap distribution. All values are multiplied by 1000. The confidence interval with 2 standard errors is given in the parenthesis.}
\begin{tabular}{|l|c|c|c|c|c|}
  \hline
  & $p= 5$ & 10 & 15 & 20 & 25 \\ 
  \hline
Asymp & 72 (60, 83) & 84 (71, 96) & 97 (84, 110) & 107 (93, 121) & 126 (111, 141) \\ 
  Oracle & 52 (42, 62) & 52 (43, 62) & 54 (44, 64) & 44 (35, 53) & 55 (45, 65) \\ 
  Proposed & 52 (42, 61) & 54 (43, 64) & 52 (43, 62) & 55 (45, 65) & 64 (53, 74) \\ 
   \hline
\end{tabular}
\end{table}

We generated data for the simulation in the following way. Let $X \in \mathbb{R}^p$. We draw $X = \exp(Z)$ from a log-normal distribution where $Z \sim N(0, \Sigma)$ with $\Sigma_{v,v} = 1$ and all off-diagonals $\Sigma_{u,v} = .2$. Let $\bm{X}$ denote the matrix where each row denotes an i.i.d observation of $X$ with $\bm{X}$ scaled and centered so each column has mean $0$ and variance $1$. Furthermore, let $\bm{\tilde X}$ denote the matrix $\bm{X}$ augmented with a column of $1$s for the intercept. 
We use a single test function $h(X_v) = X_v^2$, and let $\bm{H}$ be the matrix where the $u$th column corresponds to a scaled and centered version of $h(X_v) = X_v^2$.

We draw  $\bm{Y} = \bm{X} \beta + \bm{\varepsilon}$ where $\beta_v = 1$ for all $v$ and $\bm{\varepsilon} = (\varepsilon_i : i \in [n])$ where $\varepsilon_i = \exp(w_i) - \E(\exp(w_i))$ is also log-normal with $w_i \sim N(0, \log(1 + \sqrt(5)) - \log(2))$ so that $\varepsilon_i$ has mean $0$ and variance $1$. We regress $Y$ onto $\tilde X$ (including an intercept) and let $\bm{\hat \eta}$ denote the resulting residuals. We also calculate, $\hat \sigma^2 = \frac{\Vert \bm{\hat \eta} \Vert^2}{n-p}$, an unbiased estimate of the variance of $\varepsilon$

For the test calibrated by the asymptotic distribution, we then compare the test statistic $T_2 = \Vert \frac{1}{n}\bm{H}^T \bm{\hat \eta}\Vert_2^2$ to the distribution of
\begin{equation}
  \left \Vert \frac{1}{\sqrt{n-p}}\bm{H}^T \left(I - \bm{\tilde X}(\bm{\tilde X}^T \bm{\tilde X})^{-1}\bm{\tilde X}^T \right)  \bm{\check \varepsilon}_v \right\Vert_2^2
\end{equation}
where each element of $\bm{\check \varepsilon}_v$ is drawn from $N(0, \hat \sigma^2)$. Note that we divide by $\sqrt{n-p}$ instead of $n$ in the null distribution which should help lower the Type I error rate in finite samples. Indeed, this correction improves performance; however, even with this additional correction the Type I error is inflated.

For the test calibrated by the oracle distribution, we then compare the test statistic $T_2 = \Vert \frac{1}{n}\bm{H}^T \bm{\hat \eta}\Vert_2^2$ to the distribution of
\begin{equation}
  \left \Vert \frac{1}{\sqrt{n-p}}\bm{H}^T \left(I - \bm{\tilde X}(\bm{\tilde X}^T \bm{\tilde X})^{-1}\bm{\tilde X}^T \right)  \bm{\check \varepsilon}_v \right\Vert_2^2
\end{equation}
where each element of $\bm{\check \varepsilon}_v$ is drawn from $\exp(w_i) - \E(\exp(w_i))$ with $w_i \sim N(0, \log(1 + \sqrt(5)) - \log(2))$. 

This entire procedure is replicated 2000 times with $n = 500$ and $p = 5, 10, \ldots, 25$. In Figure~\ref{fig:whyAsympBad}, we show the distribution of the resulting p-values for each procedure in each setting.

\begin{figure}[h]
    \centering
    \includegraphics[scale = .8]{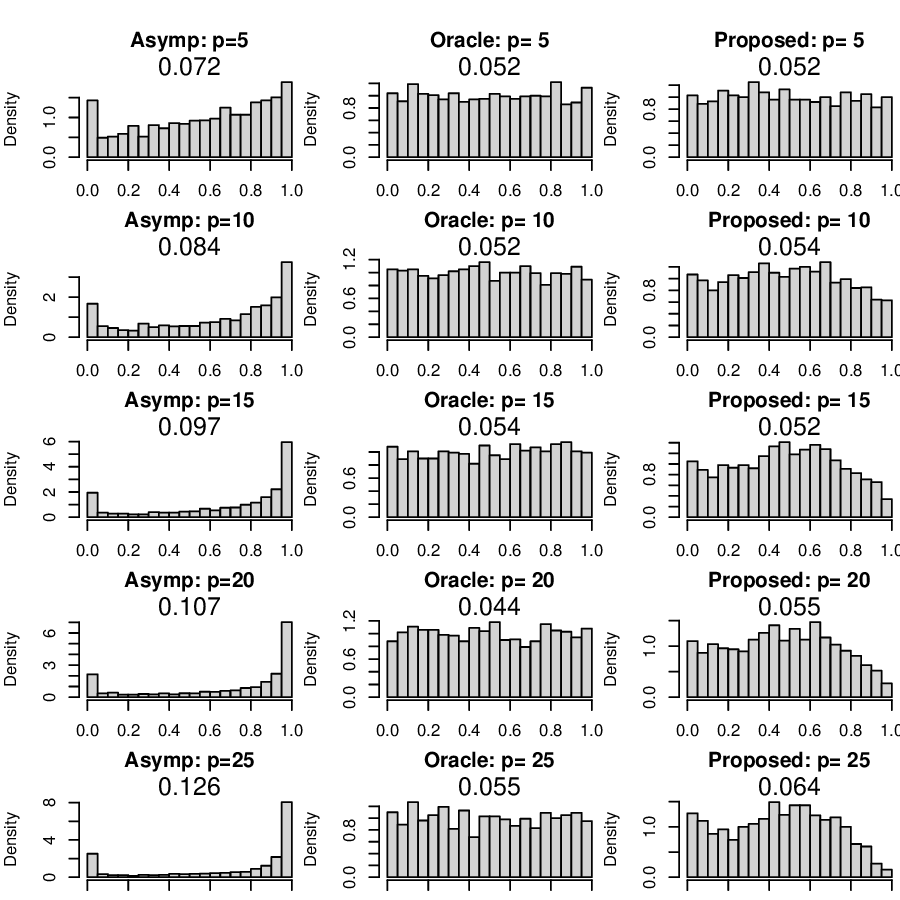}
    \caption{\label{fig:whyAsympBad}Comparison of p-values from tests calibrated by the estimated asymptotic normal distribution, the oracle distribution, and the proposed residual bootstrap distribution. Results show $2000$ replications with $n = 500$ and various values of $p$. The empirical size for nominally $\alpha = .05$ tests are displayed under each title.}
\end{figure}

\FloatBarrier
\clearpage

\section{Additional Simulation Results}
We examine power for various other test functions using the setup from Section~\ref{sec:numerics}. Recall that for each replication, we construct a graph by starting with edges $v \rightarrow v+1$ for all $v < p$; for any $u < v - 1$, $u \rightarrow v$ is added with probability $1/2$. For each edge, we sample a linear coefficient uniformly from $\pm (.1, .95)$. We consider settings where all error terms are either uniform, lognormal, gamma, Weibull, or Laplace random variables and a setting---called mixed---where the distribution of each variable in the SEM is randomly selected. We set $p = 10, 20, 45$ and $n \approx p^{5/4}$ or $p^2$. The data is standardized before applying the goodness-of-fit tests.  For each setting of $p, n$, and error distribution, we complete 500 replications. 

We consider the following set of test functions
\begin{itemize}
    \item Trigonometric ({\bf T}): $\sin(y), \cos(y), \sin(2y), \cos(2y)$
    \item Random trigonometric functions ({\bf R}): $\bigcup_{j=1}^{10} \{\sin(\omega_j y), \cos(\omega_j y)\}$ where $\omega_j \sim N(0,1)$.
    \item Trigonometric products ({\bf Z}): $\tanh(y), \tanh(y)y, \tanh(y)y^2, \tanh(y)\sin(y), \tanh(y)\cos(y)$ where each function is standardized to have mean $0$ and variance $1$.
    \item Moments centered ({\bf MC}): $y^2, y^3, \text{sign}(y)\times \vert y\vert^{2.5}$ where the each function is centered to have mean $0$.
    \item Moments Standardized ({\bf MS}): $y^2, y^3, \text{sign}(y)\times \vert y\vert^{2.5}$ where the each function is standardized to have mean $0$ and variance $1$.
    \item Combined functions {\bf TM, RM, ZM} which are the union of {\bf T, R, Z} and {\bf MS}.
\end{itemize}

In Table~\ref{tab:altTestFunc}, we see that {\bf TM} generally performs well across a wide range of simulation settings. However, it suffers when the errors are symmetric. T, R and Z tend to do well when $n = p^2$, particularly when the errors are uniform. However, they perform poorly when $n = p^{5/4}$. 

\begin{table}
\centering
\caption{\label{tab:altTestFunc}Empirical size and power of $\alpha = .1$ tests. Size: bolded values exceed the nominal $\alpha = .1$ by 2 standard deviations. Power: bolded values indicate the procedure has the largest power (or is within 2 standard deviations) for that particular setting. Each proportion in the table has been multiplied by 100. If the procedure's empirical size is significantly above the nominal level, the empirical power is not displayed.}
\scriptsize
\begin{tabular}{|c|cc|cccccccc|cccccccc|}
 \hline
& &  & \multicolumn{8}{|c|}{Size} & \multicolumn{8}{c|}{Power} \\
 \hline
 &  & p & TM & RM & ZM & MC & MS & T & R & Z & TM & RM & ZM & MC & MS & T & R & Z \\ 
  \hline
\hline\multirow{18}{*}{\rotatebox[origin=c]{90}{$n \approx p^{5/4}$}} & \multirow{3}{*}{\rotatebox[origin=c]{0}{\bftab gamma}} & 10 & 4 & 4 & 4 & 9 & 6 & 4 & 3 & 4 & \bftab 14 & \bftab 15 & 9 & \bftab 15 & \bftab 17 & 11 & 11 & 9 \\ 
   &  & 20 & 7 & 8 & 7 & 11 & 8 & 6 & 8 & 7 & \bftab 26 & \bftab 27 & 18 & 22 & \bftab 28 & 14 & 13 & 18 \\ 
   &  & 45 & 9 & 9 & 7 & 10 & 10 & 6 & 6 & 7 & \bftab 44 & \bftab 44 & 30 & 23 & \bftab 46 & 21 & 23 & 28 \\ 
  \cline{2-19} & \multirow{3}{*}{\rotatebox[origin=c]{0}{\bftab laplace}} & 10 & 5 & 5 & 4 & 8 & 6 & 4 & 4 & 4 & 7 & 8 & 3 & \bftab 14 & 9 & 5 & 5 & 4 \\ 
   &  & 20 & 8 & 8 & 7 & 9 & 8 & 6 & 8 & 6 & 11 & 10 & 9 & \bftab 15 & 11 & 8 & 9 & 9 \\ 
   &  & 45 & 11 & 11 & 9 & 8 & 11 & 9 & 9 & 9 & \bftab 12 & \bftab 13 & 9 & \bftab 14 & \bftab 13 & 8 & 8 & 9 \\ 
  \cline{2-19} & \multirow{3}{*}{\rotatebox[origin=c]{0}{\bftab lognormal}} & 10 & 8 & 8 & 5 & 9 & 11 & 5 & 6 & 5 & \bftab 21 & 21 & 14 & \bftab 23 & \bftab 25 & 15 & 15 & 15 \\ 
   &  & 20 & 10 & 10 & 9 & \bftab 14 & 12 & 8 & 7 & 9 & 44 & 43 & 29 &  & \bftab 49 & 23 & 22 & 27 \\ 
   &  & 45 & 12 & 12 & 13 & 11 & \bftab 14 & 6 & 9 & 11 & \bftab 71 & \bftab 71 & 50 & 35 &  & 34 & 35 & 49 \\ 
  \cline{2-19} & \multirow{3}{*}{\rotatebox[origin=c]{0}{\bftab mixed}} & 10 & 8 & 7 & 5 & 10 & 8 & 7 & 6 & 6 & 11 & 12 & 9 & \bftab 17 & \bftab 13 & 8 & 8 & 10 \\ 
   &  & 20 & 9 & 8 & 6 & 9 & 9 & 6 & 7 & 5 & \bftab 25 & \bftab 24 & 18 & \bftab 24 & \bftab 27 & 14 & 16 & 17 \\ 
   &  & 45 & 10 & 9 & 8 & 10 & 10 & 7 & 8 & 7 & \bftab 39 & \bftab 37 & 30 & 29 & \bftab 39 & 18 & 16 & 28 \\ 
  \cline{2-19} & \multirow{3}{*}{\rotatebox[origin=c]{0}{\bftab uniform}} & 10 & 5 & 6 & 3 & 6 & 6 & 5 & 4 & 3 & \bftab 5 & \bftab 5 & 2 & \bftab 4 & \bftab 5 & \bftab 5 & \bftab 5 & 3 \\ 
   &  & 20 & 8 & 9 & 7 & 7 & 9 & 7 & 8 & 6 & \bftab 8 & 7 & 5 & \bftab 10 & \bftab 7 & 6 & 6 & 5 \\ 
   &  & 45 & 7 & 7 & 7 & 8 & 8 & 9 & 7 & 7 & 8 & 7 & \bftab 10 & 6 & 7 & \bftab 9 & \bftab 11 & \bftab 10 \\ 
  \cline{2-19} & \multirow{3}{*}{\rotatebox[origin=c]{0}{\bftab weibull}} & 10 & 5 & 5 & 4 & 9 & 7 & 5 & 6 & 4 & \bftab 21 & \bftab 20 & 13 & \bftab 20 & \bftab 24 & 13 & 15 & 13 \\ 
   &  & 20 & 10 & 11 & 7 & 11 & 13 & 7 & 7 & 8 & \bftab 43 & \bftab 42 & 28 & 24 & \bftab 46 & 24 & 23 & 28 \\ 
   &  & 45 & 12 & 11 & 10 & 11 & 12 & 9 & 9 & 9 & \bftab 66 & \bftab 63 & 46 & 26 & \bftab 67 & 29 & 29 & 44 \\ 
   \hline
   \hline\multirow{18}{*}{\rotatebox[origin=c]{90}{$n \approx p^{2}$}} & \multirow{3}{*}{\rotatebox[origin=c]{0}{\bftab gamma}} & 10 & 10 & 10 & 10 & 11 & 10 & 10 & 11 & 9 & \bftab 88 & \bftab 88 & 83 & 42 & \bftab 88 & 70 & 70 & 81 \\ 
   &  & 20 & 9 & 10 & 10 & 10 & 9 & 10 & 9 & 10 & \bftab 99 & \bftab 99 & 98 & 62 & \bftab 99 & 93 & 92 & \bftab 98 \\ 
   &  & 45 & 10 & 10 & 11 & 9 & 10 & 9 & 9 & 11 & \bftab 100 & \bftab 100 & \bftab 100 & 86 & \bftab 100 & 99 & 99 & \bftab 100 \\ 
  \cline{2-19} & \multirow{3}{*}{\rotatebox[origin=c]{0}{\bftab laplace}} & 10 & 11 & 11 & 11 & 9 & 11 & 11 & 12 & 12 & \bftab 23 & \bftab 24 & \bftab 24 & \bftab 28 & \bftab 24 & 13 & 15 & 21 \\ 
   &  & 20 & 10 & 11 & 11 & 10 & 10 & 10 & 9 & 11 & 36 & 35 & 36 & \bftab 43 & 35 & 20 & 22 & 30 \\ 
   &  & 45 & 9 & 10 & 8 & 8 & 11 & 10 & 7 & 8 & 41 & 41 & 54 & \bftab 71 & 40 & 42 & 39 & 53 \\ 
  \cline{2-19} & \multirow{3}{*}{\rotatebox[origin=c]{0}{\bftab lognormal}} & 10 & 12 & 11 & 9 & 11 & 13 & 9 & 9 & 10 & \bftab 96 & \bftab 96 & 94 & 66 & \bftab 96 & 90 & 89 & 92 \\ 
   &  & 20 & 8 & 8 & 10 & 7 & 8 & 9 & 10 & 9 & \bftab 100 & \bftab 100 & \bftab 100 & 79 & \bftab 100 & 99 & 99 & \bftab 100 \\ 
   &  & 45 & 11 & 10 & 9 & 8 & 11 & 10 & 9 & 10 & \bftab 100 & \bftab 100 & \bftab 100 & 86 & \bftab 100 & \bftab 100 & \bftab 100 & \bftab 100 \\ 
  \cline{2-19} & \multirow{3}{*}{\rotatebox[origin=c]{0}{\bftab mixed}} & 10 & 8 & 8 & 6 & 9 & 9 & 7 & 10 & 7 & \bftab 80 & \bftab 79 & \bftab 77 & 53 & \bftab 78 & 65 & 65 & 75 \\ 
   &  & 20 & 11 & 10 & 8 & 11 & 10 & 7 & 8 & 7 & \bftab 97 & \bftab 98 & \bftab 98 & 70 & \bftab 97 & 91 & 92 & \bftab 97 \\ 
   &  & 45 & 10 & 9 & 11 & 10 & 10 & 10 & 9 & 9 & \bftab 100 & \bftab 100 & \bftab 100 & 78 & \bftab 100 & 99 & 99 & \bftab 100 \\ 
  \cline{2-19} & \multirow{3}{*}{\rotatebox[origin=c]{0}{\bftab uniform}} & 10 & 11 & 11 & 9 & 11 & 11 & 10 & 8 & 9 & 5 & 5 & \bftab 12 & 9 & 6 & \bftab 13 & \bftab 11 & \bftab 12 \\ 
   &  & 20 & 9 & 9 & 10 & 11 & 9 & 8 & 10 & 9 & 8 & 8 & \bftab 29 & 17 & 6 & \bftab 32 & 25 & \bftab 31 \\ 
   &  & 45 & 12 & 11 & 10 & 11 & 11 & 9 & 11 & 11 & 35 & 28 & \bftab 67 & 48 & 14 & \bftab 71 & 56 & \bftab 69 \\ 
  \cline{2-19} & \multirow{3}{*}{\rotatebox[origin=c]{0}{\bftab weibull}} & 10 & 11 & 11 & 7 & 11 & 12 & 8 & 7 & 7 & \bftab 95 & \bftab 95 & \bftab 94 & 53 & \bftab 95 & 86 & 87 & 92 \\ 
   &  & 20 & 9 & 9 & 10 & 9 & 9 & 9 & 8 & 10 & \bftab 100 & \bftab 100 & \bftab 100 & 73 & \bftab 100 & 97 & 98 & \bftab 100 \\ 
   &  & 45 & 8 & 9 & 8 & 9 & 9 & 8 & 9 & 7 & \bftab 100 & \bftab 100 & \bftab 100 & 87 & \bftab 100 & \bftab 100 & \bftab 100 & \bftab 100 \\ 
   \hline
\end{tabular}
\end{table}

\FloatBarrier
\clearpage

\section{Proofs for Section~\ref{sec:testingOrd}}

\subsection*{Lemma~\ref{lem:indepPvals}}
If $\theta \in \Theta(G)$, then $(\gamma_{\theta, v} \; : \; \theta(v) > 2)$---the p-values calculated for each level using the oracle procedure---are mutually independent. 
\begin{proof}
Without loss of generality, let $\theta = (1, \ldots, p)$ and let $\bm{\gamma} = (\gamma_{\theta, 2}, \ldots, \gamma_{\theta,p})$. 
First, note that because the oracle distribution provides an exact test for any finite sample $\gamma_{\theta,p}\mid \mathbf{Y}_{\pr(p)} \sim U(0,1)$ for any fixed $\mathbf{Y}_{\pr(p)}$. Thus, we have 
\[E_{\varepsilon_p}\left( \exp(it_p \gamma_{\theta,p}) \mid \mathbf{Y}_{\pr(p)}\right) = \phi_{U(0,1)}(t_p),\] where $\phi_{U(0,1)}(t_p)$ is the characteristic function of $U(0,1)$. 
Then, for any $t\in \mathbb{R}^{p-1}$, the characteristic function of $\bm{\gamma}$ is
\begin{equation}\begin{aligned}
    \E_\mathbf{Y}\left( \exp(i t \bm{\gamma})\right) &= \E_{\mathbf{Y}_{\pr(p)}}\left\{ \E_{Y_p}\left( \exp(it_p \gamma_{\theta,p}) \mid \vecY_{\pr(p)} \right) \exp(it_{\pr(p)} \bm{\gamma}_{\pr(p)}) \right\} \\
        &= \E_{\mathbf{Y}_{\pr(p)}}\left\{ \E_{\varepsilon_p}\left( \exp(it_p \gamma_{\theta,p}) \mid \vecY_{\pr(p)} \right) \exp(it_{\pr(p)} \bm{\gamma}_{\pr(p)}) \right\} \\
    &= \phi_{U(0,1)}(t_p)\E_{\mathbf{Y}_{\pr(p)}}\left\{ \exp\left(it_{\pr(p)} \bm{\gamma}_{\pr(p)}\right)  \right\}.
\end{aligned}    
\end{equation}
Inductively applying the same argument for $p-1, \ldots, 2$ yields  
\begin{equation}\begin{aligned}
    \E\left( \exp(i t \bm{\gamma})\right) &= \prod_{v = 2}^p \phi_{U(0,1)}(t_v),
\end{aligned}    
\end{equation}
which shows that the p-values are independent.
\end{proof}

\clearpage
\section{Proofs for Section~\ref{subsec:ci_for_causal_effects}}

\subsection*{Lemma~\ref{lem:ci-modelUncertain}}

Let $\pi_{u,v}$ denote the total causal effect of $v$ onto $u$. Suppose $\hat \Theta(\mathbf{Y}, \alpha/2)$ satisfies~\eqref{eq:topConfSetSingle}, and $C(S)$ is an asymptotically valid $1-\alpha/2$ confidence interval for the parameter of interest, conditional on $S$ being a valid adjustment set. Then, for the confidence interval produced by Alg.~\ref{alg:TotalEffect},
$
\lim_{n \rightarrow \infty} P(\pi_{u,v} \in \hat C_\alpha) \geq 1 - \alpha.
$
\begin{proof}
If $u \in \an(v)$, then for every $\theta \in \Theta(G)$, $\theta(u) < \theta(v)$. 
Fix an arbitrary $\theta \in \Theta(G)$. Let $S_\theta$ denote the appropriate adjustment set for the effect of interest given $\theta$ and let $C(S_\theta)$ denote the $1-\alpha/2$ confidence interval for the effect of interest when using the adjustment set $S_\theta$. Then,
\begin{equation}
\lim_{n \rightarrow \infty} P(\pi_{u,v} \in \hat C_\alpha) \geq 1 - \lim_{n \rightarrow \infty}P\left(\theta \not \in \hat \Theta(\mathbf{Y}, \alpha / 2) \cup \pi_{u,v} \not \in C(S_\theta)\right) \geq 1 - (\alpha / 2 + \alpha/2).
\end{equation}
If $u \not \in \an(v)$, then $\pi_{v,u} = 0$ and there exists a $\theta \in \Theta(G)$ such that $\theta(v) < \theta(u)$. Then,
\begin{equation}
\lim_{n \rightarrow \infty} P(\pi_{u,v} \in \hat C_\alpha) \geq 1 - \lim_{n \rightarrow \infty}P(\theta \not \in \hat \Theta(\mathbf{Y}, \alpha / 2) \geq 1 - \alpha / 2,
\end{equation}
which completes the proof.
\end{proof}
\subsection*{Lemma~\ref{lem:ancestralSet}}
Suppose $\hat \Theta(\mathbf{Y}, \alpha)$ satisfies~\eqref{eq:topConfSetSingle}. Then,
$
\lim_{n \rightarrow \infty} P(\mathcal{\hat A}_\cap 
\subseteq \mathcal{A}\subseteq \mathcal{\hat A}_\cup) \geq 1 - 2\alpha. 
$

\begin{proof}
Suppose $\Theta(G) = \{\theta_1\}$ so that $|\Theta(G)| = 1$. Then, by definition, $\theta_1(v) < \theta_1(u)$ for every  $(u,v) \not \in \mathcal{A}$. Thus, $\theta_1 \in \hat \Theta(\mathbf{Y}, \alpha)$ implies that $\mathcal{A}_\cap \subseteq \mathcal{A}$. Similarly, since  $\theta_1(u) < \theta_1(v)$ for every $(u,v) \in \mathcal{A}$ when $\theta_1 \in \hat \Theta(\mathbf{Y}, \alpha)$, then $\theta_1 \in \hat \Theta(\mathbf{Y}, \alpha)$ also implies that
$\mathcal{A}\subseteq \mathcal{\hat A}_\cup$. If $\hat \Theta(\mathbf{Y}, \alpha)$ satisfies Eq.~\eqref{eq:topConfSetSingle}, then $\theta_1 \in \hat \Theta(\mathbf{Y}, \alpha)$ occurs with probability bounded below by $1-\alpha$ as $n\rightarrow \infty$ so that 
\begin{equation}
\lim_{n \rightarrow \infty} P(\mathcal{\hat A}_\cap 
\subseteq \mathcal{A}\subseteq \mathcal{\hat A}_\cup) \geq 1 - \alpha. 
\end{equation}

Now consider the case where $\vert \Theta(G) \vert > 1$. There exist a pair $\theta_1, \theta_2 \in \Theta(G)$ such that: 
\begin{enumerate}
    \item For every $(u,v) \in \mathcal{A}$, we have $\theta_1(u) < \theta_1(v)$ and $\theta_2(u) < \theta_2(v)$
    \item For every $(u,v) \not \in \mathcal{A}$, we have  $\theta_1(u) < \theta_1(v)$ and $\theta_2(v) < \theta_2(u)$.
\end{enumerate}
Then, for every $(u,v) \not \in \mathcal{A}$ either $\theta_1(v) < \theta_1(u)$ or $\theta_2(v) < \theta_2(u)$. Thus, the event $\{\theta_1\in  \hat \Theta(\mathbf{Y}, \alpha)\} \cap \{\theta_2\in  \hat \Theta(\mathbf{Y}, \alpha)\}$ implies that $\mathcal{\hat A}_\cap \subseteq \mathcal{A}$. 
Furthermore, for every $(u,v) \in \mathcal{A}$, $\theta_1(v) < \theta_1(u)$. Thus, the event $\{\theta_1 \in  \hat \Theta(\mathbf{Y}, \alpha)\}$ also implies $\mathcal{A}\subseteq \mathcal{\hat A}_\cup$. If $\hat \Theta(\mathbf{Y}, \alpha)$ satisfies Eq.~\eqref{eq:topConfSetSingle}, then $\{\theta_1\in  \hat \Theta(\mathbf{Y}, \alpha)\} \cap \{\theta_2\in  \hat \Theta(\mathbf{Y}, \alpha)\}$ occurs with probability bounded below by $1-2\alpha$ as $n\rightarrow \infty$ so that 
\begin{equation}
\lim_{n \rightarrow \infty} P(\mathcal{\hat A}_\cap 
\subseteq \mathcal{A}\subseteq \mathcal{\hat A}_\cup) \geq 1 - 2\alpha. 
\end{equation}

\end{proof}

\clearpage
\section{Proofs for Section~\ref{sec:theory}}
We first give proofs of the results stated in the main manuscript. Several supporting lemmas are proven later.

\subsection{Proof of Theorem~\ref{thm:highDCLT}}

Suppose Assumptions~\ref{asm:subExpErrors}, \ref{asm:designCondition}, \ref{asm:bias}, and \ref{asm:testFuncWellCondition} hold, and $pK/n \rightarrow 0$. 

If $\bias_v = 0$ for all $v$ and for some universal constant $C$ we have $C(\overline{\sigma}^2 Kp/n + \log(n)/\sqrt{n}) < \underline{\sigma}^2/2$, then with probability $1-o(1)$
\begin{equation}
\begin{aligned}
\max_v \vert \gamma_{\theta, v} - \hat \gamma_{\theta,v} \vert &\lesssim     \left(n^{-1/2} + \frac{Kp}{n}\right)\frac{h_{\max,1}^4M^4 \log^{11/2}(n) \overline{\sigma}^2}{\lambda_{\min,Z}\lambda_{\min,C}\underline{\sigma}^2}\left(1 + \log\left[\frac{\overline{\sigma}^2
}{ \underline{\sigma}^2 \lambda_{\min,C}}\right]\right).
\end{aligned}
\end{equation} 

If $\bias_v \neq 0$, $\max_v \vert \nu^{(v)}\vert_\infty < \delta_1$, and for some universal constant $C$, we have $C((\overline{\sigma}^2  +\lambda_{\min,Z}^{-1})Kp\log^2(n)/n + \log(n)/\sqrt{n} + d^\star_2) < \underline{\sigma}^2/2$,
then with probability $1-o(1)$
\begin{equation}\footnotesize
\begin{aligned}
\max_v \vert \gamma_{\theta, v} - \hat \gamma_{\theta,v} \vert &\lesssim     
    \delta_1\sqrt{\log(pJ)}  + d^\star_2\left( \frac{\log^2(n)(\overline{\sigma}^2 + \lambda_{\min, Z}^{-1})}{\lambda_{\min,C} \underline{\sigma}^2}\right)\left(1 + \log\left[\frac{\overline{\sigma}^2 + \lambda_{\min, Z}^{-1}}{\lambda_{\min,C} \underline{\sigma}^2}\right] \right)\\
    &\quad + 
    \left(n^{-1/2} + \frac{Kp}{n}\right) \left(\frac{ h_{\max,1}^4 M^4 \log^{11/2}(n) (\overline{\sigma}^2 + \lambda_{\min, Z}^{-1})}{\lambda_{\min,C} \underline{\sigma}^2}\right)\left(1 + \log\left[\frac{\overline{\sigma}^2 + \lambda_{\min, Z}^{-1}}{\lambda_{\min,C} \underline{\sigma}^2}\right] \right). 
\end{aligned}
\end{equation}
\begin{proof}
The main idea of the proof is to use the high-dimensional CLT results of \citet{chernozhukov2023high} to show that the measure over rectangles of both $\tau^{(v)}$ and $\tilde \tau^{(v)}$ can be well approximated by multivariate normals which are also close to each other. 

Let $\mathcal{R}_v$ denote the set of all rectangles in $\mathbb{R}^{|\pr(v)|J}$ and let $R_v(t) \in \mathcal{R}_v$ denote the rectangle $[-t, t]^{\vert \pr(v)\vert J}$. Furthermore, let $\xi^{(v)} \sim N(0, \sigma_v^2\Sigma^{(v)})$ and $\tilde \xi^{(v)} \sim N(0, \tilde \sigma_v^2\Sigma^{(v)})$. Throughout the proof, we will make statements about $P(T^{(v)}_\infty \leq t \mid \vecU)$ and $P(\tilde T^{(v)}_\infty \leq t \mid \vecU, \vecepsv)$ but for notational brevity we will drop the conditioning event and write $P(T^{(v)}_\infty \leq t)$ and $P(\tilde T^{(v)}_\infty \leq t)$ where the conditioning event is implied but not explicitly written. Then
\begin{equation}
\begin{aligned}\label{eq:maxStat3parts}
    \max_v \sup_{t}\vert P(T^{(v)}_\infty \leq t)  - P(\tilde T^{(v)}_\infty \leq t) \vert & = \max_v \sup_{t}\vert P(\tau^{(v)}  \in R(t))  - P(\tilde \tau^{(v)} \in R(t)) \vert\\
    & \leq \max_v \sup_{t}\vert P(\tau^{(v)} \in R(t))  - P(\xi^{(v)} \in R(t)) \vert\\
    &\quad + \max_v \sup_{R \in \mathcal{R}_v}\vert P(\tilde \tau^{(v)}  \in R)  - P(\tilde \xi^{(v)} \in R) \vert\\
      &\quad + \max_v \sup_{R \in \mathcal{R}_v}\vert P(\xi^{(v)} \in R)   - P(\tilde \xi^{(v)} \in R) \vert\\
    &= I_1 + I_2 + I_3.
\end{aligned}
\end{equation}

We first bound $I_1$. Recall that $\tau^{(v)}= \kappa^{(v)} + \nu^{(v)}$ where $\nu^{(v)} = \frac{1}{\sqrt{n}}\sum_{i} \zeta_i^{(v)}d_{v,i}$ will generally not have mean $0$ in the misspecified setting. Thus, to bound $I_1$ we follow the analysis in~\citet{chernozhukov2023high} using Nazarov's Inequality~\citep{nazarov2003maximal}---though in our setting we have already conditioned on $\vert \nu^{(v)}\vert < \delta_1$. For any $t$, we have
\begin{equation}
\begin{aligned}
 P(\tau^{(v)}  \in R(t))  &=  P(\vert \tau^{(v)}\vert_\infty \leq t) \leq P(\vert \kappa^{(v)}\vert_\infty \leq t + \delta_1)\\   
  &\leq P(\vert \xi^{(v)}\vert_\infty \leq t + \delta_1) + \left \vert P(\vert \kappa^{(v)}\vert_\infty \leq t + \delta_1) - P(\vert \xi^{(v)}\vert_\infty \leq t + \delta_1)\right\vert \\ 
    &\leq P(\vert \xi^{(v)}\vert_\infty \leq t ) + \left\vert P(\vert \xi^{(v)}\vert_\infty \leq t + \delta_1) - P(\vert \xi^{(v)}\vert_\infty \leq t) \right \vert\\
    &\quad \left \vert P(\vert \kappa^{(v)}\vert_\infty \leq t + \delta_1) - P(\vert \xi^{(v)}\vert_\infty \leq t + \delta_1)\right\vert \\ 
    &\leq P(\vert \xi^{(v)}\vert_\infty \leq t ) + \frac{\delta_1}{\underline{\sigma}^2} \sqrt{\log(pJ)} 
    + \left \vert P(\vert \kappa^{(v)}\vert_\infty \leq t + \delta_1) - P(\vert \xi^{(v)}\vert_\infty \leq t + \delta_1)\right\vert. \\   
\end{aligned}
\end{equation}
Similarly,
\begin{equation}
\begin{aligned}
 P(\tau^{(v)}  \in R(t))  & \geq P(\vert \kappa^{(v)}\vert_\infty \leq t - \delta_1)\\   
  &\geq P(\vert \xi^{(v)}\vert_\infty \leq t - \delta_1) - \left \vert P(\vert \kappa^{(v)}\vert_\infty \leq t - \delta_1) - P(\vert \xi^{(v)}\vert_\infty \leq t - \delta_1)\right\vert \\ 
    &\geq P(\vert \xi^{(v)}\vert_\infty \leq t ) - \left\vert P(\vert \xi^{(v)}\vert_\infty \leq t - \delta_1) - P(\vert \xi^{(v)}\vert_\infty \leq t) \right \vert\\
    &\quad \left \vert P(\vert \kappa^{(v)}\vert_\infty \leq t - \delta_1) - P(\vert \xi^{(v)}\vert_\infty \leq t - \delta_1)\right\vert \\ 
    &\geq P(\vert \xi^{(v)}\vert_\infty \leq t ) - \frac{\delta_1}{\underline{\sigma}^2} \sqrt{\log(pJ)} 
    - \left \vert P(\vert \kappa^{(v)}\vert_\infty \leq t - \delta_1) - P(\vert \xi^{(v)}\vert_\infty \leq t - \delta_1)\right\vert. \\   
\end{aligned}
\end{equation}
Thus, we have
\begin{equation}
    I_1 \lesssim \frac{\delta_1}{\underline{\sigma}^2} \sqrt{\log(pJ)} + \underbrace{\max_v \sup_{R \in \mathcal{R}_v}\vert P(\kappa^{(v)}  \in R)  - P(\xi^{(v)} \in R) \vert}_{I_1'}.
\end{equation}
To bound $I_1'$, we apply Corollary 2.1 of \citet{chernozhukov2023nearly} which we state now for completeness. 

Suppose $W = \frac{1}{\sqrt{n}}\sum_{i} X_i$ where $X_i$ are independent mean $0$ vectors in $\mathbb{R}^d$ and $Z \sim N(0, \cov(W))$. Furthermore, let $\sigma^2_{\star, W}$ be the smallest eigenvalue of the correlation matrix of $W$ and $C$ be a universal constant.
\begin{enumerate}
    \item Suppose $\vert X_{ij} /\sqrt{\var(W_{j})} \vert \leq B_n$ for all $i = 1, \ldots, n$ and $j = 1, \ldots, d$ almost surely, then
    \begin{equation}
    \max_v \sup_{R \in \mathcal{R}_v}\vert P(W  \in R)  - P(\xi \in R) \vert \leq \frac{C B_n(\log(d)^{3/2} \log(n)}{\sqrt{n}\sigma^2_{\star, W}}
\end{equation}
\item Suppose for some $q \geq 4$ we have for all $i = 1, \ldots, n$ 
\[\E\left(\max_{j} \left\vert X_{ij} /\sqrt{\var(W_{j})}\right\vert^q\right)^{1/q} \leq B_n\]
and for all $j =, 1\ldots, d$ we have
\[n^{-1}\sum_{i}\E(\vert X_{ij} /\sqrt{\var(W_{j})}\vert^4) \leq B_n^2.\] Then, $\max_v \sup_{R \in \mathcal{R}_v}\vert P(W  \in R)  - P(\xi \in R) \vert $ is bounded above by
   \begin{equation}\footnotesize
C\left\{\frac{B_n \log^{3/2}(d)\log(n)}{\sqrt{n}\sigma^2_{\star,W}} + \frac{B_n^2\log^2(d)\log(n)}{n^{1-2/q}\sigma^2_{\star,W}} + \left(\frac{B_n^q(\log(d))^{3q/2 -4}\log(n)\log(dn)}{n^{q/2-1}\sigma^2_{\star,W}}\right)^{1/(q-2)} \right\}.
\end{equation}
\end{enumerate}

Marginally, $\zeta^{(v)}_{i} \not \independent \zeta^{(v)}_{i'}$; however, we are considering the distribution of $T^{(v)}$ conditional on $\vecY_{\pr(v)}$ so that each $\zeta^{(v)}_{i}$ is fixed and the only randomness in $\kappa^{(v)}$ is due to resampling $\varepsilon_{v,i}$. Thus, $\kappa^{(v)} \mid \vecY_{\pr(v)}$ is indeed the sum of independent centered vectors. 
By Assumption~\ref{asm:subExpErrors} and \ref{asm:testFuncWellCondition} we have for each $v$ and $i = 1, \ldots n$
\begin{equation}
    \left[\E(\max (\zeta_{i,j}^{(v)}\varepsilon_{v,i} / \sqrt{\Sigma_{jj}\sigma_v^2})^4)\right]^{1/4} \leq h_{\max_1}\left[\E( (\varepsilon_{v,i}/\sigma_v)^4)\right]^{1/4} \leq h_{\max_1} M.
\end{equation}
Furthermore, for each $v$ and $j = 1, \ldots \vert \pr(v)\vert J$
\begin{equation}
    n^{-1}\sum_i\E([\zeta_{i,j}^{(v)}\varepsilon_{v,i} / \sqrt{\Sigma_{jj}\sigma_v^2}]^4) \leq h_{\max_1}^4 \E( (\varepsilon_{v,i}/\sigma_v)^4) \leq h_{\max_1}^4 M^4
\end{equation}
and by construction $\kappa^{(v)}$ and $\xi^{(v)}$ have the same covariance matrix: $\sigma_v^2\Sigma^{(v)}$. Thus, we apply the second condition above from Corollary 2.1 of \citet{chernozhukov2023nearly} stated above with $d = pJ$, $q =4$ and $B_n = h_{\max,1}^2M^2$. This implies
\begin{equation} \footnotesize
\begin{aligned}
    I_1' &\leq C\left\{\frac{B_n \log^{3/2}(pJ)\log(n)}{\sqrt{n}\lambda_{\min,C}} + \frac{B_n^2\log^2(pJ)\log(n)}{n^{1-2/4}\lambda_{\min,C}} + \left(\frac{B_n^4(\log(pJ))^{3(4)/2 -4}\log(n)\log(pJn)}{n^{4/2-1}\lambda_{\min,C}^2}\right)^{1/(4-2)} \right\}\\
    &\lesssim C \frac{h_{\max,1}^2M^2 \log^{5/2}(n) + h_{\max,1}^4M^4\log^3(n) + h_{\max,1}^4M^4\log^2(n)  }{\sqrt{n}\lambda_{\min,C}}\\
       &\lesssim C \frac{h_{\max,1}^4M^4 \log^{3}(n) }{\sqrt{n}\lambda_{\min,C}}.
\end{aligned}
\end{equation}

To bound $I_2$, we again use Corollary 2.1 of \citet{chernozhukov2023nearly} which requires bounding \[ \max_v \max_{i,j} \left\vert \zeta^{(v)}_{i,j} \tilde \varepsilon_{v,i}/ \sqrt{\tilde \sigma^2_v\Sigma_{jj}^{(v)}} \right\vert.\] 
Lemma~\ref{lem:residInfBound} implies that with probability $1-o(1)$, we have
\begin{equation}
    \max_v \left\vert \hatEta\right\vert_\infty \lesssim\left(\frac{\lambda_{\min,Z}^{-1} pK \log^3(n)}{\sqrt{n}} + \log^2(n)\right).
\end{equation}
Furthermore, by Lemma~\ref{lem:varEst}, with probability $1-o(1)$ we have 
\begin{equation}
    \max_v \left\vert\frac{\tilde \sigma^2_v}{\sigma^2_v} -1\right\vert \lesssim \frac{(\overline{\sigma}^2 + \lambda_{\min, Z}^{-1})(Kp\log^2(n)/n) + \log(n)/\sqrt{n} + d^\star_2}{\underline{\sigma}^2}.
\end{equation}
When $C((\overline{\sigma}^2 + \lambda_{\min,Z}^{-1}) Kp\log^2(n)/n + \log(p)/\sqrt{n} + d^\star_2) < \underline{\sigma}^2/2$ holds, this implies
\begin{equation}
\min_v \tilde \sigma^2_v > \underline{\sigma}^2/2.
\end{equation}
Together, these imply with probability $1-o(1)$
\begin{equation}
\begin{aligned}\label{eq:boundedGammaBS}
\max_v \max_{i,j} \left\vert \zeta^{(v)}_{i,j} \tilde \varepsilon_{v,i}/ \sqrt{\sigma^2_v\Sigma_{jj}^{(v)}} \right\vert \leq {h}_{\max,1} \max_{v,i} \vert\tilde \varepsilon_{v,i} / \tilde \sigma_v \vert \lesssim \frac{{h}_{\max,1}}{\underline{\sigma}^2}\left(\frac{\lambda_{\min,Z}^{-1}pK \log^3(n)}{\sqrt{n}} +\log^2(n)\right). 
\end{aligned}
\end{equation}

Applying the first stated condition of Corollary 2.1 of \citet{chernozhukov2023nearly}, when the bound in \eqref{eq:boundedGammaBS} holds we have:
\begin{equation}
    \begin{aligned}
        I_2 &\lesssim \frac{{h}_{\max,1}\left(\frac{ \lambda_{\min,Z}^{-1}pK \log^3(n)}{\sqrt{n}} +\log^2(n)\right) \log^{3/2}(pJ)\log(n)}{\sqrt{n}\lambda_{\min,C}\underline{\sigma}^2} \\
        &\leq \frac{{h}_{\max,1}\log^{9/2}(n)}{\sqrt{n}\lambda_{\min,C}\underline{\sigma}^2} + \frac{\lambda_{\min,Z}^{-1}{h}_{\max,1}pK\log^{11/2}(n)}{n\lambda_{\min,C}\underline{\sigma}^2}.\\
    \end{aligned}
\end{equation}
When $\bias_v = 0$ for all $v$, the same result holds but we only require that $C(\overline{\sigma}^2 Kp/n + \log(n)/\sqrt{n}) < \underline{\sigma}^2/2$.

Since $\xi^{(v)}$ and $\tilde \xi^{(v)}$ are both multivariate Gaussian, to bound $I_3$ we appeal to a Gaussian-to-Gaussian comparison inequality given by Lemma 2.1 in \citet{chernozhukov2023high} which is a direct consequence of Theorem 1.1 in \citet{xiao2021HighDCLT}. In particular, the lemma states the following. Suppose $Z, Z' \in \mathbb{R}^d$ with $Z \sim N(0, S)$ and $Z' \sim N(0, S')$ where $S$ has unit diagonal entries. When
\begin{equation}
\begin{aligned}\label{eq:gaussian2Gaussian}
\sup_{R \in \mathcal{R}_v}\vert P(Z \in R)   - P(Z' \in R) \vert &\leq C\frac{\rho_1}{\sigma^2_{\star}}\log(d)\max\left(1, \vert \log(\rho_1 / \sigma^2_{\star})\vert\right)     
\end{aligned}
\end{equation}
where $\rho_1 = \vert S - S' \vert_\infty$ and $\sigma^2_{\star}$ is the smallest eigenvalue of $S$. 

Applying this to our case, let $D^{(v)}$ be a diagonal matrix whose diagonal entries coincide with $\Sigma^{(v)}$. Then $[\sigma^2_v D^{(v)}]^{-1/2}\xi^{(v)}$ has covariance $[D^{(v)}]^{-1/2}\Sigma^{(v)}D^{(v)}]^{-1/2}$ with unit diagonals by construction and $[\sigma^2_v D^{(v)}]^{-1/2}\tilde \xi^{(v)}$ has covariance $\frac{\tilde \sigma^2}{\sigma^2} [D^{(v)}]^{-1/2}\Sigma^{(v)}D^{(v)}]^{-1/2}$. Thus,
\begin{equation}
\rho_1 = \left\vert [D^{(v)}]^{-1/2}\Sigma^{(v)}[D^{(v)}]^{-1/2} - \frac{\tilde \sigma^2_v}{\sigma^2_v} [D^{(v)}]^{-1/2}\Sigma^{(v)}[D^{(v)}]^{-1/2}\right\vert_\infty = \left\vert\frac{\tilde \sigma^2_v}{\sigma^2_v} -1\right\vert.
\end{equation}
Plugging this into~\eqref{eq:gaussian2Gaussian} yields
\begin{equation}
\begin{aligned}
\sup_{R \in \mathcal{R}_v}\vert P(\xi^{(v)} \in R)   - P(\tilde \xi^{(v)} \in R) \vert &=
\sup_{R \in \mathcal{R}_v}\vert P([\sigma^2_vD^{(v)}]^{-1/2}\xi^{(v)} \in R)   - P([\sigma^2_vD^{(v)}]^{-1/2}\tilde \xi^{(v)} \in R) \vert \\
&\lesssim \frac{\vert\frac{\tilde \sigma^2_v}{\sigma^2_v} -1\vert}{\lambda_{\min,C}}\log(pJ)\max\left(1, \vert \log(\vert\frac{\tilde \sigma^2_v}{\sigma^2_v} -1\vert / \lambda_{\min,C})\vert\right).    
\end{aligned}
\end{equation}

Again, by Lemma~\ref{lem:varEst}, when $d_{v,i} \neq 0$, we have
\begin{equation}
   \max_v \left\vert\frac{\tilde \sigma^2_v}{\sigma^2_v} -1\right\vert \lesssim \frac{(\overline{\sigma}^2 + \lambda_{\min, Z}^{-1})(Kp\log^2(n)/n) + \log(n)/\sqrt{n} + d^\star_2}{\underline{\sigma}^2}.
\end{equation}
Note that
\begin{equation}\scriptsize
\begin{aligned}
\left\vert \log\left(\frac{(\overline{\sigma}^2 + \lambda_{\min, Z}^{-1})(Kp\log^2(n)/n) + \log(n)/\sqrt{n} + d^\star_2}{\underline{\sigma}^2\lambda_{\min,C}}\right)\right\vert & \leq    \left\vert \log( (Kp\log^2(n)/n) + \log(n)/\sqrt{n} + d^\star_2) + \log\left(\frac{\overline{\sigma}^2 + \lambda_{\min, Z}^{-1}}{\underline{\sigma}^2\lambda_{\min,C}}\right)\right\vert \\
&\leq  \left\vert \log\left( (Kp\log^2(n)/n) + \log(n)/\sqrt{n} + d^\star_2\right) \right\vert+ \left\vert \log\left(\frac{\overline{\sigma}^2 + \lambda_{\min, Z}^{-1}}{\underline{\sigma}^2 \lambda_{\min,C}}\right)\right\vert .
\end{aligned}
\end{equation}
Since we assume $[(Kp\log^2(n)/n) + \log(n)/\sqrt{n} + d^\star_2] < 1$, 
\begin{equation}\footnotesize
\begin{aligned}
 \left\vert \log( (Kp\log^2(n)/n) + \log(n)/\sqrt{n} + d^\star_2) \right\vert <  \left\vert \log(Kp\log^2(n)/n) \right\vert =  \left\vert \log(Kp\log^2(n)) - \log(n) \right\vert \leq 2\log(n).
\end{aligned}
\end{equation}
Putting everything together, we have with probability $1 - o(1)$, 
\begin{equation}
\begin{aligned}
I_3 &= \sup_{R \in \mathcal{R}_v}\vert P(\xi^{(v)} \in R)   - P(\tilde \xi^{(v)} \in R) \vert \\
&\lesssim \frac{\overline{\sigma}^2 + \lambda_{\min, Z}^{-1}}{\underline{\sigma}^2\lambda_{\min,C}} [(Kp\log^2(n)/n) + \log(n)/\sqrt{n} + d^\star_2]\log(n)\left[\log(n) +   \log\left(\frac{\overline{\sigma}^2 + \lambda_{\min, Z}^{-1}}{\underline{\sigma}^2\lambda_{\min,C}}\right)\right].
\end{aligned}
\end{equation}
When, $d_{v,i} = 0$ for all $v$ and $i$, then with probability $1-o(1)$,
\begin{equation}
    \max_v \left\vert\frac{\tilde \sigma^2_v}{\sigma^2_v} -1\right\vert \lesssim \frac{\overline{\sigma}^2}{\underline{\sigma}^2}  ( (Kp/n) + \log(p)/\sqrt{n}).
\end{equation}
One can similarly show that 
\begin{equation}
\begin{aligned}
I_3
&\lesssim \frac{\overline{\sigma}^2}{\underline{\sigma}^2\lambda_{\min,C}} [(Kp/n) + \log(n)/\sqrt{n}]\log(n)[\log(n) +   \log(\frac{\overline{\sigma}^2}{\underline{\sigma}^2\lambda_{\min,C}})].
\end{aligned}
\end{equation}

Putting everything together, when $\bias_v \neq 0$ for some $v$, we have with probability $1-o(1)$:
\begin{equation} \footnotesize
\begin{aligned}
    I_1 + I_2 + I_3 & \lesssim
     \delta_1\sqrt{\log(pJ)} + \frac{{h}_{\max,1}^4M^4\log^{3}(n)}{\sqrt{n}\lambda_{\min,C}}\\
    & \quad
    + \frac{h_{\max,1}\log^{9/2}(n)}{\sqrt{n}\lambda_{\min,C}\underline{\sigma}^2} + \frac{\lambda_{\min,Z}^{-1}h_{\max,1}pK\log^{11/2}(n)}{n\lambda_{\min,C}\underline{\sigma}^2}\\
    &\quad
    + \frac{\overline{\sigma}^2 + \lambda_{\min, Z}^{-1}}{\underline{\sigma}^2\lambda_{\min,C}} [(Kp/n) + \log(p)/\sqrt{n} + d^\star_2]\log(n)\left[\log(n) +   \log\left(\frac{\overline{\sigma}^2 + \lambda_{\min, Z}^{-1}}{\underline{\sigma}^2\lambda_{\min,C}}\right)\right]\\
    &\lesssim
    \delta_1\sqrt{\log(pJ)}  + d^\star_2\left( \frac{\log^2(n)(\overline{\sigma}^2 + \lambda_{\min, Z}^{-1})}{\lambda_{\min,C} \underline{\sigma}^2}\right)\left(1 + \log\left[\frac{\overline{\sigma}^2 + \lambda_{\min, Z}^{-1}}{\lambda_{\min,C} \underline{\sigma}^2}\right] \right)\\
    &\quad + 
    \left(n^{-1/2} + \frac{Kp}{n}\right) \left(\frac{ h_{\max,1}^4 M^4 \log^{11/2}(n) (\overline{\sigma}^2 + \lambda_{\min, Z}^{-1})}{\lambda_{\min,C} \underline{\sigma}^2}\right)\left(1 + \log\left[\frac{\overline{\sigma}^2 + \lambda_{\min, Z}^{-1}}{\lambda_{\min,C} \underline{\sigma}^2}\right] \right). 
\end{aligned}
\end{equation}

Similarly, when $\bias_v = 0$ for all $v$ we have with probability $1-o(1)$:
\begin{equation}
\begin{aligned}
    I_1 + I_2 + I_3 & \lesssim \frac{{h}_{\max,1}^4M^4\log^{3}(n)}{\sqrt{n}\lambda_{\min,C}}\\
    & \quad
    + \frac{{h}_{\max,1}\log^{9/2}(n)}{\sqrt{n}\lambda_{\min,C}\underline{\sigma}^2} + \frac{\lambda_{\min,Z}^{-1}{h}_{\max,1}pK\log^{11/2}(n)}{n\lambda_{\min,C}\underline{\sigma}^2}\\
    &\quad
    + \frac{\overline{\sigma}^2}{\underline{\sigma}^2\lambda_{\min,C}} [(Kp/n) + \log(n)/\sqrt{n}]\log(n)\left[\log(n) +   \log\left(\frac{\overline{\sigma}^2}{\underline{\sigma}^2\lambda_{\min,C}}\right)\right]\\
    &\lesssim
    \left(n^{-1/2} + \frac{Kp}{n}\right)\frac{h_{\max,1}^4M^4 \log^{11/2}(n) \overline{\sigma}^2}{\lambda_{\min,Z}\lambda_{\min,C}\underline{\sigma}^2}\left(1 + \log\left[\frac{\overline{\sigma}^2
}{ \underline{\sigma}^2 \lambda_{\min,C}}\right]\right).
\end{aligned}
\end{equation}
\end{proof}
\clearpage

\subsection{Proof of Lemma~\ref{lem:totalPValClose}}

Suppose $\theta \in \Theta(G)$. Then,
$\vert \Gamma_\theta - \hat \Gamma_\theta \vert \leq p \max_v\vert \gamma_{\theta,v}- \hat \gamma_{\theta,v}\vert$.

\begin{proof}
Note that $\Gamma_\theta = 1 - [1-\gamma_\theta]^{p-1}$ and
$\hat \Gamma_\theta = 1 - [1-\hat \gamma_\theta]^{p-1}$. Thus, we upper bound the difference between $\Gamma_\theta$ and $\hat \Gamma_\theta$ by
\begin{equation}\label{eq:totalPValClose}
    \begin{aligned}
        \vert \Gamma_\theta - \hat \Gamma_\theta \vert &= \vert 1 - [1 - \gamma_{\theta}]^{p-1} - (1 - [1 - \hat \gamma_{\theta}]^{p-1}) \vert\\
        & = \vert [1 - \hat \gamma_{\theta}]^{p-1} -[1 - \gamma_{\theta}]^{p-1} \vert\\
        & \stackrel{(1)}{\leq} (p-1) \max_{g \in (\gamma_{\theta}, \hat \gamma_{\theta})} [1 - g]^{p-2} \vert \gamma_{\theta}- \hat \gamma_{\theta}\vert  \\
        & \leq (p-1) \max_v\vert \gamma_{\theta,v}- \hat \gamma_{\theta,v}\vert,
    \end{aligned}
\end{equation} 
where (1) comes from applying the mean value theorem.
\end{proof}

\subsection{Proof of Corollary~\ref{cor:finalValidLinear}}

For a fixed $\theta \in \Theta(G)$, suppose that the conditions in Theorem~\ref{thm:highDCLT} hold. Furthermore, suppose the data is known to be generated by a linear structural equation model so $K = 1$ and $d_{v,i} = 0$ for all $v$ and $i$. When  $\underline{\sigma}^2, \overline{\sigma}^2, \lambda_{\min,C}, \lambda_{\min,Z}, M, h_{\max, 1}$ are fixed and $p^2\log^{11/2}(n)/n \rightarrow 0$, then $\hat \Gamma_{\theta} \rightarrow_p \Gamma_{\theta}$ and
\begin{equation}
\lim_{n \rightarrow \infty} P(\theta \in \hat \Theta(\mathbf{Y}, \alpha)) \geq 1 - \alpha. 
\end{equation}
\begin{proof}
    In the linear SEM setting, $K = 1$ and $\delta_1 = d^\star_2 = 0$ in Theorem~\ref{thm:highDCLT}. 
    Thus, the right hand side of Eq.~\eqref{eq:highDCLT1} which bounds $\vert \gamma_\theta - \hat \gamma_\theta\vert $ is of order $(n^{-1/2} + p/n)\log^{9/2}(n)$. 
    By Lemma~\ref{lem:totalPValClose}, the p-values for the entire causal ordering are then of order $p(n^{-1/2} + p/n)\log^{11/2}(n) \rightarrow 0$ when $p^2\log^{11/2}(n)/n \rightarrow 0$. Since the p-values converge and the oracle test has the correct size, the bootstrap procedure will then also have the right size asymptotically.     
\end{proof}

\subsection{Proof of Corollary~\ref{cor:finalValidNonLinear}}

For a fixed $\theta \in \Theta(G)$, suppose that the conditions in Theorem~\ref{thm:highDCLT} hold. Furthermore, suppose the data is generated by a structural equation model with unknown functions but that an approximating basis is known such that $d^\star_1\lesssim K^{-r}$, $d^\star_2\lesssim K^{-r}$, and $M_d\lesssim K^{-r}$ for some $r > 1/2$. 
Suppose $\underline{\sigma}^2, \overline{\sigma}^2, \lambda_{\min,C}, \lambda_{\min,Z}, h_{\max,1}, h_{\max,2}, M$ are fixed and let $K =[n^{3/2}/p]^{1/(r+1)}$. If $\log^{11/2}(n)n^{\frac{1-2r}{2(1+r)}}p^{\frac{1+2r}{1+r}} \rightarrow 0$, then $\hat \Gamma_{\theta} \rightarrow_p \Gamma_{\theta}$ and
\begin{equation}\small
\lim_{n \rightarrow \infty} P(\theta \in \hat \Theta(\mathbf{Y}, \alpha)) \geq 1 - \alpha. 
\end{equation}
\begin{proof}
Compared to the well-specified setting in Corollary~\ref{cor:finalValidLinear}, we have two additional terms in the upper bound of $\vert \gamma_{\theta} - \hat \gamma_{\theta}\vert$ due to the bias of which are of order $\max_v \vert\nu^{(v)}\vert \sqrt{\log(pJ)}  + d^\star_2\log^2(n)$. In addition, we let $K$ grow with $n$. 

By Lemma~\ref{lem:misspecificationBound}, with probability $1-o(1)$
\begin{equation}
\begin{aligned}
    \delta_1 = \max_v\left \vert \nu^{(v)}\right \vert_\infty &\leq h_{\max,2}(\sqrt{n}K^{-r} + \log(p)K^{-r}).
\end{aligned}
\end{equation}
Thus, combining the bounds in Theorem~\ref{thm:highDCLT} and Lemma~\ref{lem:totalPValClose} we have
\begin{equation}
\begin{aligned}\label{eq:orderNonLinearGamma}
        \vert \Gamma_\theta - \hat \Gamma_\theta \vert &\lesssim h_{\max,2}\sqrt{\log(n)}\sqrt{n}K^{-r}p + h_{\max,2}\log(p)\sqrt{\log(n)}K^{-r}p \\
    &\quad + p K^{-r}\log^2(n) + ph_{\max,1}^4 M^4\log^{11/2}(n)n^{-1/2} + h_{\max,1}^4M^4\log^{11/2}(n)Kp^2/n.
\end{aligned}
\end{equation}
Omitting log terms, and optimizing with respect to $K$, we have
\begin{equation}
\begin{aligned}
\frac{\partial}{\partial K}\sqrt{n}pK^{-r} +2K^{-r}p + pn^{-1/2} + Kp^2/n = -rK^{-(r+1)}(\sqrt{n}p + 2p) + p^2/n.
\end{aligned}
\end{equation}
The second derivative is $r(r+1)K^{-(r+2)}(\sqrt{n}p + 2p) > 0$ so solving for $K$ yields a minimizer:
\begin{equation}
\begin{aligned}
K = \left(n/p^2(\sqrt{n}p + 2p) \right)^{1/(r+1)} = O\left([n^{3/2}/p]^{1/(r+1)}]\right).
\end{aligned}
\end{equation}
Plugging $K = [n^{3/2}/p]^{1/(r+1)}$ back into Eq.~\eqref{eq:orderNonLinearGamma}, we have
\begin{equation}
\begin{aligned}
        \vert \Gamma_\theta - \hat \Gamma_\theta \vert &\lesssim h_{\max,2}\sqrt{\log(n)}\sqrt{n}[n^{3/2}/p]^{-r/(r+1)}p + h_{\max,2}\log(p)\sqrt{\log(n)}[n^{3/2}/p]^{-r/(r+1)}p \\
    &\quad + p [n^{3/2}/p]^{-r/(r+1)}\log^2(n) + ph_{\max,1}^4\log^{11/2}(n)n^{-1/2}\\
    &\quad + h_{\max,1}^4\log^{11/2}(n)[n^{3/2}/p]^{1/(r+1)}p^2/n\\
    &\lesssim  h_{\max,2}\log^{1/2}(n) n^{\frac{1-2r}{2(1+r)}}p^{\frac{1+2r}{1+r}} + h_{\max,2}\log^2(n)n^{\frac{-3r}{2(1+r)}}p^{\frac{1+2r}{1+r}} \\
    &\quad + h_{\max,1}^4 \log^{11/2}(n)pn^{-1/2} +  h_{\max,1}^4\log^{11/2}(n) n^{\frac{1-2r}{2(1+r)}}p^{\frac{1+2r}{1+r}} .
\end{aligned}
\end{equation}
When $\log^{11/2}(n)n^{\frac{1-2r}{2(1+r)}}p^{\frac{1+2r}{1+r}}\rightarrow 0$ then $p\log^{11/2}(n)/\sqrt{n}$ and $\log^2(n)n^{\frac{-3r}{2(1+r)}}p^{\frac{1+2r}{1+r}}$ also go to $0$. Thus, the entire upper bound goes to $0$, so $\Gamma_\theta \rightarrow_p \hat \Gamma_\theta$. Since the oracle procedure has exact size, then the bootstrap procedure will achieve nominal size asymptotically.  
\end{proof}

\clearpage

\subsection{Proof of Theorem~\ref{thm:power}}
Fix an ordering $\theta \not \in \Theta(G)$ and $v \in V$. Suppose Assumptions~\ref{asm:designCondition}, \ref{asm:testFuncWellCondition}, and \ref{asm:altResid} hold and $(pK)^2/n \rightarrow 0$. When $\lambda_{\min,Z}^{-1}\max(h_{\max,3}, h_{\max, 2})pK\log^4(n)/\sqrt{n} = o(\tau^\star)$, then an $\alpha$-level test for $H_{0,\theta, v}$ will be rejected with probability $1-o(1)$ for any $\alpha \in (0,1)$. 
\begin{proof}
Let $\hat \gamma_{\theta,v} = P_{\tilde \eta}\left(T(\tildevecv, \pr(v); \vecY) \geq T(\vecv, \pr(v); \vecY)  \mid \vecY \right)$ be the p-value resulting from Alg.~\ref{alg:fullNGProcedure}, where $P_{\tilde \eta}$ denotes the probability under the bootstrap distribution conditional on $\vecY$. Then by Markov's inequality we have
\begin{equation}
    \begin{aligned}
P_{\tilde \eta} &\left(T(\tildevecv, \pr(v); \vecY) \geq T(\vecv, \pr(v); \vecY)  \mid \vecY \right) \\
& \leq P_{\tilde \eta}\left( \left \vert T(\tildevecv , \pr(v); \vecY)- T(\vecv , \pr(v); \vecY) + \sqrt{n}\tau^\star \right\vert   \geq \sqrt{n}\tau^\star \mid \vecY  \right)\\
& \leq\frac{\E_{\tilde \eta}\left[ \left\vert T(\vecv , \pr(v); \vecY) - \tau^\star  - T(\tildevecv , \pr(v); \vecY) \right\vert \mid \vecY \right]}{ \sqrt{n}\tau^\star }\\
    & \leq \frac{\vert T(\vecv , \pr(v); \vecY) - \sqrt{n}\tau^\star\vert + \E_{\tilde \eta}\left(\left\vert T(\tildevecv , \pr(v); \vecY\right\vert  
 \mid \mathbf{Y}\right)}{\sqrt{n}\tau^\star}.
    \end{aligned}
\end{equation}

Considering the first term,
\begin{equation}
\begin{aligned}
    \vert T(\vecv , \pr(v); \vecY) - \sqrt{n}\tau^\star\vert &= \sqrt{n}\left\vert  \max_{j,u}\vert\frac{1}{\sqrt{n}}\tau_j(\vecY, u, \pr(v), \vecY) \vert  - \max_{j, u} \vert\E(h_j(Y_{u,i})\eta_{i})\vert\right\vert \\
    &\leq \sqrt{n} \max_{j,u}\left\vert\frac{1}{\sqrt{n}}\tau_j(\vecY, u, \pr(v), \vecY) - \E(h_j(Y_{u,i})\eta_{i})\right\vert.
\end{aligned}
\end{equation}

By Lemma~\ref{lem:power_testStat},
\begin{equation}
	\max_{u,j} \vert \frac{1}{\sqrt{n}}\tau_j(\vecY, u, \pr(v), \vecY) - \E(h_{j,u,i}\eta_i)\vert \lesssim \lambda_{\min,Z}^{-1}h_{\max,3} pK\log^3(n)/\sqrt{n}.
\end{equation} so $\vert T(\vecv , \pr(v); \vecY) - \sqrt{n}\tau^\star\vert < \lambda_{\min,Z}^{-1}h_{\max,3} pK\log^3(n)$ with probability $1-o(1)$. Furthermore, by Lemma~\ref{lem:power_expectTilde}, we have that $\E_{\tilde \eta}\left(\left\vert T(\tildevecv , \pr(v); \vecY\right\vert  
 \mid \mathbf{Y}\right) \lesssim \lambda_{\min,Z}^{-1}h_{\max,2}\log^4(n)$ with probability $1-o(1)$. Thus, the entire numerator is bounded above by \[\lambda_{\min,Z}^{-1}\max(h_{\max,2}, h_{\max,3})\log^4(n)pK.\] Thus, when $(\lambda_{\min,Z}^{-1}pKh_{\max}\log^4(n))/\sqrt{n} =o(\tau^\star)$ the $\hat \gamma_{\theta, v} \rightarrow_p 0$ so for any $\alpha$, the test will be rejected with probability $1-o(1)$.
\end{proof}

\clearpage

\subsection{Supporting Lemmas}

\begin{lemma}\label{lem:designConditionBound}
Suppose Assumption~\ref{asm:designCondition} holds and $pK/n \rightarrow 0$. Then with probability $1-o(1)$,
\begin{equation}\label{eq:wellCondDesign}
\min_{v}\lambda_{\min}\left(\frac{1}{n}\B_{v}^T\B_{v}\right) \geq \lambda_{\min,Z}/2
\end{equation}
and
\begin{equation}\label{eq:boundedDesign2norm}
\max_{v} \max_{i} \left\vert Z_{v,i} \left(\frac{1}{n}\B_{v}^T\B_{v}\right)^{-1} \right\vert_2 \lesssim \sqrt{pK}\log^2(n)\lambda_{\min,Z}^{-1}.
\end{equation} 
\end{lemma}
\begin{proof}
We first show Eq.~\eqref{eq:wellCondDesign} using Theorem 1.1 of \citet{oliveira2016lower} which we restate below for completeness below.

Let $X_i$ for $i = 1, \ldots, n$ be i.i.d copies with finite fourth moments and $\Sigma := \E(X_iX_i^T)$. Suppose that 
\[\forall s \in \mathbb{R}^p \; : \; \sqrt{\E\left[(s^TX)^4)\right]} \leq \omega s^T\Sigma s\]
for some $\omega > 1$. Then if $n \geq 81\omega^2(p + 2 \log(2/\delta))/\varepsilon^2$, we have
\[P\left(\forall s \in \mathbb{R}^p \; : \; s^T\hat \Sigma s \geq (1-\varepsilon) s^T \Sigma s\right) \geq 1 - \delta.\]

Since $Z_{v,i} \in \mathbb{R}^{\vert \pr(v)\vert K}$, letting $\varepsilon = 1/2$ implies that when $n \geq 324\omega^2(pK + 2 \log(2/\delta))$
\[P\left(\min_v\lambda_{\min}\left(\frac{1}{n}\B_v^T\B_v\right) \leq \lambda_{\min,Z}/2\right) \leq p\delta.\]

Letting $\delta = \exp(-n/1296\omega^2)$, we have
\[324\omega^2(pK + 2 \log(2/\exp(-n/1296\omega^2))) = 324\omega^2(pK + 2 \log(2)) + n/2. \]
Thus, 
\begin{equation}\label{eq:minEigenSample}
n \geq 628\omega^2(pK + 2 \log(2))
\end{equation}
implies $n \geq 324\omega^2(pK + 2 \log(2/\exp(-n/1296\omega^2)))$ and  
\[P\left(\min_v \lambda_{\min}\left(\frac{1}{n}\B_v^T\B_v\right) \leq \lambda_{\min,Z}/2\right) \leq p\exp(-n/1296\omega^2).\]
Thus, if $pK/n\rightarrow 0$, \eqref{eq:minEigenSample} will be satisfied so that
\eqref{eq:wellCondDesign} holds with probability $1-o(1)$. Since $\max_{v,k}\Vert Z_{v,i,k} \Vert_{\psi_1} < M$ we have $\max_{v,i,k} \vert Z_{v,i,k} \vert < \log^2(pKn) < 4\log^2(n)$ with probability $1-o(1)$. Thus, combining everything together we have with probability $1-o(1)$,
\begin{equation}
    \begin{aligned}
        \max_{v} \max_{i} \left\vert Z_{v,i} \left(\frac{1}{n}\B_{v}^T\B_{v}\right)^{-1} \right\vert_2 &\leq \sqrt{pK}\max_{v,i}  \left\vert Z_{v,i} \right\vert_\infty \left\Vert \left(\frac{1}{n}\B_{v}^T\B_{v}\right)^{-1} \right\Vert_2 \\
        &\lesssim \sqrt{pK}\log^2(n)\lambda_{\min,Z}^{-1}.
    \end{aligned}
\end{equation}

\end{proof}

\clearpage

\begin{lemma}\label{lem:misspecificationBound}
Suppose, $\max_v \E(\vert d_{v,i}\vert) = d^\star_1$ and $\max_v \Vert d_{v,i} \Vert_{\Psi_1} = M_d$ and \[\max_v \max_i \left \vert \zeta_{i}^{(v)} \right \vert_\infty \leq h_{\max,2}. \]
Then with probability $1-o(1)$, we have
\begin{equation}
\begin{aligned}
    \max_v\left \vert\frac{1}{\sqrt{n}} \Hv^T(I-\B_v(\B_v^T\B_v)^{-1}\B_v)\bias_v \right \vert_\infty &\leq h_{\max,2}(\sqrt{n}d^\star_1 + \log(p)M_d)
\end{aligned}
\end{equation}

\end{lemma}
\begin{proof}
Recall that $\Hv \in \mathbb{R}^{n \times \vert \pr(v)\vert J}$ is the matrix where each row is the test functions evaluated on $Y_{\pr(v),i}$; i.e., the $i$th row is $(h_j(Y_{u,i}) \,:\,  j \in [J], u \in \pr(v))$. Furthermore, $\zeta_{i}^{(v)}$ is the vector containing the residuals of the $i$th observation when regressing the test functions onto the approximating basis.

\begin{equation}
\begin{aligned}
    \left \vert\frac{1}{\sqrt{n}} \Hv^T(I-\B_v(\B_v^T\B_v)^{-1}\B_v)\bias_v \right \vert_\infty &= \max_{j \in [\vert\pr(v)\vert J ]} \sqrt{n}\vert  \frac{1}{n}\sum_i \zeta^{(v)}_{i,j} d_{v,i}\vert\\
    &\leq \sqrt{n} \max_{j \in [\vert\pr(v)\vert J]}  \max_{i} \vert \zeta_{i,j}^{(v)}\vert  \frac{1}{n} \vert \bias_v\vert_1 
\end{aligned}
\end{equation}
Using Corollary 1.4 of \citet{gotze2021polynomials} and letting $\delta = \log(p) M_d/\sqrt{n}$, we have
\begin{equation}
\begin{aligned}
    P\left(\max_v \frac{1}{n} \vert \bias_v\vert_1 \geq \delta + d_1^\star\right) &\leq P\left(\max_v \vert \frac{1}{n} \vert \bias_v\vert_1 - \E(\vert d_{v,i}\vert)\vert  \geq \delta \right) \\
    &\leq 2p\exp\left(-c \min\left\{\frac{n\delta^2}{M^2_d}, \frac{n\delta}{M_d} \right\}\right) \\
      &\leq 2p\exp\left(-c \min\left\{\log^2(p), \sqrt{n}\log(p) \right\}\right). 
\end{aligned}
\end{equation}
Thus, we have
\begin{equation}
\begin{aligned}
    \left \vert\frac{1}{\sqrt{n}} \Hv^T(I-\B_v(\B_v^T\B_v)^{-1}\B_v\bias_v \right \vert_\infty
    &\leq(\sqrt{n}d^\star_1 + \log(p)M_d)  h_{\max,2}.
\end{aligned}
\end{equation}
\end{proof}

\newpage

\begin{lemma}\label{lem:residInfBound}
Suppose that Assumptions~\ref{asm:subExpErrors}, \ref{asm:designCondition}, and \ref{asm:bias} hold and $pK/n\rightarrow 0$. Then, with probability $1-o(1)$
\begin{equation}
    \max_v \left\vert \hatEta\right\vert_\infty \lesssim \left(\frac{\lambda_{\min,Z}^{-1}pK \log^3(n)}{\sqrt{n}} + \log^2(n)\right). 
\end{equation}

\end{lemma}

\begin{proof}
Under the null hypothesis
\begin{equation}
\begin{aligned}
    \max_v \vert \hatEta \vert_\infty &=  \max_v \vert\vecv - \B_v \hat b_v \vert_\infty =\vert \vecepsv + \bias_v +\B_v (b_v - \hat b_v) \vert_\infty\\
    &\leq  \max_v  \vert\vecepsv\vert_\infty  +  \max_v \vert\bias_v \vert_\infty +  \max_v \vert \B_v (b_v - \hat b_v) \vert_\infty\\
\end{aligned}
\end{equation}
Since $\varepsilon_{v,i}$ and $d_{v,i}$ are both mean $0$ with Orlicz-1 norm bounded by $M$ and $M_d$ respectively, the first and second terms are bounded by $\log^2(pn)\leq 4\log^2(n)$ with probability $1 - o(1)$. Now consider the third term. For fixed $v$, using Lemma~\ref{lem:designConditionBound} we have
\begin{equation}\footnotesize
\begin{aligned}
    \vert \B_v (b_v - \hat b_v) \vert_\infty &=  \vert\B_v(\B_v^T\B_v)^{-1}\B_v^T(\vecepsv + \bias_v)  \vert_\infty \\
    & = \max_i\vert Z_{v,i}(\B_v^T\B_v)^{-1}\B_v^T(\vecepsv + \bias_v) \vert
   \leq  \max_i\left\vert Z_{v,i}\left(\frac{1}{n}\B_v^T\B_v\right)^{-1}\right\vert_2 \left\vert\frac{1}{n}\B_v^T(\vecepsv + \bias_v) \right\vert_2\\
    & \lesssim \lambda_{\min,Z}^{-1}\sqrt{ pK} \log^2(n)\left\vert\frac{1}{n}\B_v^T(\vecepsv + \bias_v) \right\vert_2.
\end{aligned} 
\end{equation}
By assumption $\varepsilon_{v,i} \independent Z_{v,i}$ so $\E(\varepsilon_{v,i}Z_{v,i,k}) = 0$. Furthermore, because $d_{v,i}$ is the bias of the least squares estimator, we also have $\E(d_{v,i}Z_{v,i,k}) = 0$. In addition, $\vert (\varepsilon_{v,i} + d_{v,i})Z_{v,i,k} \vert_{\psi_{1/2}} \leq \vert \varepsilon_{v,i} + d_{v,i} \vert_{\psi_{1}}  \vert Z_{v,i,k} \vert_{\psi_{1}} \leq 2M^2$.

Let $\overline{(\bias_v + \varepsilon_v)Z_{v,k}} = \frac{1}{n}\sum_i (d_{v,i} + \varepsilon_{v,i}) Z_{v,i,k}$
Then, by Corollary 1.4 of \citet{gotze2021polynomials}, we have for $\delta = \log(n)/\sqrt{n}$
\begin{equation}
\begin{aligned}
    P\left(\max_{v \in V, k \in [\vert\pr(v) \vert K]}\vert \overline{(\bias_v + \varepsilon_v)Z_{v,k}}  \vert \geq \delta\right) &\leq 2 \exp\left( - c \min\left(\frac{\delta^2}{4M^4 (1/n)}, \left[\frac{\delta}{2M^2(1/n)}\right]^{1/2}\right) \right) \\
    & = 2p^2K \exp\left( - c \min\left(\frac{\log^2(n) / n}{4M^4 (1/n)}, \left[\frac{\log(n)/\sqrt{n}}{2M^2(1/n)}\right]^{1/2}\right) \right) \\ 
     & \leq\exp\left(2\log(n) - c \min\left(\frac{\log^2(n) }{4M^4}, \left[\frac{\log(n)\sqrt{n}}{2M^2}\right]^{1/2}\right) \right). 
\end{aligned}
\end{equation}
Since $\left\vert\frac{1}{n}\B_v^T(\vecepsv + \bias_v) \right\vert_2 \leq \sqrt{pK}\left\vert\frac{1}{n}\B_v^T(\vecepsv + \bias_v) \right\vert_\infty$, we have with probability $1-o(1)$
\begin{equation}\footnotesize
\begin{aligned}
    \vert \B_v (b_v - \hat b_v) \vert_\infty & \lesssim   \lambda_{\min,Z}^{-1}pK \log^2(n)\left\vert\frac{1}{n}\B_v^T(\vecepsv + \bias_v) \right\vert_\infty    \lesssim \lambda_{\min,Z}^{-1} \frac{ pK \log^3(n)}{\sqrt{n}}.
\end{aligned} 
\end{equation}
Putting all the terms back together, we have with probability $1- o(1)$
\begin{equation}
\begin{aligned}
    \max_v \vert \hatEta \vert_\infty &\lesssim  \left(\frac{ \lambda_{\min,Z}^{-1}pK \log^3(n)}{\sqrt{n}} + \log^2(n)\right). 
\end{aligned}
\end{equation}

\end{proof}

\newpage

\begin{lemma}\label{lem:varEst}
Suppose Assumptions~\ref{asm:subExpErrors}, \ref{asm:designCondition}, \ref{asm:bias}, and \ref{asm:testFuncWellCondition} hold and $pK/n\rightarrow 0$. Let $\tilde \sigma^2_v$ denote the empirical variance of $\hatEta$. If $\bias_v = 0$ for all $v$, we have with probability $1- o(1)$,
  \begin{equation}
\max_v \vert \tilde \sigma^2_v - \sigma^2_v \vert \lesssim \overline{\sigma}^2 (Kp/n) + \log(n)/\sqrt{n}.
  \end{equation}
If $\bias_v \neq 0$ for some $v$, then with probability $1-o(1)$,
   \begin{equation}
\max_v \vert \tilde \sigma^2_v - \sigma^2_v \vert \lesssim (\overline{\sigma}^2 + \lambda_{\min,Z}^{-1}) (Kp\log^2(n)/n) + \log(n)/\sqrt{n} + d^\star_2.
  \end{equation} 
\end{lemma}

\begin{proof}
Let $\hat \sigma^2_v$ denote the sample variance of $\vecepsv$. Then, 
\[\max_v \vert\sigma^2_v - \tilde \sigma^2 \vert \leq \max_v\vert\sigma^2_v - \hat \sigma^2_v \vert + \max_v\vert\hat \sigma^2_v - \tilde \sigma^2 \vert. \]
The first term is simply the deviation of the sample variance (which we could calculate had we observed the errors) from the population variance. Since $\varepsilon_{v,i}$ is sub-exponential, we may apply Proposition 1.1 of \citet{gotze2021polynomials} using $\delta = \log(n)/\sqrt{n}$ so that
\begin{equation}
\begin{aligned}\label{eq:varPart1}
P\left(\max_v\left\vert\sigma^2_v - \frac{1}{n}\vert \vecepsv \vert_2^2 \right\vert \geq \delta \right) &\leq 2p\exp( -c_1 \min\left(\min\left\{\frac{n\delta^2}{M^4}, \left(\frac{n\delta}{M^2} \right)^{1/2}\right\}\right)\\
&\leq 2p\exp\left( -c_1 \min\left\{\frac{\log^2(n)}{M^4}, \left(\frac{\sqrt{n}\log(n)}{M^2} \right)^{1/2}\right\}\right).     
\end{aligned}
\end{equation}
The upper bound goes to $0$ as $p, n\rightarrow \infty$. For the second term, note that

\begin{equation}
\begin{aligned}\label{eq:varDiffBound1}
    \vert\hat \sigma^2_v - \tilde \sigma^2_v \vert &= \frac{1}{n}\left\vert \vert \vecepsv \vert_2^2 -  \vert \hatEta \vert_2^2 \right\vert = \frac{1}{n}\left\vert \vert \vecepsv \vert_2^2 - \vert \vecepsv + \B_vb_v - \B_v\hat b_v + \bias_v\vert_2^2 \right\vert\\
    &= \frac{1}{n}\left\vert \vert \vecepsv \vert_2^2 - \vert \vecepsv\vert_2^2 - \vert \B_v(b_v - \hat b_v) +\bias_v\vert_2^2 - 2\vecepsv^T(\B_v(b_v - \hat b_v) +\bias_v) \right\vert\\
    & \leq \frac{3}{n} \vert \B_v(b_v - \hat b_v) \vert_2^2 +\frac{3}{n}\vert \bias_v\vert_2^2 + \frac{2}{n} \vert\vecepsv^T(\B_v(b_v - \hat b_v) \vert +  \frac{2}{n} \vert\vecepsv^T\bias_v \vert.
\end{aligned}
\end{equation}
Because $\B_v(b_v - \hat b_v) = \B_v(\B_v^T\B_v)^{-1}\B_v^T(\vecepsv + \bias_v)$, we have
\begin{equation}
\begin{aligned}
    \vert \B_v(b_v - \hat b_v) \vert_2^2 &\leq 3\vert \B_v(\B_v^T\B_v)^{-1}\B_v^T\vecepsv \vert_2^2 + 3\vert \B_v(\B_v^T\B_v)^{-1}\B_v^T\bias_v \vert_2^2\\
    &\leq 3\vecepsv^T \B_v(\B_v^T\B_v)^{-1} \B_v^T\B_v(\B_v^T\B_v)^{-1}\B_v^T\vecepsv +3\vert \bias_v \vert_2^2 \\
    &=3\vecepsv^T \B_v(\B_v^T\B_v)^{-1}\B_v^T\vecepsv + 3\vert \bias_v \vert_2^2.
    \end{aligned}
\end{equation}
In addition, 
\begin{equation}
\begin{aligned}
    \vecepsv^T\B_v(b_v - \hat b_v) &=\vecepsv^T\B_v(\B_v^T\B_v)^{-1}\B_v^T(\vecepsv + \bias_v)\\
    &= \vecepsv^T\B_v(\B_v^T\B_v)^{-1}\B_v^T\vecepsv + \vecepsv^T\B_v(\B_v^T\B_v)^{-1}\B_v^T\bias_v.
    \end{aligned}
\end{equation}

Thus, we can upper bound Eq.~\eqref{eq:varDiffBound1} with
\begin{equation}\footnotesize
\begin{aligned}
    \max_v\vert\hat \sigma^2_v - \tilde \sigma^2_v \vert &\leq \max_v\left(\frac{11}{n}\vecepsv^T \B_v(\B_v^T\B_v)^{-1}\B_v^T\vecepsv + \frac{12}{n}\vert \bias_v \vert_2^2 + \frac{2}{n}\vert\vecepsv^T\B_v(\B_v^T\B_v)^{-1}\B_v^T\bias_v\vert + \frac{2}{n}\vert \vecepsv^T\bias_v\vert\right)\\
    & = I_1 + I_2 + I_3 + I_4.
\end{aligned}
\end{equation}
When, $\bias_v = 0$ for all $v$, then we are only left with the $I_1$ term. We now show that $I_1\leq \overline{\sigma}^2 (Kp/n) + \log(n)/\sqrt{n}$ with probability $1- o(1)$. Let $\mathbf{A}_v = \B_v(\B_v^T\B_v)^{-1}\B_v^T$. Note that $\mathbf{A}_v$ is a projection matrix of rank $K\vert \pr(v)\vert$.
Therefore, we have $\E\left( n^{-1} \vecepsv^T \mathbf{A}_v \vecepsv \right)  \leq \sigma^2_v (Kp) / n$, $\Vert \frac{1}{n}\mathbf{A}_v\Vert_F^2 \leq (Kp)/n^2$, and $\Vert \frac{1}{n}\mathbf{A}_v\Vert_2 = 1/n$. We then apply a Hansen-Wright Inequality for sub-exponential random variables from Proposition 1.1 of \citet{gotze2021polynomials}, so that for some constant $c_1$ when $\delta = \log(n)/\sqrt{n}$:
\begin{equation}\label{eq:varPart2}
\begin{aligned}
P&\left(\max_{v \in V} \frac{1}{n}\vecepsv^T \mathbf{A}_v \vecepsv \geq \overline{\sigma}^2(Kp)/n +  \delta \right) \leq P\left(\max_{v \in V} \left \vert \frac{1}{n}\vecepsv^T \mathbf{A}_v \vecepsv - \E\left( n^{-1} \vecepsv^T \mathbf{A}_v \vecepsv \right)
\right \vert \geq \delta\right) \\
&\leq 2p 
\exp\left(-\frac{c_1}{ M_1^4} \min\left\{\frac{\delta^2}{M^4 \max_v\Vert \frac{1}{n}\mathbf{A}_v\Vert_F^2}, \left(\frac{\delta}{M^2\max_v\Vert \frac{1}{n}\mathbf{A}_v\Vert_2} \right)^{1/2}\right\}\right) \\
&\leq 2p 
\exp\left(-c_1 \min\left\{\frac{n^2\delta^2}{M^4 Kp}, \left(\frac{n\delta}{M^2} \right)^{1/2}\right\}\right)\\
& = 2p\exp\left(\log(n)-c_1 \min\left\{\frac{n\log^2(n)}{M^4 Kp}, \left(\frac{\sqrt{n}\log(n)}{M^2} \right)^{1/2}\right\}\right).
\end{aligned}
\end{equation}

We now show that $I_2\leq d^\star_2 + \log(n)/\sqrt{n}$ with probability $1- o(1)$. Because $d_{v,i}$ is assumed to be sub-exponential, $d_{v,i}^2$ is sub-weibull 1/2. Thus, letting $\delta = \log(n)/\sqrt{n}$, we have
\begin{equation}
\begin{aligned}
    P\left(\max_{v \in V} \frac{1}{n}\vert d\vert_2^2 \geq d^\star_2 + \delta\right) &\leq P\left(\max_{v \in V} \vert \overline{d_{v,i}^2} - \E(d_{v,i}^2) \vert \geq \delta\right)  \\
&\leq 2p \exp\left( - c \min\left(\frac{\delta^2}{4M^4 (1/n)}, \left[\frac{\delta}{2M^2(1/n)}\right]^{1/2}\right) \right) \\
     & \leq2\exp\left(\log(n) - c \min\left(\frac{\log^2(n) }{4M^4}, \left[\frac{\log(n)\sqrt{n}}{2M^2}\right]^{1/2}\right) \right). 
\end{aligned}
\end{equation}

Similarly, $I_4\leq \log(n)/\sqrt{n}$ with probability $1- o(1)$. Because $\varepsilon_{v,i} \independent d_{v,i}$ and both are assumed to be sub-exponential, $\Vert \varepsilon_{v,i}d_{v,i} \Vert_{\Psi_{1/2}} < M^2$. Thus, letting $\delta = \log(n)/\sqrt{n}$, we have
\begin{equation}
\begin{aligned}
    P\left(\max_{v \in V} \overline{\varepsilon_vd_v} \geq \delta\right) 
     & \leq2\exp\left(\log(n) - c \min\left(\frac{\log^2(n) }{4M^4}, \left[\frac{\log(n)\sqrt{n}}{2M^2}\right]^{1/2}\right) \right). 
\end{aligned}
\end{equation}

Finally, note that
\begin{equation}
\begin{aligned}
        I_3 &=\frac{2}{n^2}\vert\vecepsv^T\B_v(\B_v^T\B_v/n)^{-1}\B_v^T\bias_v\vert \leq \Vert (\B_v^T\B_v / n)^{-1} \Vert_2 \vert\frac{1}{n}\vecepsv^T\B_v \vert_2\vert\frac{1}{n} \B_v\bias_v \vert_2\\
        &\leq  2\Vert (\B_v^T\B_v/n)^{-1} \Vert_2 pK\vert\frac{1}{n}\vecepsv^T\B_v \vert_\infty\vert\frac{1}{n} \B_v\bias_v \vert_\infty .
\end{aligned}
\end{equation} 

By definition, $\E(Z_{v,i,k} d_{v,i}) = 0$ and by assumption $\E(Z_{v,i,k} \varepsilon_{v,i}) = 0$. Since $Z_{v,i,k}$, $d_{v,i}$, and $\varepsilon_{v,i}$ are all sub-exponential, we have that $\Vert Z_{v,i,k} d_{v,i} \Vert_{\Psi_{1/2}} < M^2 $ and $\Vert Z_{v,i,k} \varepsilon_{v,i} \Vert_{\Psi_{1/2}} < M^2$. Thus, using Corollary 1.4 of \citet{gotze2021polynomials} and letting $\delta = \log(n)/\sqrt{n}$, we have
\begin{equation}
\begin{aligned}
    P\left(\max_{v \in V}\max_{k \in [|\pr(v)|J]} \vert\frac{1}{n}\sum_i Z{v,i,k} d_{v,i} \vert \geq \delta\right) &\leq 2p^2J \exp\left( - c \min\left(\frac{\delta^2}{M^4 (1/n)}, \left[\frac{\delta}{M^2(1/n)}\right]^{1/2}\right) \right) \\
    &\leq 2p^2J \exp\left( - c \min\left(\frac{\log^2(n)}{M^4 }, \left[\frac{\sqrt{n}\log(n)}{M^2}\right]^{1/2}\right) \right),
\end{aligned}
\end{equation}
where the upper bound on the probability is $o(1)$ when $p, n \rightarrow \infty$. Similarly, \[\max_{v \in V}\max_{k \in [|\pr(v)|J]} \vert\frac{1}{n}\sum_i\varepsilon_{v,i}Z^{(v)}_{i,k} \vert \leq \log(n)/\sqrt{n}\] 
with probability $1-o(1)$. By Lemma~\ref{lem:designConditionBound} we have with probability $1-o(1)$
\begin{equation}
    \begin{aligned}
        \left\Vert \left(\frac{1}{n}\B_v^T \B_v\right)^{-1} \right\Vert_2 & \leq \lambda_{\min,Z}^{-1}/2
    \end{aligned}   
\end{equation}
so that
\[I_3 \lesssim \lambda_{\min,Z}^{-1} pK \log^2(n)/n.\]

\end{proof}

\clearpage

\begin{lemma}\label{lem:power_expectTilde}
Suppose Assumptions~\ref{asm:designCondition}, \ref{asm:testFuncWellCondition} and \ref{asm:altResid} hold, and $(pK)^2/n < 1$. Then with probability $1-o(1)$, we have
	\begin{equation}\label{eq:expTildeAlt}
	\E(\tilde T^{(v)} \mid \vecY) \lesssim \lambda_{\min,Z}^{-1}h_{\max,2}\log^4(n)
	\end{equation}
\end{lemma}
\begin{proof}
We first show that with probability $1-o(1)$ (with respect to $\vecY$), $\tau_j(\tildevecv; u, \pr(v); \vecY)$ is sub-Gaussian; then~\eqref{eq:expTildeAlt} then immediately follows from well known results on the expectation of the maximum of sub-Gaussian random variables. Since $v$ and $\pr(v)$ are fixed, we will drop the sub-scripts and use $\mathbf{\eta}$ and $\eta_{i}$ for brevity. 

First, note that
\begin{equation}
	\begin{aligned}
\vert \mathbf{\hat \eta} \vert_\infty &= \vert \mathbf{\hat \eta} + \mathbf{\eta} -\mathbf{\eta} \vert_\infty\\
&\leq \vert \mathbf{\eta}  \vert_\infty + \vert \mathbf{\hat \eta} - \mathbf{\eta}   \vert_\infty.
	\end{aligned}
\end{equation}
By assumption, $\eta_i$ is sub-exponential so $\vert \mathbf{\eta} \vert_\infty \leq \log^2(n)$ with probability $1-o(1)$. In addition, by Lemma~\ref{lem:designConditionBound} we have

\begin{equation}
	\begin{aligned}
\left\vert \B(\hat b -b)\right\vert_\infty
 &= \left\vert \B[(\B^T\B)^{-1}\B^T (\B b + \bm{\eta}) - b]\right\vert_\infty \\
 &= \max_i \vert \B_i(\frac{1}{n}\B^T\B)^{-1} \frac{1}{n}\B^T\bm{\eta}) \vert \\
 & \leq  \max_i \vert \B_i(\frac{1}{n}\B^T\B)^{-1}\vert_2 \vert\frac{1}{n}\B^T\bm{\eta} \vert_2 \\
 & \lesssim \lambda_{\min,Z}^{-1}\sqrt{pK}\log^2(n) \vert\frac{1}{n}\B^T\bm{\eta} \vert_2 \\
	\end{aligned}
\end{equation}
By definition, $\E(Z_{v,i,k}\eta_i) = 0$ and by assumption $\Vert Z_{v,i,k}\eta_i \Vert_{\Psi_{1/2}} < M^2$. Thus, with probability $1-o(1)$
\begin{equation}
\max_{k} \vert \frac{1}{n}\sum_i Z_{v,i,k}\eta_i\vert < \log(pK)/\sqrt{n}
\end{equation}
so $\vert\frac{1}{n}\B^T\bm{\eta} \vert_2 < \sqrt{pK} \log(pK)/\sqrt{n}$.
Putting everything together, we have
\begin{equation}\label{eq:zB_bound}
    \vert \B(\hat b -b)\vert_\infty \lesssim \lambda_{\min,Z}^{-1}pK\log^3(n)/\sqrt{n} < \log^3(n),
\end{equation}
and
\[\vert \hat \eta \vert_\infty \lesssim \lambda_{\min,Z}^{-1}\log^3(n). \]
Note that conditional on $\vecY$, $\zeta_i$ is fixed and the only randomness in $\tau_j(\tildevecv, u, \pr(v); \vecY)$ comes in the resampling of $\tilde \varepsilon_i$ from the empirical distribution of $\mathbf{\hat \eta}$. Thus,
\begin{equation}
	\tau_j(\tildevecv, u, \pr(v); Y) = \frac{1}{\sqrt{n}}\sum_i \zeta_{i,j}^{(v)}\tilde \varepsilon_i
\end{equation}
is the sum of independent random variables with magnitude bounded by $\lambda_{\min,Z}^{-1}h_{\max,2}\log^3(n)$. Thus,  $\tau_j(\vecv, u, \pr(v); Y)$ is sub-Gaussian with variance proxy at most $\lambda_{\min,Z}^{-2}h_{\max,2}^2 \log^6(n)$. By well-known results on the expectation of the maximum of sub-Gaussian random variables (see, e.g., Section 2.5 of \citet{boucheron2013Concentration}) we have
\begin{equation}
\E(\tilde T_\infty \mid \vecY) \lesssim \lambda_{\min,Z}^{-1}h_{\max,2}\log^3(n)\log^{1/2}(pJ) < \lambda_{\min,Z}^{-1} h_{\max,2}\log^4(n).
\end{equation}
\end{proof}

\begin{lemma}\label{lem:power_testStat}
	Suppose Assumptions~\ref{asm:designCondition} and \ref{asm:altResid} hold and $pK/n\rightarrow 0$. Then with probability $1-o(1)$,
	\begin{equation}\label{eq:power_TminusTau}
	\max_{u,j} \vert \frac{1}{\sqrt{n}}\tau_j(\vecY, u, \pr(v), j) - \E(h_{j,u,i}\eta_i)\vert \lesssim \lambda_{\min,Z}^{-1}h_{\max,3} pK\log^3(n)/\sqrt{n}.
	\end{equation}
\end{lemma}
\begin{proof}
For notational convenience, let $h_{j,u,i} = h_j(Y_{u,i})$. Furthermore, since $v$ and $\theta$ are fixed, we will let $\eta_i = \eta_{v \setminus \pr(v),i}$ and $\hat \eta_i = \hat \eta_{v \setminus \pr(v),i}$.

Note that $\frac{1}{\sqrt{n}}\tau_j(\vecY, u, \pr(v), j) = \frac{1}{n}\sum_i h_{j,u,i}\hat \eta_i$ which we denote as $\overline{h_{j,u}\hat \eta_i}$. Similarly, let $\overline{h_{j,u}\eta} =  \frac{1}{n}\sum_i h_{j,u,i}\eta_i$.  
	We note that
\begin{equation}
	\begin{aligned}
		\left \vert \overline{h_{j,u}\hat \eta} - \E(h_{j,u,i}\eta_i) \right \vert \leq &\left \vert  \overline{h_{j,u}\eta} - \E(h_{j,u,i}\eta_i)  \right \vert +  \left \vert  \frac{1}{n}\sum_i (\hat \eta_i - \eta_i)h_{j,u,i}  \right \vert\\
	\end{aligned}
\end{equation}
The first term is simply a sample average of i.i.d terms where $\Vert h_{j,u,i}\eta_i\Vert_{\Psi_{1/2}} < M^2$ so \[\max_{j,u}\left \vert  \overline{h_{j,u}\eta} - \E(h_{j,u,i}\eta_i)  \right \vert \lesssim \log(pJ)/\sqrt{n}\]
 with probability $1-o(1)$. 	
 As shown in~\eqref{eq:zB_bound}, $\left\vert \B(b - \hat b) \right\vert_\infty <\lambda_{\min,Z}^{-1}pK\log^3(n)/\sqrt{n} $ with probability $1- o(1)$. Thus, to bound the second term,
\begin{equation}
	\begin{aligned}
	\left \vert  \frac{1}{n}\sum_i (\hat \eta_i - \eta_i)h_{j,u,i}  \right \vert &= 	\left \vert  \frac{1}{n}\sum_i (b - \hat b)^T Z_{v,i} h_{j,u,i}  \right \vert\\
	&\leq \left\vert \B(b - \hat b) \right\vert_\infty \frac{1}{n}\left\vert h_{j,u,i}\right\vert_1\\
	&\leq \lambda_{\min,Z}^{-1}h_{\max,3} pK\log^3(n)/\sqrt{n}. 
	\end{aligned}
\end{equation}
Since this is the dominating term, we have Eq.~\eqref{eq:power_TminusTau}.
\end{proof}

\clearpage

\section{Confidence intervals for direct effects}
\label{sec:ci-direct}
Alg.~\ref{alg:DirectEffect} describes a procedure to calculate confidence intervals for direct effects which also incorporate model uncertainty. When estimating the direct effect of $v$ onto $u$, a valid adjustment set is $\an(u)$. Thus, this procedure is exactly the same as Alg.~\ref{alg:TotalEffect} except for the definition of $\mathcal{S}$.

\begin{algorithm}[htb]
		\caption{\label{alg:DirectEffect}Get $1-\alpha$ CI for the direct effect of $v$ onto $u$}
		\begin{algorithmic}[1]
            \For{$S \in \mathcal{S} = \{S \, : \, S  = \pr(u) \text{ for some } \theta \in \hat \Theta(\mathbf{Y},\alpha / 2) \text{ such that } \theta(v) < \theta(u) \}$}
                        \State{Calculate $C(S)$, the $1-\alpha/2$ confidence interval for the coefficient of $Y_v$ when regressing $Y_u$ onto $Y_{S \cup \{v\}}$}
                \EndFor
			\If{$\theta(u) > \theta(v)$ for any $\theta \in \hat \Theta(\mathbf{Y},\alpha / 2)$}
			    \State{\textbf{Return}: $\hat C_\alpha = \{0 \} \cup \{\bigcup_{S \in \mathcal{S}} C(S)\}$}
			 \Else
			    \State{\textbf{Return}: $\hat C_\alpha = \bigcup_{S \in \mathcal{S}} C(S)$}
			 \EndIf
		\end{algorithmic}
\end{algorithm}

\FloatBarrier
\newpage

\section{Derivation of Eq.~\eqref{eq:oracleStat} and Eq.~\eqref{eq:altHyp}}
Recall that $\bm{Y_{U.1}}$ denotes the $n \times(|U| + 1)$ matrix containing $\bm{Y_U}$ as well as a column of $1$s for an intercept. Under the null hypothesis, we have that $\bm{Y_v} = \bm{Y_{U.1}}\beta_{v,U.1} + \bm{\varepsilon_v}$ where $\beta_{v,u} = 0$ for $u \in U \setminus \pa(v)$. 
Since $\hat \beta_{v,U}$ is the least squares estimator, we have that:
\begin{equation}
\begin{aligned}
\hat \beta_{v,U} &= ( \mathbf{Y}_{U.1}^T \mathbf{Y}_{U.1})^{-1} \mathbf{Y}_{U.1} \bm{Y_v}\\
&= ( \mathbf{Y}_{U.1}^T \mathbf{Y}_{U.1})^{-1} \mathbf{Y}_{U.1}^T (\bm{Y_{U.1}}\beta_{v,U.1} + \bm{\varepsilon_v})\\
&= \beta_{v,U.1} + ( \mathbf{Y}_{U.1}^T \mathbf{Y}_{U.1})^{-1} \mathbf{Y}_{U.1}^T \bm{\varepsilon_v}\\
\end{aligned}
\end{equation}
The residuals are then:
\begin{equation}
\begin{aligned}
    \bm{\hat  \eta}_{v \setminus U} &= \bm{Y_v} - \bm{Y_U} \hat \beta_{v,U} \\
    &= \bm{\varepsilon_v} + \bm{Y_U}(\beta_{v,U} - \hat \beta_{v,U})\\
    &= \bm{\varepsilon_v} - \bm{Y_U}(\mathbf{Y}_{U.1}^T \mathbf{Y}_{U.1})^{-1} \mathbf{Y}_{U.1}^T \bm{\varepsilon_v}\\
    &= [I - \bm{Y_U}(\mathbf{Y}_{U.1}^T \mathbf{Y}_{U.1})^{-1} \mathbf{Y}_{U.1}^T] \bm{\varepsilon_v}.
\end{aligned}
\end{equation}
Finally, calculating $\tau_j$, we have
\begin{equation}\small
\begin{aligned}
\tau_j(\vecv, u, U; \vecY) = \frac{1}{\sqrt{n}}h_j(\vecu)^T \bm{\hat  \eta}_{v \setminus U}  
&= \frac{1}{\sqrt{n}}h_j(\vecu)^T[I - \mathbf{Y}_{U.1}(\mathbf{Y}_{U.1}^T\mathbf{Y}_{U.1})^{-1}\mathbf{Y}_{U.1}^T]\vecepsv.
\end{aligned}
\end{equation}

Recall that $b_{v,U} = \min_b \E(Y_v - Y_U^Tb)$ is the population regression coefficients. Under the alternative when the hypothesis in~\eqref{eq:nullHypSingle} does not hold, the population regression coefficients are generally not equal to the causal coefficients; i.e., $b_{v,U} \neq \beta_{v,U}$. Letting $U' = \pa(v)  \cup  U$, with a slight abuse of notation, we define $ b_{v,  U'} = (b_{v,U})_u$ if $u \in U$ and $0$ otherwise, and similarly let $\beta_{v,  U'} = (\beta_{v,U})_u$ if $u \in \pa(v)$ and $0$ otherwise. Then, 
\[\bm{\eta}_{v \setminus U} = \bm{Y_v} - \bm{Y_{U'.1}}b_{v,U'}  =\bm{\varepsilon_v} + \bm{Y_{U'.1}}(\beta_{v,U'} - b_{v, U'}) \]
and similarly $\bm{\hat \eta}_{v \setminus U} = \bm{\varepsilon_v} + \bm{Y_{U'.1}}(\beta_{v,U'} - \hat b_{v, U'}) $ so that 
\[\bm{\hat \eta}_{v \setminus U} - \bm{\eta}_{v \setminus U} = \bm{Y_{U'}}(b_{v,U'} - \hat b_{v, U'}) .\]
Then, we have: 
\begin{equation}\small
\begin{aligned}
\tau_j(\vecv, u, U; \vecY) &= \frac{1}{\sqrt{n}}h_j(\vecu)^T \bm{\hat  \eta}_{v \setminus U}\\
&=
\frac{1}{\sqrt{n}} h_j(\vecY_u)^T\left(\bm{\hat \eta}_{v \setminus U} - \bm{\eta}_{v \setminus U} + \bm{\eta}_{v \setminus U}\right) + \sqrt{n}\E( h_j(Y_u)\eta_{v \setminus U}) - \sqrt{n}\E( h_j(Y_u)\eta_{v \setminus U})\\
&=
\frac{1}{\sqrt{n}} h_j(\vecY_u)^T\bm{\eta}_{v \setminus U} + 
\frac{1}{\sqrt{n}} h_j(\vecY_u)^T\left(\bm{\hat \eta}_{v \setminus U} - \bm{\eta}_{v \setminus U}\right) + \sqrt{n}\E( h_j(Y_u)\eta_{v \setminus U}) - \sqrt{n}\E( h_j(Y_u)\eta_{v \setminus U})\\
&=\left(\frac{1}{\sqrt{n}}h_j(\vecY_u)^T\bm{\eta}_{v \setminus U} - \sqrt{n}\E( h_j(Y_u)\eta_{v \setminus U})\right) + \frac{1}{\sqrt{n}}h_j(\vecY_u)^T\bm{Y}_{ U'.1}[b_{v,U'} - \hat{ b}_{v,U'}]\\
&\quad + \sqrt{n} \E( h_j(Y_u)\eta_{v \setminus U}).
\end{aligned}
\end{equation}

\newpage

\end{document}